\documentclass[letterpaper,11pt]{article}

\usepackage[margin = 1.1in]{geometry}


\usepackage{mathptmx}       
\usepackage{helvet}         
\usepackage{courier}        
\usepackage{type1cm}        
%
\usepackage{makeidx}         
\usepackage{graphicx}        
\usepackage{multicol}        
\usepackage[bottom]{footmisc}
\usepackage{algorithmic,algorithm}
\usepackage{amssymb,amsmath,amsthm}

\usepackage{array}

\makeatletter
\newcommand{\thickhline}{%
    \noalign {\ifnum 0=`}\fi \hrule height 1pt
    \futurelet \reserved@a \@xhline
}
\newcolumntype{"}{@{\hskip\tabcolsep\vrule width 1pt\hskip\tabcolsep}}
\makeatother
\setlength{\topsep}{3pt}
\newtheorem{theorem}{Theorem}[section]

\newtheorem{corollary}[theorem]{Corollary}
\newtheorem{lemma}[theorem]{Lemma}

\theoremstyle{definition}
\newtheorem{definition}[theorem]{Definition}

\theoremstyle{remark}
\newtheorem{remark}[theorem]{Remark}


\newcommand{\opt}{\mathrm{OPT}}

\usepackage{url}

\usepackage[font=footnotesize]{caption}
\usepackage[font=footnotesize]{subcaption}

\urldef{\smith}\url{stephen.smith@uwaterloo.ca}
\urldef{\fata}\url{efata@uwaterloo.ca}
\urldef{\alamdari}\url{s26hosse@uwaterloo.ca}

\usepackage{epsfig,color}


\graphicspath{{./fig/}}

\makeindex             


\begin{document}

\title{Min-Max Weighted Latency Walks: Approximation Algorithms for Persistent
Monitoring in Vertex-Weighted Graphs}

\title{Persistent Monitoring in Discrete Environments:  Minimizing \\the Maximum Weighted Latency Between Observations\thanks{A preliminary version of this paper appeared in the Proceedings of the 2012 Workshop on the Algorithmic Foundations of Robotics~\cite{SA-EF-SLS:12}.}}
\author{Soroush Alamdari\thanks{Cheriton School of Computer Science, University of Waterloo, Waterloo ON N2L 3G1, Canada. email: \alamdari.}  \qquad Elaheh Fata$^{\ddagger}$ \qquad Stephen L. Smith\thanks{Department of Electrical and Computer Engineering, University of Waterloo, Waterloo ON N2L 3G1, Canada. email: \fata; \smith}
}
\date{}
%
%
\maketitle

\begin{abstract}
In this paper, we consider the problem of planning a path for a robot to monitor a known set of features of interest in an environment. We represent the environment as a graph with vertex weights and edge lengths.  The vertices represent regions of interest, edge lengths give travel times between regions, and the vertex weights give the importance of each region.  As the robot repeatedly performs a closed walk on the graph, we define the weighted latency of a vertex to be the maximum time between visits to that vertex, weighted by the importance (vertex weight) of that vertex.
Our goal is to find a closed walk that minimizes the maximum weighted latency of any vertex. We show that there does not exist a polynomial time algorithm for the problem.  We then provide two approximation algorithms; an $O(\log n)$-approximation algorithm and an $O(\log \rho_G)$-approximation algorithm, where $\rho_G$ is the ratio between the maximum and minimum vertex weights.  We provide simulation results which demonstrate that our algorithms can be applied to problems consisting of thousands of vertices, and a case study for patrolling a city for crime.
\end{abstract}

\section{Introduction}

An emerging application area for robotics is in performing long-term monitoring tasks.  Some example monitoring tasks include 1) environmental monitoring such as ocean sampling~\cite{RNS-MS-SLS-DR-GSS:10f-JFR}, where autonomous underwater vehicles sense the ocean to detect the onset of algae blooms; 2) surveillance~\cite{NM-ES-KM:11}, where robots repeatedly visit vantage points in order to detect events or threats, and; 3) infrastructure inspection such as power-line or manhole cover inspection~\cite{TT-CR-PJ:11-arxiv}, where spatially distributed infrastructure must be repeatedly inspected for the presence of failures.  For such tasks, a key problem is the high-level path planning problem of determining robot paths that visit different parts of the environment so as to efficiently perform the monitoring task.  Since some parts of the environment may be more important than others (e.g., in ocean sampling, some regions are more likely to experience an algae bloom than others), the planned path should visit regions with a frequency proportional to their importance.

In this paper, we cast such long-term monitoring tasks as an optimization problem on a graph with vertex weights and edge lengths: the \emph{min-max latency walk problem}.  The vertices represent regions (or features) of interest that must be repeatedly observed by a robot.  The edge lengths give travel times between regions, and the vertex weights give the importance of each region.  A vertex is observed by the robot once it is reached.  Given a robot \emph{walk}\footnote{We refer to a robot path as a walk to be consistent with the terminology in graph theory~\cite{JAB-USRM:08}, where a path is a sequence of unique vertices, while a walk may repeat vertices.} on the graph, the \emph{weighted latency} of a vertex is the maximum time between visits to that vertex, weighted by the importance (vertex weight) of that vertex.  We then seek to find a walk that minimizes the maximum weighted latency over all vertices. In an infrastructure task, this would be akin to minimizing the expected number of failures that occur in any region prior to a robot visit.


\emph{Prior work:}  The problem of continuously (or persistently) monitoring an environment using mobile robots has been studied in the literature under several names, including continuous sweep coverage; patrolling; persistent surveillance; and persistent monitoring.  The basis of all of these problems is sweep coverage~\cite{HC:01}, where a robot must move through the environment so as to cover the entire region with its sensor.  Variants of sweep coverage include on-line coverage~\cite{YG-ER:03}, where the robot has no \emph{a priori} information about the environment, and dynamic coverage~\cite{IIH-DMS:07}, where each point in the environment requires a pre-specified ``amount'' of coverage.

One approach to persistent monitoring has been to focus on randomized approaches in discrete environments.  These works frequently use the name of continuous sweep coverage. In~\cite{AT-MJ-DEJ-RMM:05}, a continuous coverage problem is considered where a sensor must continually survey regions of interest by moving according to a Markov chain.  In~\cite{GC-AS:11} a similar approach to continuous coverage is taken and a Markov chain is used to achieve a desired visit-frequency distribution over a set of features.  In~\cite{EA-EK-NCM:12-arxiv}, the authors look at robots modelled by controlled Markov chains, and seek to persistently monitor regions while avoiding forbidden regions.

The other main approach to persistent monitoring is to cast the problem as one in combinatorial optimization. This research typically falls under the name of patrolling or persistent surveillance.  In the most basic problem, a robot seeks to minimize the time between visits to each point in space.  For this problem, both centralized approaches~\cite{YC:04,YE-NA-GAK:07,NN-IK:08}, and distributed algorithms for multiple robots~\cite{fp-af-fb:09v} have been proposed. Recently, there has been work on minimizing the weighted time between visits to each region of interest, where the weight of a region captures its priority relative to other regions~\cite{SLS-MS-DR:10a, fp-jwd-fb:11h}.  These works focus on controlling robots along predefined paths.  In~\cite{SLS-MS-DR:10a}, the authors consider velocity controllers for persistent monitoring along fixed tours, while in~\cite{fp-jwd-fb:11h} the authors focus on coordination issues for multiple robots on fixed tours. An optimal control formulation for persistent monitoring in one-dimensional spaces is given in~\cite{CGC-XCD-XL:11}.

In our prior work~\cite{SLS-DR:09a}, we considered a specific case of the min-max latency walk problem on Euclidean graphs, where the graphs are constructed by distributing vertices in a Euclidean space according to a known probability distribution.  Under these assumptions, constant factor approximation algorithms are developed for the limiting case when the number of vertices becomes very large.

The combinatorial approaches taken in persistent monitoring problems often draw from solutions to vehicle routing~\cite{NC-JEB:84, GL:09} and dynamic vehicle routing (DVR) problems~\cite{FB-EF-MP-KS-SLS:09Proc}.  
The authors of~\cite{ES-NM:11} make the connection between multi-robot persistent surveillance and the vehicle routing problem with time windows~\cite{GL:09}.  In~\cite{TT-CR-PJ:11-arxiv}, the authors consider a preventative maintenance problem in which the input is a vertex and edge-weighted graph, as in the min-max latency walk problem, but the output is a path which visits each vertex exactly once.  More important vertices (i.e., those that are more likely to fail) should be visited earlier in the path.  The authors find a path by solving a mixed-integer program.  The min-max latency walk problem can be thought of as a generalization of preventative maintenance, where the maintenance and inspection should continually be performed.

Other work in persistent monitoring and surveillance includes~\cite{BB-JPH-JV:08, BB-JR-JPH-MAV-JV:10}, where a persistent task is defined as one whose completion takes much longer than the life of a robot. The authors focus on issues of battery management and recharging.  Such issues have also been recently considered in the context of persistent monitoring in~\cite{NM-SLS-SW:13}.

\emph{Contributions:}  There are four main contributions of this paper.  First, we introduce the general min-max latency walk problem and show that it is well-posed and that it is APX-hard.  Second, we provide results on the existence of optimal algorithms and approximation algorithms for the problem.  We show that in general, the optimal walk can be very long---its size can be exponential in the size of the input graph, and thus there cannot exist a polynomial time algorithm for the problem. We then show that there always exists a constant factor approximation solution that consists of a walk of size in $O(n^2)$, where $n$ is the number of vertices in the input graph. Third, we provide two approximation algorithms for the problem. Defining $\rho_G$ to be the ratio between the maximum and minimum vertex weights in the input graph $G$, we give an $O(\log \rho_G)$ approximation algorithm.  Thus, when $\rho_G$ is independent of $n$, we have a constant factor approximation. We also provide an $O(\log n)$ approximation which is independent of the value of $\rho_G$.  The algorithms rely on relaxing the vertex weights to be powers of $2$, and then planning paths through ``batches'' of vertices with the same relaxed weights. Fourth, and finally, we show in simulation that we can solve large problems consisting of thousands of vertices, and we demonstrate our algorithm on a case study of patrolling the city of San Francisco, CA for crime.

A preliminary version of this paper appeared as~\cite{SA-EF-SLS:12}.  Compared to the conference version, this version presents detailed proofs of all statements, new results on the existence of optimal finite walks, additional remarks and illustrative examples, and a new case study on patrolling in the simulations section.


\emph{Organization:} This paper is organized as follows.  In Section~\ref{sec:background}, we give the essential background on graphs and formalize the min-max latency walk problem.  In Section~\ref{sec:relaxation}, we present a relaxation of graph weights which allows for the design of approximation algorithms.  In Section~\ref{sec:existence_results}, we present results on the existence of constant factor approximations and some negative results on the required size of the walk.  In Section~\ref{sec:approx_algs}, we present two approximation algorithms for the problem.  In Section~\ref{sec:simulations}, we present large scale simulation data for standard TSP test-cases and perform a case study in patrolling for crime. Conclusions and the future directions are presented in Section~\ref{sec:conclusions}.

\section{Background and Problem Statement}
\label{sec:background}

In this section, we review graph terminology and define the min-max latency walk problem.

\subsection{Background on Graphs}
\label{sec:graph_background}

\paragraph{Vertex-weighted and edge-weighted graphs:}  The vertex set and edge set of a graph $G$ are denoted by $V(G)$ and $E(G)$, respectively, such that $E(G)\subseteq V(G)\times V(G)$. The number of vertices in a graph $G$, i.e., $|V(G)|$, is called the \emph{size of graph} $G$ and is denoted by $n$.
An edge in $E(G)$ is referred to as $(v_i,v_j)$ or $v_iv_j$. We consider only \emph{undirected} graphs, meaning $(v_i,v_j)\in E(G)$ if and only if $(v_j,v_i)\in E(G)$. An edge-weighted graph $G$ associates a weight $l(e) > 0$ to each edge $e\in E(G)$. We will refer to the weight of an edge as its \emph{length}. A vertex-weighted graph $G$ associates a \emph{weight} $\phi(v) \in [0,1]$ to each vertex $v\in V(G)$. Given a graph $G$ and a set $V'\subseteq V(G)$, the graph $G[V']$ is the graph obtained from $G$ by removing the vertices of $G$ that are not in $V'$ and all edges incident to a vertex in $V(G)\setminus V'$. Throughout this paper, all referenced graphs are both vertex-weighted and edge-weighted and therefore we omit the explicit reference. Also, without loss of generality, we assume that there is at least one vertex in $V(G)$ with weight $1$, as in our applications weights can be scaled so that this is true. We define $\rho_G$ to be the ratio between the maximum and minimum vertex weights, i.e.,
\[
\rho_G:=\max_{v_i,v_j \in V(G)}\frac{\phi(v_i)}{\phi(v_j)}.
\]

\paragraph{Walks in graphs:} A walk of \emph{size} $k$ in a graph $G$ is a sequence of vertices $(v_1,v_2,\ldots,v_{k})$ such that for any $1\leq i\le k$ we have $v_i\in V(G)$ (with the possibility that $v_i=v_j$ for some $1\le i,j\le k$) and there exists an edge $v_jv_{j+1} \in E(G)$ for each $1\le j<k$.
A walk is \emph{closed} if its beginning and end vertices are the same. {The \emph{length} of a walk $W=(v_1,\ldots,v_k)$, denoted by $l(W)$, is defined as the sum of the length of edges of graph $G$ that appear in that walk, i.e.,
\[
l(W)=\sum_{i=1}^{k-1}{l(v_iv_{i+1})}.
\]
Given a walk $W=(v_1,\ldots,v_k)$, and integers $i\leq j \leq k$, the \emph{sub-walk} $W(i,j)$ is defined as the subsequence of $W$ given by $W(i,j)=(v_i,v_{i+1},\ldots,v_j)$.  Given the walks $W_1,W_2,\ldots,W_l$, the walk $W=[W_1,W_2,\ldots,W_l]$ is the result of \emph{concatenation} of $W_1$ through $W_l$, while preserving order.

\paragraph{The traveling salesman problem:} A \emph{tour} of a graph $G$ is a closed walk $T=(v_1,v_2,\ldots,v_{n+1})$ that visits all $n$ vertices in $G$ such that $v_1=v_{n+1}$.  Thus, a tour visits each vertex exactly once and then returns to its start vertex.  The \emph{Traveling Salesman Problem} (TSP) is to find a tour in $G$ of minimum length.  We refer to a solution of the TSP as a \emph{TSP tour} and we denote it by $\mathrm{TSP}(G)$.  We denote the first $n$ vertices on a TSP tour of $G$ as $\text{TSP-Path}(G)$.  Thus, $\text{TSP-Path}(G)$ is a walk that visits each vertex exactly once.  The length of $\text{TSP-Path}(G)$ is upper bounded by the length of $\text{TSP}(G)$.

\paragraph{Infinite walks in graphs:}  An \emph{infinite walk} is a sequence of vertices, $(v_1,v_2,\ldots)$, such that there exists an edge $v_iv_{i+1} \in E(G)$ for each $i\in \mathbb{N}$. We say that a walk $W$ \emph{expands} to an infinite walk $\Delta(W)$ if $\Delta(W)$ is constructed by an infinite number of copies of $W$ concatenated together, i.e., $\Delta(W)=[W,W,\ldots]$. It can be seen that for any walk $W$, there exists a unique expansion to an infinite walk. The \emph{kernel} of an infinite walk $W$, denoted by $\delta(W)$, is the shortest walk such that $W$ is the \emph{expansion} of $\delta(W)$. It is easy to observe that there are infinite walks for which a finite size kernel does not exist. For such an infinite walk $W$, we define $\delta(W)$ to be $W$ itself.

\subsection{The Min-Max Latency Walk Problem}

Let $G=(V,E)$ be a graph with edge lengths $l$ and vertex weights $\phi$.  In a robotic monitoring application, the graph $G$ can be obtained as a discrete abstraction of the environment, with vertices corresponding to regions of the environment and edge lengths corresponding to travel distances (or times) between regions.  The vertex weights on the graph give the relative importance of the regions for the monitoring task.   Given an infinite walk $W$ for the robot in $G$, we define the \emph{latency} of vertex $v$ on walk $W$, denoted by $L(W,v)$, as the maximum length of the sub-walk between any two consecutive visits to $v$ on $W$. The latency of a vertex $v$ corresponds to the maximum time between observations of the region represented by $v$.

Then, we can define the weighted latency, or \emph{cost of a vertex} $v\in V(G)$ on the walk $W$ to be
\[
C(W,v) := \phi(v)L(W,v).
\]
The \emph{cost of an infinite walk} $W$, is then
\[
C(W) := \max_{v\in V(G)} C(W,v).
\]
Therefore, the cost of a robot walk on a graph is the maximum weighted latency over all vertices in the graph.  This corresponds to the maximum importance-weighted time between observations of any region.  The min-max weighted latency walk problem can be stated as follows.

\vskip1em
\noindent \textbf{The min-max weighted latency walk problem:}  Find an infinite walk $W$ that minimizes the cost $C(W)$.\\

\noindent For brevity, we will refer to this problem as the \emph{min-max latency walk problem} in the rest of the paper.  

\subsection{Well-Posedness of the Problem}
\label{sec:well-posedness}

Finding an infinite walk is computationally infeasible. Instead, we will try to find the kernel of the minimum cost infinite walk. The first question, however, is whether there always exists a minimum cost walk.
\begin{lemma}
\label{lem:existsOPT}
For any graph $G$, there exists a walk of minimum cost.
\end{lemma}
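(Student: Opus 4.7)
The plan is a compactness argument in the space of infinite walks. First, I would show that the infimum $c^{\star} := \inf_W C(W)$ is finite by taking any closed walk $W_0$ that visits every vertex (e.g., the tour obtained by doubling a spanning tree of $G$): the infinite expansion $\Delta(W_0)$ has inter-visit walk length at most $l(W_0)$ at each vertex, so $C(\Delta(W_0)) \leq l(W_0)\cdot\max_v \phi(v) = l(W_0) < \infty$.

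Next, take a minimizing sequence of infinite walks $W_1, W_2, \ldots$ with $C(W_k) \to c^{\star}$, discarding a finite prefix so that $C(W_k) \leq c^{\star}+1$ for all $k$. Viewed as sequences in the finite vertex set $V(G)$, these walks live in the compact metrizable product space $V(G)^{\mathbb{N}}$; by a standard diagonal extraction, I would pass to a subsequence such that for every index $i$ the entry $W_k(i)$ eventually stabilizes at some vertex $W^{\star}(i)$. Because edge adjacency is a condition on consecutive indices and each $W_k$ respects it, the limit $W^{\star}$ is itself a valid infinite walk.

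The main step is to verify $C(W^{\star}) \leq c^{\star}$. Fix any vertex $v$ with $\phi(v) > 0$ and any consecutive visit pair $(i,j)$ of $v$ in $W^{\star}$, and choose $k$ large enough that $W_k$ agrees with $W^{\star}$ on positions $1,\ldots,j$; then $(i,j)$ is also a consecutive visit pair of $v$ in $W_k$, so $l(W^{\star}(i,j)) = l(W_k(i,j)) \leq L(W_k,v) \leq C(W_k)/\phi(v)$. Letting $k \to \infty$ yields $l(W^{\star}(i,j)) \leq c^{\star}/\phi(v)$, and taking the supremum over such pairs gives $L(W^{\star},v) \leq c^{\star}/\phi(v)$, hence $C(W^{\star},v) \leq c^{\star}$. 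The same inter-visit bound on $W_k$, combined with a lower bound $l_{\min}$ on edge length, forces $W_k$ to visit $v$ in every window of length at most $\lceil (c^{\star}+1)/(\phi(v) l_{\min})\rceil$, a property inherited by $W^{\star}$ via agreement on finite windows, so $v$ is visited infinitely often in $W^{\star}$ and its latency is well-defined. Vertices with $\phi(v)=0$ contribute nothing to the cost, so taking the maximum over $v$ gives $C(W^{\star}) \leq c^{\star}$; the reverse inequality is by definition of infimum.

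The main obstacle, as I see it, is the bookkeeping in the visit-pair argument: one must show that any gap between consecutive visits to $v$ in $W^{\star}$ is reproduced exactly by $W_k$ for some large $k$, which requires the agreement window to cover both endpoints $i,j$ and to preserve the nonvisits strictly between them. Once that is handled, the rest amounts to a form of lower-semicontinuity of latency under pointwise convergence of walks.
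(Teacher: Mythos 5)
Your route is genuinely different from the paper's. The paper never constructs a limit walk: it first exhibits one walk of finite cost, and then observes that, because the finitely many edge lengths are all positive, only finitely many walk lengths --- hence finitely many latency values, hence finitely many achievable costs --- lie below that finite cost, so the infimum is taken over a finite set of achievable values and is attained. You instead run a compactness argument in the product space $V(G)^{\mathbb{N}}$, extract a pointwise limit $W^{\star}$ of a minimizing sequence by diagonalization, and prove a lower-semicontinuity statement for the cost. Your key inequality $l(W^{\star}(i,j))=l(W_k(i,j))\le L(W_k,v)\le C(W_k)/\phi(v)$ for a consecutive visit pair $(i,j)$ reproduced in $W_k$ is correct, and your approach is more robust in that it does not rely on the discreteness of the set of achievable costs; the price is that you must control the limit object, which the paper's argument avoids entirely.

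That control is where the one genuine gap sits. The paper defines $L(W,v)$ as the maximum length of a sub-walk \emph{between two consecutive visits} to $v$; the prefix of $W$ before the first visit to $v$ contributes nothing to the latency. Consequently your claim that the inter-visit bound ``forces $W_k$ to visit $v$ in every window of length at most $\lceil (c^{\star}+1)/(\phi(v)\,l_{\min})\rceil$'' is false as stated: $W_k$ may postpone its first visit to $v$ arbitrarily long at no cost, so early windows need not contain $v$ at all. A minimizing sequence in which the first visit to $v$ occurs at position $k$ converges pointwise to a walk that never visits $v$, in which case $L(W^{\star},v)$ is infinite and the conclusion $C(W^{\star})\le c^{\star}$ fails. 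The repair is routine: deleting a finite prefix of an infinite walk can only remove consecutive-visit pairs, hence never increases the cost, so you may re-root each $W_k$ at a position by which every positive-weight vertex has already been visited at least once. After this normalization the visit-in-every-window property does hold uniformly for all $k$ in the tail, it passes to $W^{\star}$ by agreement on finite windows, and the rest of your argument goes through.
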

\begin{proof}
Suppose that there does not exist a walk of minimum cost in $G$.  There are two ways in which this could happen.  Either every walk in $G$ has infinite cost, or there exists an infinite sequence of walks $W_1,W_2,\ldots$ with costs $C(W_1) \geq C(W_2) \geq \cdots$, such that $\lim_{i\to\infty} C(W_i) = c^*$, but there is no walk in $G$ attaining the cost $c^*$.  Thus, to prove the result we will eliminate each of these cases.

First, let $W$ be any walk of size $n$ in $G$ that visits all vertices in $V(G)$. Then the cost $C(\Delta(W))$ is necessarily finite.  Next, we show that there are only a finite number of different values $c'<C(\Delta(W))$ that can be costs of walks in $G$.  Since the length of each edge is positive, for any vertex $v\in V(G)$ there is a finite number of walks beginning in $v$ with length less than {$C(\Delta(W))/\phi(v)$}. Therefore, there are a finite number of possible values for the latency of $v$ that are less than {$C(\Delta(W))/\phi(v)$}. Hence, there is a finite number of possible costs for vertex $v$ on a walk of cost less than $C(\Delta(W))$. Moreover, since there are $n$ vertices in $G$, a walk in $G$ can only have a finite number of different costs. As a result, there exists a walk of minimum cost for graph~$G$.
\end{proof}

We define $\opt_G$ to be the minimum cost among all infinite walks on $G$. An infinite walk $W$ is an optimal walk if $C(W)=\opt_G$. By Lemma~\ref{lem:existsOPT}, such a number always exists. The next question is whether there always exists a finite size kernel realizing $\opt_G$, that is, whether there is an optimal walk that consists of an infinite number of repetitions of a finite size walk.

\begin{lemma}
\label{lem:existsfiniteOPT}
For any graph $G$, there is walk of minimum cost that has a finite size kernel.
\end{lemma}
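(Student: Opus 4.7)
The plan is to take any optimal infinite walk $W$ (which exists by Lemma~\ref{lem:existsOPT}), identify a repeating ``state'' along $W$, and cut out the segment between two occurrences of that state to obtain a closed walk whose periodic expansion has cost no larger than $c^* := C(W) = \opt_G$.

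For each index $k$ along $W = (v_1, v_2, \ldots)$, I define the state
\[
s(k) := \bigl(v_k,\, \tau_1(k),\, \tau_2(k),\, \ldots,\, \tau_n(k)\bigr),
\]
where $\tau_m(k)$ is the total edge-length traversed since the most recent visit to vertex $v_m$ up to and including position $k$. I first argue that the set of states attained by $W$ is finite: since $C(W) = c^*$, we have $\tau_m(k) \leq c^*/\phi(v_m)$ for every positive-weight vertex $v_m$, and $\tau_m(k)$ is a sum of edge lengths of $W$. Combined with the positive lower bound $l_{\min}$ on edge lengths, each coordinate can only take finitely many distinct values, so the overall state space is finite. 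By the pigeonhole principle some state repeats infinitely often, so I pick indices $k_1 < k_2$ with $s(k_1) = s(k_2)$ and with the sub-walk $U := (v_{k_1}, v_{k_1+1}, \ldots, v_{k_2})$ having length strictly greater than $c^*/\phi_{\min}$, where $\phi_{\min}$ is the minimum positive vertex weight (such spacings exist because state repetitions occur at arbitrarily large separations, and $l_{\min} > 0$ forces the length to grow with the index gap). Since $v_{k_1} = v_{k_2}$, the walk $U$ is closed, and because the latency of every positive-weight vertex in $W$ is at most $c^*/\phi_{\min}$, every such vertex appears in $U$. Set $W'' := \Delta(U)$, whose kernel has size at most $k_2 - k_1 + 1$.

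It remains to show $C(W'') \leq c^*$. Fix a positive-weight vertex $v$ and consider any two consecutive visits to $v$ in $W''$. If both lie within a single copy of $U$, the gap between them coincides with a gap between consecutive visits to $v$ in $W$ and is at most $L(W, v) \leq c^*/\phi(v)$. Otherwise the gap crosses the seam between two consecutive copies of $U$: by $s(k_1) = s(k_2)$, the portion from the last visit to $v$ in one copy up to the seam equals $\tau_v(k_2) = \tau_v(k_1)$, and the portion from the seam to the first visit to $v$ in the next copy equals the edge-length in $W$ from position $k_1$ forward to the next occurrence of $v$. Summing these two portions reconstructs a full gap in $W$ between consecutive visits to $v$, hence is again bounded by $L(W, v)$. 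Multiplying by $\phi(v)$ and taking the maximum over $v$ gives $C(W'') \leq c^* = \opt_G$, and optimality forces equality.

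The main obstacle I anticipate is establishing finiteness of the state space, since the coordinates $\tau_m(k)$ are real-valued and finiteness does not follow from boundedness alone. The resolution relies on combining the uniform bound $c^*/\phi(v_m)$ with the strictly positive edge-length bound $l_{\min}$, which together confine each $\tau_m(k)$ to a sum of boundedly many edge lengths chosen from the finite edge set of $G$, yielding only finitely many possible values and making the final pigeonhole step available.
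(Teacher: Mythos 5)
Your proof is correct, and it shares the paper's overall architecture---start from an optimal walk (Lemma~\ref{lem:existsOPT}), use the positivity of edge lengths to force a repetition by pigeonhole, excise the segment between two occurrences of the repeated object, and check that the periodic expansion still has cost $\opt_G$---but the two arguments pigeonhole on different objects, which changes where the work lands. The paper fixes a weight-one vertex $v$ and, at each visit to $v$, looks at the finite sub-walk of length roughly $\opt_G\rho_G$ that follows; since only finitely many walks of bounded length start at $v$, some window $W'$ recurs, and because the excised piece sits inside $W$ as $[W',W'',W']$, every pair of consecutive visits in $\Delta([W',W''])$---including those straddling the seam---is literally a sub-walk of the optimal walk, so the cost bound needs no computation. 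You instead pigeonhole on a summary state (current vertex plus the vector of elapsed distances since the last visit to each vertex); this is a smaller object, but it buys less at the seam, and you then need the explicit bookkeeping that $\tau_v(k_1)=\tau_v(k_2)$ lets a straddling gap be reassembled into a genuine consecutive-visit gap of $W$. That bookkeeping is carried out correctly, and your finiteness argument for the state space (residuals bounded by $c^*/\phi(v_m)$, edges of length at least $l_{\min}>0$ drawn from a finite edge set) is sound. Two cosmetic caveats: like the paper's proof, yours implicitly assumes all vertex weights are positive (a zero-weight vertex would make your state space infinite, just as it would make the paper's $\rho_G$ infinite); and since $U$ begins and ends at the same vertex, $\Delta(U)$ should be understood as dropping one copy of the shared endpoint at each seam so as not to introduce a spurious self-loop.
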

\begin{proof}
Assume $W$ is an optimal walk with cost $\opt_G$. Note that $W$ is an infinite walk. Let $v$ be a vertex in $V(G)$. Let $W_1$ be a sub-walk of $W$ starting at $v$ with length larger than $\opt_G\times\rho_G$. Since $l(W_1)>\opt_G\times\rho_G$, every vertex of $G$ is visited at least once in $W_1$; otherwise $C(W)\ge l(W_1)/\rho_G >\opt_G$.  Let $U$ be the set of all possible walks in $G$ starting at $v$ with lengths between $l(W_1)$ and $l(W_1)+\max_{e\in E(G)}l(e)$. Since the edge lengths are positive and finite, the size of $U$ is finite.

Let $i$ be the index of a visit to $v$ in $W$.  Then, for every walk $W(i,j)$ of length at least $l(W_1)$, there exists a sub-walk starting at $W(i,i)$ that is in $U$.
Due to the fact that $W$ is an infinite walk and hence $v$ appears an infinite number of times in $W$, there exists a walk $W'\in U$ that appears at least twice in $W$. Let $W''$ be such that $[W',W'',W']$ is a sub-walk of~$W$. Note that $[W',W'']$ is a finite walk, and so $\Delta([W', W''])$ is a walk with a finite size kernel.  We now claim that $\Delta([W', W''])$ is also an optimal walk for $G$.  Consider any two consecutive instances of a vertex $u\in V(G)$ in $\Delta([W', W''])$.  For these two instances of $u$, one of the following two cases occurs:  (i) the two instances are in the same copy of $[W', W'']$, or (ii) the two instances are in consecutive copies of $[W', W'']$.  However, since $[W',W'',W']$ is a sub-walk of $W$, both cases occur in optimal walk $W$ and thus $\Delta([W', W''])$ is also an optimal walk for $G$.
\end{proof}

Next we will show that the problem of min-max latency walk is APX-hard, implying that there is no polynomial-time approximation scheme (PTAS) for it, unless P=NP. However, before that we need to introduce the following definitions.

A \emph{complete graph} is a graph that each pair of its vertices are connected by an edge. A graph is called a \emph{metric graph} if (i) it is a complete undirected graph, and (ii) for any three vertices $u,v,w\in V(G)$ we have $l(uw)\le l(uv)+l(vw)$ (\emph{triangle inequality}) \cite{VVV:'04}.

\begin{theorem}
\label{the:tsp-hardness}
The min-max latency walk problem is APX-hard.
\end{theorem}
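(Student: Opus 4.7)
The plan is to give an approximation-preserving reduction from metric TSP, which is APX-hard, to the min-max latency walk problem. Given a metric TSP instance $G=(V,E,l)$, denote the length of an optimal TSP tour by $T^*:=l(\mathrm{TSP}(G))$, and construct the min-max latency walk instance $G'$ by keeping the graph and edge lengths of $G$ unchanged and setting $\phi(v)=1$ for every $v\in V$. The heart of the argument is the claim that $\opt_{G'}=T^*$, from which APX-hardness follows by gap preservation.

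The upper bound $\opt_{G'}\leq T^*$ is immediate: the infinite walk obtained by expanding the tour $\mathrm{TSP}(G)$ assigns every vertex a latency equal to $T^*$, so the cost of that walk equals $T^*$.

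For the lower bound, by Lemma~\ref{lem:existsfiniteOPT} I may restrict attention to an optimal walk $W$ with a finite kernel. Let $C=C(W)=\opt_{G'}$, pick a vertex $v^*$ attaining the maximum so that $L(W,v^*)=C$, and let $S$ be the closed sub-walk of $W$ between two consecutive visits to $v^*$ whose length realizes this maximum gap, so $l(S)=C$. The key observation is that $S$ must visit every vertex of $G$. Indeed, if some $u\neq v^*$ were absent from $S$, then the two $u$-visits in $W$ straddling $S$ would be separated by at least $l(S)=C$ plus two strictly positive edge lengths (the edges entering and leaving $S$ at $v^*$), yielding $L(W,u)>C$ and contradicting the maximality of $v^*$. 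Since $S$ is then a closed walk visiting every vertex of the metric graph $G$, the triangle inequality allows $S$ to be shortcut to a Hamiltonian tour of length at most $l(S)=C$, so $C\geq T^*$. Combined with the upper bound this gives $\opt_{G'}=T^*$.

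APX-hardness now follows directly: the extraction argument above applies verbatim to any walk $W$ on $G'$ and produces a tour of $G$ of length at most $C(W)$. Hence a $(1+\epsilon)$-approximation algorithm for the min-max latency walk problem on $G'$ would yield a $(1+\epsilon)$-approximation for metric TSP, contradicting APX-hardness of the latter. The main conceptual hurdle is the ``straddle'' argument forcing the max-gap sub-walk to be spanning; it relies crucially on edge lengths being strictly positive, a property the problem's definition already guarantees.
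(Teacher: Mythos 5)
Your proposal is correct and follows essentially the same route as the paper: reduce from metric TSP by assigning unit weights, show the TSP tour's expansion achieves cost $T^*$, and conversely extract a spanning closed sub-walk from the maximum-latency gap of any walk (using positive edge lengths to force it to be spanning) and shortcut it via the triangle inequality to a tour of length at most the walk's cost. The only cosmetic difference is your invocation of Lemma~\ref{lem:existsfiniteOPT} to get a finite kernel, which the paper does not need and which is not essential to the argument.
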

\begin{proof}
The reduction is from the \emph{metric Traveling Salesman Problem (TSP)}. Recall that the TSP is the problem of finding the shortest closed walk that visits all vertices exactly once (except for its beginning vertex). Such a walk is referred to as a TSP tour. The problem of finding TSP tours in metric graphs is called the metric TSP. It is known that the metric TSP is APX-hard \cite{PC-YM:93}, and it is approximable within a factor of 1.5. Here we show a reduction from the metric TSP to the min-max latency walk problem that preserves the hardness of approximation.

Let $G$ be the input of the metric TSP. Assign weight $1$ to all vertices of $G$. Assume $W$ is an infinite walk with optimal cost $\opt_G$ in $G$. Let $M$ be a closed walk that is an optimal solution for TSP in $G$ with $l(M)=c'$. We prove {$c'=\opt_G$}. Since each vertex is visited exactly once in $M$, the cost of $\Delta(M)$ is $c'$. However, due to optimality of $W$ we have $\opt_G \le c'$. It remains to show that $c' \le \opt_G$.

Let $v\in V(G)$ be a vertex with  {$C(W,v)=\opt_G$} and $i$ and $j$ be the indices of two consecutive instances of $v$ with  {$l\big(W(i,j)\big)=\opt_G$}. Since the weight of every vertex is $1$ and $l\big(W(i,j)\big)=\opt_G$, all vertices of $G$ appear in $W(i,j)$; otherwise $C(W)>\opt_G$. Consider the spanning tour $T$ that is obtained from $W(i,j)$ by removing all but one of the instances of each vertex. Since we only remove vertices and due to the triangle inequality we have $l(T)\le l\big(W(i,j)\big)$. Since $T$ visits each vertex of $G$ exactly once, it is a candidate solution for TSP and hence we have $l(M)\le l(T)$. Therefore, $c' = l(M)\le l\big(W(i,j)\big) = \opt_G$. Note that we showed that the size of the solution for the two problems are equal, hence the reduction is gap preserving and the APX-hardness carries over.
\end{proof}

For a graph $G$ and two vertices $u,v\in V(G)$, the shortest-path distance between $u$ and $v$ is denoted by $d(u,v)$. We focus on solving the min-max latency walk problem only for metric graphs. The reason is that for any graph $G$ and any $u,v\in V(G)$ we can create a graph $G'$ with the same set of vertices such that edge $uv$ in $G'$ has length equal to the shortest-path distance from $u$ to $v$ in $G$, i.e., $l(uv)=d(u,v)$. Then, we construct a walk for $G$ based on a walk in $G'$ by replacing each edge $uv$ with the shortest path connecting $u$ and $v$ in $G$.  {Since $\opt_G=\opt_{G'}$ and any walk in $G'$ corresponds to a walk of lower or equal cost in $G$, any approximation in $G'$ carries over to $G$}. In the literature, the graph $G'$ is refereed to as the \emph{metric closure} of $G$ \cite{VVV:'04}. It should be noted that to aid the presentation in some examples we show non-complete graphs with the understanding that we are referring to their metric closures.

In the proof of Theorem~\ref{the:tsp-hardness} gave a reduction from the TSP to the min-max latency walk problem.  However, in general the TSP tour is not a good approximation for the min-max latency walk problem.
\begin{lemma}
\label{lem:TSPbad}
The cost of a TSP tour of $G$ can be larger than $(n-1)\opt_G$.
\end{lemma}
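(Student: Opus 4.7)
The plan is to exhibit a concrete metric graph $G$ on $n$ vertices for which the TSP tour, viewed as an infinite walk, has cost at least $(n-1)\opt_G$. The intuition is that a TSP tour visits the most heavily weighted vertex only once per cycle, so that vertex's latency equals the full tour length, whereas an optimal min-max latency walk can revisit that vertex after every short excursion to a nearby neighbour.

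Concretely, I would take $G$ to be the metric closure of the star with centre $c$ of weight $\phi(c)=1$ and $n-1$ leaves $v_1,\dots,v_{n-1}$ each at unit distance from $c$, assigning each leaf a weight $\phi(v_i)=\varepsilon$ with $\varepsilon\le 1/(n-1)$. The first step is to compute $C(\Delta(\mathrm{TSP}(G)))$: in the metric closure of a star every Hamilton cycle uses exactly two edges incident to $c$ (each of length $1$) together with $n-2$ leaf-to-leaf edges (each of length $2$), so $l(\mathrm{TSP}(G))=2(n-1)$. Since $c$ is visited only once per cycle, its latency in $\Delta(\mathrm{TSP}(G))$ is $2(n-1)$ and so $C(\Delta(\mathrm{TSP}(G)),c)=2(n-1)$; the leaf costs $2(n-1)\varepsilon\le 2$ do not dominate, giving $C(\Delta(\mathrm{TSP}(G)))=2(n-1)$. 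The second step is to upper-bound $\opt_G$ via the round-robin walk $W^*=(c,v_1,c,v_2,\dots,c,v_{n-1},c)$: consecutive visits to $c$ in $\Delta(W^*)$ are separated by length exactly $2$, while each leaf has latency $2(n-1)$, so $C(\Delta(W^*))=\max\{2,\,2(n-1)\varepsilon\}=2$, which yields $\opt_G\le 2$. Combining the two estimates gives $C(\Delta(\mathrm{TSP}(G)))\ge (n-1)\opt_G$, as desired; a minor perturbation of the edge lengths or of $\varepsilon$ promotes this to the strict inequality stated in the lemma.

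The principal obstacle is a joint bookkeeping one: $\varepsilon$ must be chosen small enough that in both walks under comparison the maximum weighted latency is attained at $c$ rather than at a leaf, and the single condition $\varepsilon\le 1/(n-1)$ is what makes this work simultaneously. Along the way I would also verify two easy auxiliary facts: that every Hamilton cycle in the metric closure of the star has length exactly $2(n-1)$ (since $c$ must have degree two in any Hamilton cycle, forcing two unit-length star edges and $n-2$ leaf-to-leaf edges of length two), and that $W^*$ is a valid closed walk whose expansion $\Delta(W^*)$ realises the stated per-vertex latencies.
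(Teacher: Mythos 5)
Your proposal is correct and uses essentially the same construction as the paper: the paper's graph (centre of weight $1$ joined to $n-1$ light leaves by unit edges, with leaf-to-leaf edges of length $2$) is exactly the metric closure of your star, and both arguments compare the TSP tour of length $2(n-1)$ against the round-robin walk of cost $2$. Your added bookkeeping on the choice of $\varepsilon$ and the remark about perturbing to get a strict inequality are fine but do not change the substance.
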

\begin{proof}
Let $G$ be a graph of $n$ vertices constructed as follows:
\begin{itemize}
\item There is a vertex $v$ with weight $1$.
\item There are $n-1$ vertices, $v_1,v_2,\ldots,v_{n-1}$, each having weight $1/n$.
\item There exists an edge connecting $v_i$ to $v$  with length $1$ for any $1\leq i < n$.
\item There exists an edge connecting $v_i$ to $v_{i+1}$ with length $2$ for any $1\leq i < n-1$ (see Figure \ref{fig:tspBad}).
\end{itemize}
It is easy to see that the triangle inequality holds for edge lengths in $G$. The TSP tour has length $2n-2$ and hence the cost of the TSP tour is $2n-2$. However, the cost of the walk that only uses edges of unit weight and visits $v$ at every other index would be $2$. This means that the cost of TSP can be as bad as $(n-1)$ times the cost of the optimal walk.
\begin{figure}[t]

\centering
\includegraphics[scale=.87]{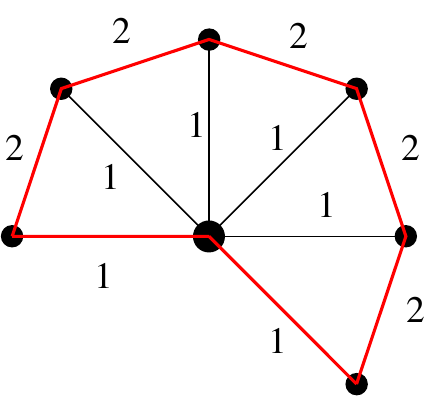}
%
%
\caption{The graph $G$ as in Lemma \ref{lem:TSPbad} with $n=7$. The cost of the TSP tour (red thick edges) in this graph is $2n-2=12$. Note that if the weight of all vertices except the middle vertex is small, there is a walk that has cost $2$.}
\label{fig:tspBad}       
\end{figure}
\end{proof}

In the following sections we seek better approximation algorithms for the min-max latency walk problem.

\section{Relaxations and Simple Bounds}
\label{sec:relaxation}

In this section, we present a relaxation of the min-max latency walk problem and two simple bounds based on the lengths of the edges of the input graph.

\subsection{Relaxation of Vertex Weights}

Here, we define a relaxation of the problem so that all weights are of the form $1/2^x$, where $x$ is an integer.  A similar relaxation has been used in both~\cite{gortz2011capacitated} and~\cite{SLS-DR:09a}.
\begin{definition}[Weight Relaxation]
We say weights of vertices of graph $G$ are \emph{relaxed}, if for any vertex  {$v\in V(G)$} we update its weight $\phi(v)$ to {$\phi'(v)=\frac{1}{2^x}$} such that $x$ is the smallest integer for which {$\frac{1}{2^x} \leq \phi(v)$} holds.
\end{definition}
\begin{lemma}[Relaxed Vertex Weights]
\label{lem:relaxing}
For a graph $G'$ obtained by relaxing the weights of vertices of $G$ the following statements hold:
\begin{enumerate}
  \item If a walk $W$ has cost $c$ in $G$ and cost $c'$ in $G'$, then $c' \leq c < 2c'$.
  \item $\opt_{G'} \leq \opt_G < 2\opt_{G'}$.
\end{enumerate}
\end{lemma}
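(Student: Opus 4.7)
The plan is to reduce both parts to a simple pointwise bound on vertex weights that follows immediately from the relaxation. By definition, $\phi'(v)=1/2^x$ where $x$ is the smallest integer with $1/2^x \le \phi(v)$, so $1/2^{x-1} > \phi(v)$, giving
\[
\phi'(v) \le \phi(v) < 2\phi'(v) \quad \text{for every } v \in V(G).
\]
Crucially, the latency $L(W,v)$ is a function of the walk and the edge lengths alone, and so is unchanged when we pass from $G$ to $G'$; only the weighting factor in $C(W,v)=\phi(v)L(W,v)$ differs.

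For part (i), I would multiply the above inequalities by $L(W,v)$ to obtain
\[
\phi'(v)L(W,v) \;\le\; \phi(v)L(W,v) \;<\; 2\,\phi'(v)L(W,v).
\]
Taking the maximum over $v\in V(G)$, the left inequality yields $c' \le c$ directly. For the right inequality, I would pick a maximizer $v^\star$ of $\phi(v)L(W,v)$, so that $c=\phi(v^\star)L(W,v^\star)<2\phi'(v^\star)L(W,v^\star)\le 2c'$. This is the only step that requires a moment of care, since a pointwise strict inequality does not always pass to the max without this argument.

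For part (ii), I would apply (i) to each of the two optimal walks (which exist by Lemma~\ref{lem:existsOPT} for both $G$ and $G'$). Let $W^\star$ attain cost $\opt_G$ in $G$. Its cost in $G'$ is at most $\opt_G$ by (i), and by optimality $\opt_{G'}$ is no larger, so $\opt_{G'}\le \opt_G$. Conversely, let $W'$ attain cost $\opt_{G'}$ in $G'$. By (i), its cost in $G$ is strictly less than $2\opt_{G'}$, and by optimality $\opt_G$ is no larger than the cost of $W'$ in $G$, yielding $\opt_G < 2\opt_{G'}$.

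I do not expect any significant obstacle: the proof is essentially bookkeeping on top of the weight inequality $\phi' \le \phi < 2\phi'$. The only subtle point is the handling of the strict inequality when passing to the maximum, which is resolved by evaluating both sides at a single maximizing vertex as noted above.
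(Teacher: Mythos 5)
Your proposal is correct and follows essentially the same route as the paper: both derive the pointwise bound $\phi'(v)\le\phi(v)<2\phi'(v)$ from the definition of relaxation, note that latencies are unchanged, and then transfer the bound through the maximum for (i) and through the two optimal walks for (ii). Your explicit handling of the strict inequality via a maximizing vertex is a small refinement of a step the paper states without comment, but the argument is the same.
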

\begin{proof}
(i) The weight of each vertex in $G'$ is less than or equal to the weight of that vertex in $G$, while the lengths of the corresponding edges are the same. Hence, for costs of $W$ in $G$ and $G'$ we have $c' \leq c$. Moreover, the weight of each vertex in $G'$ is more than half of the weight of the same vertex in $G$. This results in $c < 2c'$. Consequently, we have $c' \leq c < 2c'$.

(ii) Let $W$ and $W'$ be optimal walks in $G$ and $G'$, respectively. For cost of $W$ in $G'$, denoted by $c$, we have $\opt_{G'}\le c$. Also, (i) results in $c\le \opt_G<2c$. Consequently, we have $\opt_{G'}\le \opt_G$. Similarly, for cost of $W'$ in $G$, denoted by $c'$, we have $\opt_G\le c'$. Moreover, by (i) it follows that $\opt_{G'}\le c'<2\opt_{G'}$ and hence $\opt_G<2\opt_{G'}$. Therefore, we have $\opt_{G'} \leq \opt_G < 2\opt_{G'}$.
\end{proof}

The reason for considering this relaxation is as follows.  Given a relaxed graph $G'$, we can define $V_i$ to be all vertices in $G'$ with weight $1/2^i$.  Then, in order for the vertices in $V_i$ and $V_{i+1}$ to have the same weighted latency, each vertex in $V_i$ should be visited twice as often as each vertex in $V_{i+1}$ in a walk on $G'$. This observation gives us some structure that we can exploit in our search for approximation algorithms for walks on $G'$.  By Lemma~\ref{lem:relaxing}(ii), an $\alpha$-approximation algorithm on $G'$ would yield a  $2\alpha$-approximation algorithm on the unrelaxed graph $G$.

\subsection{Simple Bounds on Optimal Cost}

It is easy to observe that no vertex can be too far away from a vertex with weight one, as this distance will bound the cost of the optimal solution.
\begin{lemma}
\label{lem:maxdishalfc}
Let $G$ be a metric graph. For any vertices $u,v\in V(G)$ such that $v$ has weight $1$, we have $l(uv) \leq \opt_G/2$.
\end{lemma}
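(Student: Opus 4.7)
The plan is to exploit two facts simultaneously: because $\phi(v)=1$, any walk of cost $\opt_G$ revisits $v$ every $\opt_G$ units of length, and because $G$ is metric, the shortest way to leave $v$, touch $u$, and return is at least $2\,l(uv)$. Placing these two statements side by side forces $2l(uv)\le\opt_G$.

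Concretely, I would first invoke Lemma~\ref{lem:existsOPT} (or~\ref{lem:existsfiniteOPT}) to pick an optimal infinite walk $W$ with $C(W)=\opt_G$. Since $\phi(v)=1$, the definition of cost gives $L(W,v)=C(W,v)\le C(W)=\opt_G$, so any two consecutive occurrences of $v$ on $W$ bound a sub-walk of length at most $\opt_G$. Next, fix any index $s$ on $W$ at which $u$ is visited, and let $t_i<s<t_{i+1}$ be the indices of the two consecutive visits to $v$ that bracket $s$ (taking $s=t_i$ if $u=v$, which makes the bound trivial). The two sub-walks $W(t_i,s)$ and $W(s,t_{i+1})$ each go from $v$ to $u$ or $u$ to $v$ along edges of the metric graph, so by the triangle inequality each has length at least $d(u,v)=l(uv)$. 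Adding the two pieces gives
\[
2\,l(uv)\;\le\;l\bigl(W(t_i,s)\bigr)+l\bigl(W(s,t_{i+1})\bigr)\;=\;l\bigl(W(t_i,t_{i+1})\bigr)\;\le\;L(W,v)\;\le\;\opt_G,
\]
which is the desired inequality.

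The only subtle point is to justify that $u$ is actually visited somewhere on an optimal walk; this is automatic whenever $\phi(u)>0$, since otherwise $C(W,u)$ would be infinite, contradicting the finiteness of $\opt_G$ guaranteed by Lemma~\ref{lem:existsOPT}. If $\phi(u)=0$ then $u$ can be inserted into any walk at arbitrary points without changing its cost, so without loss of generality we may again assume $u$ appears. I expect the main obstacle to be purely notational (keeping track of sub-walk indices correctly and ensuring the bracketing indices $t_i,t_{i+1}$ exist, which follows from the fact that $W$ is infinite and $v$ is visited infinitely often); the geometric content is just the ``round-trip'' argument above.
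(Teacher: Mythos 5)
Your proof is correct and follows essentially the same route as the paper's: both bracket an occurrence of $u$ between two consecutive occurrences of $v$ on an optimal walk and use the triangle inequality to bound the round trip $v \to u \to v$ by $2\,l(uv) \le L(W,v) \le \opt_G$ (the paper merely phrases this as a contradiction rather than a direct chain of inequalities). Your extra care about whether $u$ is actually visited is a reasonable refinement the paper leaves implicit, but it does not change the substance of the argument.
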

\begin{proof}
By way of contradiction, assume that $l(uv)>\opt_G/2$ for some $u,v\in V(G)$ such that {$\phi(v)=1$}. Let $W$ be an optimal walk in $G$, i.e., $C(W)=\opt_G$. Let $u_i$  be an occurrence of $u$ in $W$. Let $v_j$ and $v_k$ be the two consecutive occurrences of $v$ preceding and succeeding $u_i$ in $W$, respectively. Since $G$ is metric, the sub-walk of $W$ that lies between $v_j$ and $v_k$ has length greater than $\opt_G$. However, since  {$\phi(v)=1$}, this contradicts the assumption that $W$ has cost $\opt_G$.
\end{proof}

\begin{corollary}
\label{cor:maxdisc}
If $G$ is a metric graph, then the maximum edge length in $G$ is at most $\opt_G$.
\end{corollary}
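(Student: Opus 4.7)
The plan is to derive the corollary directly from Lemma~\ref{lem:maxdishalfc} together with the triangle inequality that is available because $G$ is metric. Recall that the paper has already fixed the standing assumption that $G$ contains at least one vertex of weight $1$; this is the hook that lets Lemma~\ref{lem:maxdishalfc} be applied.

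First, I would pick an arbitrary edge $e = uw \in E(G)$ and let $v \in V(G)$ be any vertex with $\phi(v) = 1$. Applying Lemma~\ref{lem:maxdishalfc} to the pair $(u,v)$ and then to the pair $(w,v)$ yields
\[
l(uv) \le \opt_G/2 \qquad \text{and} \qquad l(wv) \le \opt_G/2.
\]
Because $G$ is metric, the triangle inequality gives
\[
l(uw) \le l(uv) + l(vw) \le \opt_G/2 + \opt_G/2 = \opt_G.
\]
Since $e$ was arbitrary, every edge length is at most $\opt_G$, which is exactly the claim.

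There is essentially no obstacle here; the corollary is a one-line consequence of Lemma~\ref{lem:maxdishalfc} and the defining property of metric graphs. The only mild subtlety is remembering to invoke the global assumption that some vertex has weight~$1$ so that Lemma~\ref{lem:maxdishalfc} is applicable at all — without such a vertex, the bound $\opt_G/2$ on distances from $v$ would not be available, and the proof would collapse.
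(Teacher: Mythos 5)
Your proof is correct and follows essentially the same route as the paper: apply Lemma~\ref{lem:maxdishalfc} to both endpoints of the edge relative to a weight-$1$ vertex and combine with the triangle inequality. The only cosmetic difference is that the paper first splits off the case where an endpoint itself has weight $1$, whereas your uniform argument covers that case automatically.
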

\begin{proof}
The case that one end of an edge $e$ has weight 1 is addressed in Lemma \ref{lem:maxdishalfc}. Therefore, we consider an edge $uv$ such that $\phi(u),\phi(v)<1$. Let $w$ be a vertex in $V(G)$ such that $\phi(W)=1$. By Lemma \ref{lem:maxdishalfc} it follows that $l(uw),l(vw)\le \opt_G/2$. Moreover, since $G$ is a metric graph, we have $l(uv)\le l(uw)+l(vw)\le \opt_G$.
\end{proof}

\section{Properties of Min-Max Latency Walks}
\label{sec:existence_results}

In this section, we characterize the optimal and approximate solutions of the min-max latency walk problem.

\subsection{Bounds on Size of Kernel of an Optimal Walk}
\label{subsec:kernel-bounds}
Here, we show that the optimal solution for the min-max latency walk problem can be very large with respect to the size of the input graph.
\begin{lemma}
\label{lem:nonPoly}
There are infinitely many graphs for which any optimal walk has a kernel that is at least exponential in the size of $G$.
\end{lemma}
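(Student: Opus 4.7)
My plan is to exhibit, for each $k \geq 1$, a graph $G_k$ on $n = k+1$ vertices for which every optimal walk has a kernel of size at least $2^k = 2^{n-1}$; since this produces infinitely many graphs, the lemma follows. The guiding intuition is to choose vertex weights that are dyadic fractions saturating the total ``visit-density'' budget, so that every optimal schedule is tight and forced to have period equal to the largest reciprocal weight.

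Concretely, I will let $G_k$ be the complete graph on vertices $v_0,\ldots,v_k$ with every edge of length $1$, and set $\phi(v_0)=1$, $\phi(v_i)=1/2^i$ for $1\le i\le k-1$, and $\phi(v_k)=1/2^{k-1}$, so that $\sum_{i=0}^k \phi(v_i)=2$. First I would show that $\opt_{G_k}=2$ via a density argument: for any infinite walk $W$ with finite kernel of size $T$ (which, since every edge has unit length, also equals the time per period), if $v_i$ occurs $n_i$ times per period then $\sum_i n_i=T$, and since the maximum of $n_i$ positive gaps summing to $T$ is at least their average, $L(W,v_i)\ge T/n_i$. This yields the scheduling inequality $\sum_i 1/L(W,v_i)\le 1$. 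Combining with $L(W,v_i)\le C(W)/\phi(v_i)$ gives $C(W)\ge \sum_i \phi(v_i)=2$. The matching upper bound is realized by the explicit periodic schedule that places $v_i$ at every position $t\equiv 2^i\pmod{2^{i+1}}$ for $0\le i\le k-1$ and $v_k$ at every position $t\equiv 0\pmod{2^k}$; these residue classes partition the positions, the resulting walk alternates $v_0$ with other vertices (so no self-loops are needed), and every vertex has cost exactly $2$.

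Next I would argue that this tightness forces a long kernel. Because the chain of inequalities $2/C(W)=\sum_i \phi(v_i)/C(W)\le \sum_i 1/L(W,v_i)\le 1$ must collapse to equality for any optimal walk (where $C(W)=2$), we must have $L(W,v_i)=2/\phi(v_i)$ for every $i$, and in particular $L(W,v_k)=2^k$. Within a single copy of the kernel the gaps between successive visits to $v_k$ are positive and sum to the kernel size $T$, so $T\ge L(W,v_k)=2^k = 2^{n-1}$. Since $k$ ranges over all positive integers, this produces the required infinite family.

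The main obstacle will be handling optimal walks whose kernel is a priori infinite, since the density argument above applies only to finite-kernel walks. I plan to handle that case trivially: by the paper's convention, such a walk is defined to be its own kernel, whose size is then infinite and in particular exceeds $2^k$. Alternatively, one may invoke Lemma~\ref{lem:existsfiniteOPT} to reduce to finite-kernel optimal walks at the outset. A minor secondary point to check is that the tightness step really pins $L(W,v_i)$ down to exactly $2/\phi(v_i)$: any strict inequality $L(W,v_i)<2/\phi(v_i)$ for some $i$ would make $\sum 1/L(W,v_i)>\sum \phi(v_i)/2=1$, contradicting the scheduling bound.
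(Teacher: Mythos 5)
Your proof is correct, but it takes a genuinely different route from the paper's. The paper builds a family parameterized by two quantities: $k$ batches $V_1,\ldots,V_k$ of $s$ vertices each with weights $1/(s+1)^i$ and unit edges; it shows $\opt_G\le 1$ via a recursive interleaving construction and then forces a long kernel by an inductive window-counting argument (every sub-walk of size $s+1$ must contain all of $V_1$, and more generally every sub-walk of size $(s+1)^{i-1}$ contains at most one vertex of $\bigcup_{j>i}V_j$), concluding that the kernel has size at least $(s+1)^{k-1}$. You instead use singleton ``batches'' with dyadic weights chosen so that $\sum_v\phi(v)=2$ saturates the visit-density budget, derive the scheduling inequality $\sum_v 1/L(W,v)\le 1$ for any finite-kernel walk on a unit-edge graph by averaging the gaps, and observe that optimality collapses the chain $1=\sum_v\phi(v)/C(W)\le\sum_v 1/L(W,v)\le 1$ to equality, pinning every latency to exactly $2/\phi(v)$ and hence the period to at least $2^k=2^{n-1}$. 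Your argument is shorter, yields a sharper and fully explicit bound ($2^{n-1}$ on an $n$-vertex graph, versus the paper's somewhat awkward ``$\Omega(n^{k-1})$ for constant $k$'' phrasing), and the scheduling inequality is a reusable lower-bound tool; the paper's combinatorial induction is more robust in that it does not rely on the weights exactly saturating the density budget. Your handling of the two loose ends is also sound: walks without finite kernels are disposed of by the paper's convention $\delta(W)=W$, and the lower bound $\opt_{G_k}\ge 2$ only needs to be applied to finite-kernel walks (either via Lemma~\ref{lem:existsfiniteOPT} or, more directly, by noting that any finite-kernel optimal walk simultaneously satisfies $C(W)\ge 2$ from your density bound and $C(W)\le 2$ from your explicit ruler-sequence schedule, which is a valid walk since the odd positions all carry the unit-weight vertex and the graph is complete).
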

\begin{proof}
For any constant integer $k$ and any integer multiple of it $n=sk$, we construct a graph $G$ with unit length edges and $|V(G)|=n$ and prove that the smallest kernel of any optimal solution has size in $\Omega(n^{k-1})$. Let $V_1,\ldots,V_k$ be a partition of $V(G)$ into $k$ sets each having size $s$. Let there be a unit length edge $uv$ for any $u\in V_1$ and $v\in V_i$, where $i \in \{1,2,\ldots,k\}$. For each $v\in V_i$ with $1\leq i \leq k$, let {$\phi(v)=\frac{1}{{(s+1)}^i}$}. We first prove that $\opt_G \leq 1$.

Let $W$ be a walk constructed by visiting all vertices in the sets $V_1,V_2,\ldots,V_{i-1}$ recursively between any two consecutive visits to members of $V_i$ (see Algorithm \ref{alg:constpath}). It is easy to see that cost of $\Delta(W)$ is at most $1$. The reason is that each vertex in $V_i$  {for $i \in \{1,2,\ldots,k\}$} has weight $\frac{1}{(s+1)^i}$ and is visited in $\Delta(W)$ at least once every other $(s+1)^i$ steps by the construction (see Figure \ref{fig:walkKS}). Therefore {$C(\Delta(W),v)$} is bounded by $1$ for any vertex $v$.

We have proved $\opt_G \leq 1$. It remains to prove any infinite walk $M$ in $G$ with cost less than or equal to $1$ has a kernel of size $\Omega(n^{k-1})$. Let $M_1$ be a sub-walk of size $s+1$ of $M$. Then all vertices of $V_1$ appear in $M_1$, otherwise the vertex $v$ in $V_1$ that does not appear in $M_1$ would induce a cost larger than $1$ to $M$, that is, {$C(M,v)\geq(s+2)\times\frac{1}{s+1}>1$}. This means that after each visit to a member of $V_i$ with $i>1$, the next $s$ vertices that are visited in $M$ all belong to $V_1$.

Now we need to show that  {at most} a single vertex in $\bigcup_{j>i}V_{j}$ appears in any sub-walk of $M$ of size $(s+1)^{i-1}$. To prove this we use induction on $i$. Let $M'$ be a sub-walk of $M$ with size $(s+1)^{i-1}$. We can partition the elements of $M'$ into $s+1$ disjoint sub-walks of size $(s+1)^{i-2}$. By the induction hypothesis, we know that each part of this partition has {at most} a single instance of vertices in $\bigcup_{j> i-1}V_j$. Also, we know that all vertices of $V_i$ appear in $M'$, or else the vertex $v\in V_i$ that is not visited in $M'$ would have cost {$C(M,v)>1$}. Since there are $s$ vertices in $V_i$ and $s+1$ visits to  vertices of $\bigcup_{j>i-1}V_j$ in $M'$, there is {at most} a single visit to a vertex in $\bigcup_{j>i}V_j$ in $M'$. Since all vertices in $V_k$ appear  in the kernel of $M$, this means that the kernel of $M$ has size at least $(s+1)^{k-1}$. Since $k$ is a constant and $n=sk$, therefore $(s+1)^{k-1}\in\Omega(n^{k-1})$. Hence, kernel of any optimal walk is at least exponential in the size of $G$.
\end{proof}

\begin{figure}[t]
\centering
\includegraphics[scale=.9]{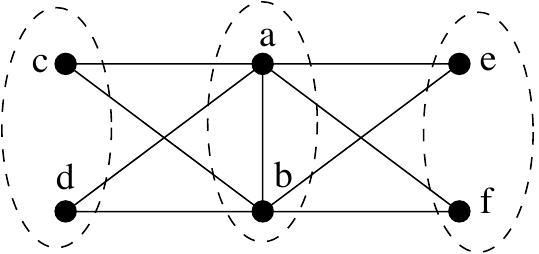}
\caption{The graph $G$ as in proof of Lemma \ref{lem:nonPoly} with $n=6$, $s=2$, $V_1=\{a,b\}$, $V_2=\{c,d\}$ and $V_3=\{e,f\}$. The walk that Algorithm \ref{alg:constpath} constructs would be $[[[a,b],c,[a,b],d],[a,b],e,[[a,b],c,[a,b],d],[a,b],f]$, where brackets show recursive calls in Algorithm \ref{alg:constpath}.}
\label{fig:walkKS}       
\end{figure}

\begin{algorithm}
\caption{$\textsc{WalkMaker}(\{V_1,\ldots,V_{i-1},V_i\})$}
\begin{algorithmic}[1]
\IF{$i<1$}
    \RETURN $\emptyset$
\ELSE
    \STATE $W \leftarrow \emptyset$
    \FOR{$j = 1 \to |V_i|$}
        \STATE $W  \leftarrow  [W,\textsc{WalkMaker}(\{V_1,\ldots,V_{i-1}\})]$,
        \STATE $W \leftarrow [W,v]$; where $v$ is the $j$-th element in $V_i$
    \ENDFOR
    \RETURN $W$
\ENDIF
\end{algorithmic}
\label{alg:constpath}
\end{algorithm}

\begin{corollary}
\label{cor:nopolyalg}
There does not exist a polynomial time algorithm for the min-max latency walk problem.
\end{corollary}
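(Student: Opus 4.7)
The plan is to obtain the corollary as an immediate consequence of Lemma~\ref{lem:nonPoly} via a counting argument on the output size. By Lemma~\ref{lem:existsfiniteOPT}, for every graph $G$ there is an optimal infinite walk whose kernel is finite, so the natural notion of a solution returned by an algorithm for the min-max latency walk problem is (a representation of) such a kernel. I would therefore fix the convention that any putative algorithm must, on input $G$, write down a finite walk whose expansion achieves cost $\opt_G$.

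With this convention the argument is short. First, I would note the standard fact that an algorithm running in time polynomial in the size of $G$ can emit only polynomially many output symbols, since printing each symbol costs at least one time step. Suppose for contradiction that there is such a polynomial-time algorithm $\mathcal{A}$ for the min-max latency walk problem; then on every input $G$ of size $n$, the kernel output by $\mathcal{A}$ has length bounded by some fixed polynomial $p(n)$. Next, I would invoke Lemma~\ref{lem:nonPoly}, which supplies an infinite family of graphs (indexed by $k$ and $s$ with $n=sk$) such that every kernel of every optimal walk has size at least $(s+1)^{k-1}$. Choosing $k$ to grow with $n$ (for instance $k=s=\sqrt{n}$) makes this lower bound super-polynomial in $n$, contradicting the upper bound $p(n)$ on the output length of $\mathcal{A}$.

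The main subtlety, and the only point I would need to defend with any care, is the output model: a clever algorithm might try to avoid the lower bound by emitting a succinct description of the walk (for example, a recursive program in the spirit of Algorithm~\ref{alg:constpath}) rather than the kernel itself. I would address this by stressing that Lemma~\ref{lem:nonPoly} bounds the size of the kernel \emph{of the walk}, which is an intrinsic, representation-free quantity, and that the problem as stated asks for a walk, not a generator; under the natural interpretation in which a solution is the sequence of vertices forming the kernel, the counting argument immediately gives the desired impossibility.
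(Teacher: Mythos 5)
Your proposal is correct and follows the same route the paper intends: the corollary is drawn directly from Lemma~\ref{lem:nonPoly} by observing that a polynomial-time algorithm cannot write down a kernel of super-polynomial size (taking $k$ to grow with $n$, e.g.\ $k=s=\sqrt{n}$, in the lemma's construction). Your additional remark pinning down the output model is a sensible clarification of a point the paper leaves implicit, but it does not change the argument.
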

Corollary \ref{cor:nopolyalg} does not show exactly how hard the problem is. In fact, any algorithm that checks all possible walks to find the optimal solution will have at least doubly exponential time complexity.

\subsection{Binary Walks}

In Section \ref{subsec:kernel-bounds}, we showed that any exact algorithm is not  scalable with respect  to the size of the input graph. Therefore, we turn our attention to finding walks that approximate the optimal cost of the graph. We show there always exists a polynomial size walk that has a cost within a constant factor of the optimal cost. To obtain this result, we first need to define a special class of walks and show that there are walks in this class that provide constant factor approximations.

\begin{definition}[Binary Walks and Decompositions]
\label{def:binary-walk-decomp}
Let  {$G'$} be a relaxed graph and $V_i$ be the set of vertices with weight $1/2^{i}$ in {$G'$}. A walk $S$ is a \emph{binary walk} if it can be written as $[S_1,S_2,\ldots,S_t]$, where  {$t = 2 \rho_{G'}$}, such that for any $v\in V_i$ and any {$0\leq j < t/2^i$}, vertex $v$ appears exactly once in $[S_{j2^i+1},S_{j2^i+2},\ldots,S_{(j+1)2^i}]$. In other words, in each $2^i$ consecutive $S_l$'s starting from $S_{j2^i+1}$, vertex $v$ appears exactly once. Also, we say that the tuple of walks $(S_1,S_2,\ldots,S_t)$ is a \emph{binary decomposition} of $S$.
\end{definition}
By Definition \ref{def:binary-walk-decomp}, each vertex appears in each $S_l$ at most once. Therefore, the size of each $S_l$ is bounded by $n$, where $n=|V(G')|$. This means that $S$ has size bounded by {$2n\rho_{G'}$}. Consider a binary walk $S=[S_1,\ldots,S_t]$, its expansion $\Delta(S)$, and a member of the binary walk $S_l=(v_1,\ldots,v_k)$ for some $1\leq l \leq t$.  We say that a sub-walk $\Delta(S)(i,j)$ of $\Delta(S)$ \emph{intersects} $S_l$ if either $(v_1,\ldots,v_{i'})$ or $(v_{i'},\ldots,v_{k})$ appears in $\Delta(S)(i,j)$ for some $1\leq i'\leq k$.



\begin{lemma}
\label{lem:atmost_2^{i+1}}
Let $G'$ be a relaxed graph and let $V_i$ denote the set of all vertices of $G'$ with weight $1/2^{i}$. Let $S=[S_1,S_2,\ldots,S_t]$ be a binary walk in $G'$. If $a$ and $b$ are indices of two consecutive visits to a vertex $v\in V_i$ in $\Delta(S)$, then $\Delta(S)(a,b)$ intersects at most $2^{i+1}$ members of {$(S_1,S_2,\ldots,S_t)$}.
\end{lemma}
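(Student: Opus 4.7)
The plan is to exploit the block structure given by the binary decomposition. By Definition~\ref{def:binary-walk-decomp}, for each $v \in V_i$ and each $0 \le j < t/2^i$, vertex $v$ appears exactly once in the group $[S_{j 2^i + 1}, \ldots, S_{(j+1) 2^i}]$. The key observation is that since $G'$ is relaxed, every weight has the form $1/2^x$ and at least one vertex has weight $1$, so $\rho_{G'}$ is itself a power of $2$; consequently $t = 2\rho_{G'}$ is divisible by $2^i$ whenever $V_i$ is nonempty. Thus the block partition extends naturally to the infinite walk $\Delta(S) = [S_1,\ldots,S_t,S_1,\ldots,S_t,\ldots]$: grouping its members into consecutive chunks of $2^i$ yields well-defined blocks $\tilde{B}_0, \tilde{B}_1, \ldots$, each of which is a cyclic shift of some original block $[S_{j 2^i + 1},\ldots,S_{(j+1)2^i}]$ and therefore contains $v$ exactly once.

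Next I would locate the member of $\Delta(S)$ containing the visit to $v$ at index $a$ and let $\tilde{B}_J$ denote the block to which it belongs. Because $v$ appears exactly once in $\tilde{B}_J$, no further visit to $v$ occurs inside $\tilde{B}_J$ after position $a$. Moreover, $v$ appears exactly once in the next block $\tilde{B}_{J+1}$, so the next visit $b$ must lie inside $\tilde{B}_{J+1}$: otherwise the unique visit to $v$ in $\tilde{B}_{J+1}$ would sit strictly between $a$ and $b$, contradicting that they are consecutive visits.

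Finally I would count the intersected members: since $\Delta(S)(a,b)$ begins in some member of $\tilde{B}_J$ and ends in some member of $\tilde{B}_{J+1}$, traversing only consecutive members in between, it can intersect only the $|\tilde{B}_J| + |\tilde{B}_{J+1}| = 2 \cdot 2^i = 2^{i+1}$ members making up $\tilde{B}_J \cup \tilde{B}_{J+1}$. Since each such member is a copy of some $S_l$ from the original binary decomposition, the sub-walk intersects at most $2^{i+1}$ members of $(S_1,\ldots,S_t)$, as claimed. The main obstacle is the bookkeeping required to transfer the block property from the finite $S$ to the infinite $\Delta(S)$, but this reduces cleanly to the divisibility $2^i \mid t$ guaranteed by the weight relaxation; once that is in place, the counting step is immediate and the bound is tight (attained when $a$ is at the end of the first member of $\tilde{B}_J$ and $b$ is at the start of the last member of $\tilde{B}_{J+1}$).
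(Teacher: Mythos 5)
Your argument is correct and rests on the same fact as the paper's proof --- that Definition~\ref{def:binary-walk-decomp} forces exactly one visit to $v$ in every aligned group of $2^i$ consecutive members of the decomposition --- so the two proofs are essentially the same. The only organizational difference is that you tile $\Delta(S)$ with aligned blocks of $2^i$ members and localize consecutive visits to consecutive blocks in a single step, whereas the paper splits into the cases of the two visits lying in the same copy of $S$ or in consecutive copies and bounds each piece by $2^i$ members; your explicit check that $2^i$ divides $t=2\rho_{G'}$ (so the blocks tile $\Delta(S)$ consistently) is a detail the paper uses implicitly.
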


\begin{proof}
Since $\Delta(S)$ is constructed by the concatenation of an infinite number of copies of walk $S$, the two following cases can arise for indices $a$ and $b$: (i) indices $a$ and $b$ refer to two visits of vertices in the same copy of $S$, or (ii) indices $a$ and $b$ refer to two visits such that they are in two consecutive copies of $S$. Note that since by Definition \ref{def:binary-walk-decomp} every vertex is visited at least once in a binary walk $S$, no other case is possible. For case (i), by Definition \ref{def:binary-walk-decomp} we have that $\Delta(S)(a,b)$ intersects $2^i+1$ members of {$(S_1,S_2,\ldots,S_t)$}. For case (ii), let $a'\ge a$ be the maximum index of a vertex in $\Delta(S)$ such that visits $a$ and $a'$ are in the same copy of $S$. Similarly, let $b'\le b$ be the minimum index of a vertex in $\Delta(S)$ such that visits $b'$ and $b$ are in the same copy of $S$. Since there is one visit to $v$ in both $\Delta(S)(a,a')$ and $\Delta(S)(b',b)$, each of these two sub-walks intersects at most $2^i$ members of $(S_1,S_2,\ldots,S_t)$. Consequently, $\Delta(S)(a,b)$ intersects at most $2^{i+1}$ members of $(S_1,S_2,\ldots,S_t)$.
\end{proof}

\begin{lemma}
\label{lem:chopchop}
Let {$G'$} be a graph with relaxed weights. There is a binary walk $S$ in  {$G'$} with cost at most {$2.5\times\opt_{G'}$} and size bounded by {$2n\rho_{G'}$}, where $n=|V(G')|$.
\end{lemma}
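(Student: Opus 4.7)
The plan is to start from an optimal infinite walk $W^*$ in $G'$, whose existence is guaranteed by Lemma~\ref{lem:existsOPT} and which has cost $\opt_{G'}$, and extract a binary decomposition from a well-chosen finite sub-walk of $W^*$. Since $v\in V_i$ has weight $1/2^i$, its latency in $W^*$ satisfies $L(W^*,v)\le 2^i\opt_{G'}$; equivalently, in any sub-walk of $W^*$ of length $2^i\opt_{G'}$ every vertex in $V_i$ must appear at least once. This is the structural fact that matches the binary decomposition: $v\in V_i$ should appear exactly once in every window of $2^i$ consecutive pieces.

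Building on this, I would pick a starting index of $W^*$ and place $t+1=2\rho_{G'}+1$ checkpoints along $W^*$ so that the sub-walk between consecutive checkpoints has length at most some constant times $\opt_{G'}$ (tuned to produce the $2.5$ factor). Let $S_l'$ be the vertex sequence of $W^*$ strictly between checkpoints $l-1$ and $l$. The candidate pieces $S_l$ are then obtained from $S_l'$ by two cleanup operations: (a) delete duplicate vertices so each $S_l$ has at most one copy of every vertex (which only shortens it, because $G'$ is its own metric closure and shortcuts satisfy the triangle inequality), and (b) for each $v\in V_i$, select exactly one of the windows of $2^i$ consecutive pieces where $v$ is supposed to appear and, if $v$ is missing from the chosen piece, insert a shortest-path detour to $v$ (charging the detour against the available slack guaranteed by Corollary~\ref{cor:maxdisc} and Lemma~\ref{lem:maxdishalfc}). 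This produces a binary decomposition $(S_1,\ldots,S_t)$ satisfying Definition~\ref{def:binary-walk-decomp}.

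The size bound is immediate: each $S_l$ contains at most $n$ vertices, so $|S|\le tn=2n\rho_{G'}$. For the cost bound, I would apply Lemma~\ref{lem:atmost_2^{i+1}}: any sub-walk of $\Delta(S)$ between two consecutive visits to $v\in V_i$ intersects at most $2^{i+1}$ members of the decomposition. Combining this with the uniform per-piece length bound of at most $1.25\opt_{G'}$ (the precise constant chosen so that the cleanup step in (b) absorbs the extra cost from detours), the total latency of $v$ is bounded by $2^{i+1}\cdot 1.25\opt_{G'}$, and multiplying by $\phi(v)=1/2^i$ gives a weighted latency of at most $2.5\opt_{G'}$. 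Taking the maximum over all $v$ yields $C(S)\le 2.5\opt_{G'}$.

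The main obstacle is the balancing in step~(b): ensuring the exact once appearance pattern required by Definition~\ref{def:binary-walk-decomp} while simultaneously controlling the length of every piece. Getting the $1.25$ length bound per piece (rather than a weaker bound like $2\opt_{G'}$) is delicate, since detours inserted to satisfy a missing appearance of $v\in V_i$ in one piece must be paid for by shortening elsewhere; I expect this to reduce to a pigeonhole/amortization argument that the total slack produced by shortcut removal in (a) dominates the total detour length needed in (b). Once this accounting is in place, the cost estimate via Lemma~\ref{lem:atmost_2^{i+1}} is essentially mechanical.
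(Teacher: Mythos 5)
Your overall plan---cut an optimal walk $M$ into $t=2\rho_{G'}$ pieces of controlled length, clean the pieces up into a binary decomposition, and then invoke Lemma~\ref{lem:atmost_2^{i+1}}---matches the paper's strategy, but there is a genuine gap at the cleanup step, and it is exactly the step you flag as ``delicate.'' You propose to \emph{insert} shortest-path detours whenever some $v\in V_i$ is missing from its designated window, and to pay for these detours by an unproven amortization against the slack from shortcut removal. The key observation you are missing is that no insertion is ever necessary: if the pieces are chosen as maximal sub-walks of $M$ of length at most $c=\opt_{G'}$ (so that any $2^i$ consecutive pieces span a sub-walk of $M$ of length at most $2^i c$), then optimality of $M$ already forces every $v\in V_i$ to appear \emph{at least once} in every window of $2^i$ consecutive pieces---otherwise $C(M,v)>c$. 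Hence the only operation needed is to delete all but one occurrence per window, which can only shorten pieces (by the metric property), and each piece keeps length at most $c$. Your proposed per-piece bound of $1.25\,\opt_{G'}$ is never established, and without the ``no insertions needed'' observation the amortization you defer to is not obviously workable, so the cost bound does not go through as written.

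A second, smaller omission: the constant $2.5$ in the paper does not come from a $1.25$ per-piece length. It comes from $2^{i+1}$ pieces of length at most $c$ each, \emph{plus} a single wrap-around edge from the end of $S_t$ back to the start of $S_1$, which is bounded by $0.5\,c$ via Lemma~\ref{lem:maxdishalfc}---and this bound requires choosing the starting vertex $m_1$ of $M$ to have weight $1$. This gives $l\bigl(\Delta(S)(a,b)\bigr)<2^{i+1}c+0.5c\le 2^i\cdot 2.5c$. Your proposal never arranges for the pieces to start or wrap at a weight-one vertex, so even after fixing the insertion issue you would need this additional ingredient to control the seam between consecutive copies of $S$ in $\Delta(S)$.
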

\begin{proof}
Let $M=(m_1,m_2,\ldots)$ be an optimal infinite walk in $G'$ with cost $c=\opt_{G'}$. Since $M$ is an infinite walk, we can begin the walk at any vertex $m_i$ and obtain the cost $c$.  Therefore, we assume without loss of generality that $m_1$ is such that {$\phi(m_1)=1$}. Based on $M$, we construct a binary walk $S$ such that the cost of $\Delta(S)$ is at most $2.5 c$ as follows. Let $a_0$ be $0$ and $S_i$ be the sub-walk $M(a_{i}+1,a_{i+1})$ such that $a_{i+1}$ is the maximal index satisfying {$l\big(M(1,a_{i+1})\big)\leq ic$}. Therefore, we have $l\big(M(1,a_i+1)\big)> (i-1)c$ and hence $l\big(M(a_i+1,a_{i+1})\big)\le c$. Consequently, each $S_i$ is a walk of length at most $c$ such that the union of $S_i$'s partitions $M$.

Now we modify the walks $S_1,S_2,\ldots$ by omitting some of the instances of vertices in them. Let $V_i$ be the set of vertices with weight $1/2^{i}$ in  {$G'$}. Let  {$t = 2 \rho_{G'}$} as in Definition \ref{def:binary-walk-decomp}. For every vertex $u\in V_i$ and any number $0\leq j <t/2^i$, omit all but one of the instances of $u$ that appear in  {$S_{j2^i+1},S_{j2^i+2},\ldots,S_{(j+1)2^i}$}. There exists at least one such instance; otherwise a vertex $u$ with weight $1/2^i$ exists that is not visited in an interval of length greater than $c\times2^i$, implying  {$C(M,u) > c$}. Let $S_1',S_2',\ldots$ be the result of this modification, note that {$l(S_i')\le l(S_i)$} for each $1\le i\le t$.

Let $S$ be $[S_1',S_2',\ldots, S_t']$. We claim that $\Delta(S)$ has cost at most $2.5 c$. For $u\in V_i$ a vertex of  {$G'$}, we know that $u$ appears exactly once in each $2^i$ consecutive $S'_l$'s, i.e., {$[S'_{j2^i+1},S'_{j2^i+1},\ldots,S'_{(j+1)2^i}]$} for any  {$0\le j<t/2^i$}. Therefore by Definition \ref{def:binary-walk-decomp}, $S$ is a binary walk. By Lemma \ref{lem:atmost_2^{i+1}}, two consecutive visits to a vertex $u\in V_i$ in $\Delta(S)$ intersects at most $2^{i+1}$ members of {$(S_1,S_2,\ldots,S_t)$}.

By the construction, we have that for any $j,k$ with  {$0< j \leq k$}, walk $[S_j',S_{j+1}',\ldots,S_k']$ has length at most $(k-j+1) c$. Also since  {$\phi(m_1)=1$}, by Lemma \ref{lem:maxdishalfc} we know that for any $j,k$ with $0 \leq k \leq j$, walk $[S_j',S_{j+1}',\ldots S_t',S_1',S_2',\ldots,S_k']$ has length at most $((t-j+1)+0.5+k) c$. Therefore, if $a$ and $b$ are indices of two consecutive visits to $u\in V_i$ in $\Delta(S)$, then
\[
l\big(\Delta(S)(a,b)\big) < 2^{i+1}c+0.5c \leq 2^i \times (2.5 c).
\]
Consequently, each vertex $u\in V(G')$ has cost {$C(\Delta(S),u)\le 2.5c$} and hence $C\big(\Delta(S)\big)\le 2.5c$. Moreover, since $S$ is a binary walk, its size is bounded by {$2n\rho_{G'}$}.
\end{proof}

\begin{theorem}
\label{the:chopchop}
In any graph $G$ with $n=|V(G)|$, there exists a walk $W$ of size $O(n^2)$ such that the cost of $\Delta(W)$ is less than or equal to  {$6\times \opt_G$.}
\end{theorem}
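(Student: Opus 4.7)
The strategy is to combine the weight relaxation of Lemma~\ref{lem:relaxing} with the binary-walk construction of Lemma~\ref{lem:chopchop}. The main technical issue is that the size bound $2n\rho_{G'}$ from Lemma~\ref{lem:chopchop} is $O(n^2)$ only when $\rho_{G'}\le n$, so vertices of very low weight must be handled separately.

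First, I would apply Lemma~\ref{lem:relaxing} to replace $G$ by its relaxation $G'$. Since a walk of cost $c'$ in $G'$ has cost strictly less than $2c'$ in $G$, and $\opt_{G'}\le \opt_G$, it suffices to produce a walk in $G'$ of size $O(n^2)$ and cost at most $3\opt_{G'}$; such a walk then has cost less than $6\opt_{G'}\le 6\opt_G$ in $G$. In the easy regime $\rho_{G'}\le n$, Lemma~\ref{lem:chopchop} immediately yields a binary walk of size at most $2n^2$ and cost at most $2.5\opt_{G'}$, and there is nothing more to do.

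In the hard regime $\rho_{G'}>n$, I would partition $V(G')$ into a heavy set $H$ of vertices with relaxed weight at least $\Theta(1/n)$ and a light set $L$ of the remaining vertices. The induced subgraph $G'[H]$ has $\rho_{G'[H]}=O(n)$, so applying Lemma~\ref{lem:chopchop} to it produces a binary walk $S_H$ of size $O(n^2)$ and cost at most $2.5\opt_{G'[H]}\le 2.5\opt_{G'}$. The inequality $\opt_{G'[H]}\le\opt_{G'}$ follows because restricting any walk of $G'$ to $H$ gives a valid walk in the metric closure $G'[H]$ of no greater cost. To cover $L$, I would extend $S_H$ by inserting, for each light vertex $v\in L$, a detour of the form $u\to s\to v\to s\to u$ between two adjacent chunks of $S_H$, where $s$ is a fixed weight-$1$ vertex. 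By Lemma~\ref{lem:maxdishalfc} and the triangle inequality, such a detour has length at most $2\opt_{G'}$.

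Distributing the $|L|\le n$ detours uniformly among the $\Theta(n)$ chunks of $S_H$ ensures that no window of $2^{i+1}$ consecutive chunks contains more than a controlled fraction of detours. The resulting walk $W$ has size $O(n^2)$, and the weighted-latency bound of Lemma~\ref{lem:chopchop} applies to each heavy vertex with chunk lengths inflated by at most a small additive constant, yielding a bounded contribution for each heavy vertex. For each light vertex $v$ with $\phi(v)=O(1/n)$, the weighted latency is at most $\phi(v)\cdot l(W)=O(\opt_{G'})$. Calibrating the heavy/light threshold and the detour spacing yields cost at most $3\opt_{G'}$ in $G'$, hence less than $6\opt_G$ in $G$.

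The main obstacle is exactly this calibration in the hard regime: the detours needed to reach the light vertices inflate the chunk lengths and can therefore inflate the heavy-vertex latencies, while spreading them too sparsely leaves some light vertex unreached within a window of length $O(n\opt_{G'})$. The slack between the bound $2.5\opt_{G'}$ from Lemma~\ref{lem:chopchop} and the target $3\opt_{G'}$ is what permits a uniform distribution of detours to satisfy both constraints, and this overhead accounts for the factor $6$ (rather than the $5$ one obtains in the easy regime).
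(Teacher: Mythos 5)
Your overall strategy---relax the weights, cut at a weight threshold of $\Theta(1/n)$, run Lemma~\ref{lem:chopchop} on the heavy part, and re-insert the light vertices as detours anchored at a weight-$1$ vertex---is exactly the paper's, and the surrounding reductions ($\opt_{G'[H]}\le\opt_{G'}$ by restriction, the $O(n^2)$ size bookkeeping, the easy regime $\rho_{G'}\le n$) are fine. The gap is in the detour accounting, which you yourself flag as ``the main obstacle'' but do not close, and which your proposed calibration cannot close. Your detour $u\to s\to v\to s\to u$ inserted at a chunk boundary adds up to $l(us)+2l(sv)+l(su)\le 2\opt_{G'}$ of length by Lemma~\ref{lem:maxdishalfc}. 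Now consider a weight-$1$ vertex $w$: it occurs once in every chunk, so two consecutive visits to $w$ span only two chunks (Lemma~\ref{lem:atmost_2^{i+1}} with $i=0$), and whenever a detour lies between those two visits the weighted latency of $w$ jumps by the full $2\opt_{G'}$. No uniform spreading or re-thresholding can dilute this below one whole detour in a two-chunk window, and some such window must contain a detour as soon as $L\neq\emptyset$ (every detour sits between two chunks, both of which contain a visit to $w$). This gives $2.5\opt_{G'}+2\opt_{G'}=4.5\opt_{G'}$ in $G'$, hence only $9\opt_G$ after un-relaxing via Lemma~\ref{lem:relaxing}(i), not the claimed $3\opt_{G'}$ and $6\opt_G$.

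The missing idea is how the paper builds the detours: each light vertex $u$ is spliced into an \emph{existing} occurrence of the weight-$1$ vertex $v$ (that occurrence is duplicated and $u$ is inserted between the two copies), so a detour costs only $2\,l(vu)\le\opt_G$ and requires no travel to reach the anchor; moreover only every other occurrence of $v$ is used, so any window of $2^{i+1}$ chunks contains at most $2^i$ detours and contributes at most $\opt_G$ of weighted latency at every weight level simultaneously. Combined with absorbing the $0.5\,\opt_{G'}$ wrap-around term of Lemma~\ref{lem:chopchop} into the detour budget, this yields $2\opt_{G'}+\opt_G\le 3\opt_G$ in $G'$ and hence $6\opt_G$ in $G$. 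If you replace your chunk-boundary detours with this splice-at-$v$ insertion (and its alternating placement), the rest of your argument goes through.
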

\begin{proof}
Let $G'$ be the relaxation of $G$ and let $V_i$ be the set of vertices  {$u \in V(G')$} of weight $1/2^i$. The set of vertices in $V(G')$ with weights less than $1/2^{\lfloor \log n \rfloor +1}$ is denoted by $U=\{u_1,u_2,\ldots,u_{|U|}\}$. Let $G''$ be the graph obtained by removing vertices in $U$ from $G'$. Therefore, $\rho_{G''} \leq 2^{\lfloor \log n \rfloor +1} \leq 2n$. Note that $G''$ is also a metric graph. Let $S$ be a binary walk in $G''$ with cost at most $2.5\opt_{G''}$ as described in Lemma \ref{lem:chopchop}. Since $\rho_{G''} \leq 2n$, the size of $S$ is bounded by $2\rho_{G''}n \leq 4n^2$.

Now, we add the vertices in $U$ to $S$ in order to obtain a walk $W$ that covers all vertices of $G'$. Let $v\in V(G')$ be a vertex with  {$\phi(v)=1$}. Let {$(S_1,S_2,\ldots,S_t)$} be the binary decomposition of $S$, where $t=2^{\lfloor \log n \rfloor +2}$. Let $v_i$ be the $i$-th instance of $v$ in $S$. Note that $t>2n$, thus $v$ appears at least $2n$ times in $S$ (see Definition \ref{def:binary-walk-decomp} for $i=0$). For each $1 \leq i \leq |U|$, modify $S$ by duplicating $v_{2i}$ and inserting an instance of $u_i$ between the two copies of $v_{2i}$. Since $|U|<n$, this operation is possible. Let $W$ be the resulting walk. Note that size of $W$ is in $O(n^2)$. We claim that the cost of $\Delta(W)$ is at most $2\opt_{G'}+\opt_G$.

Let $w\in V_i$ be a vertex in $G'$ such that {$\phi(w)\ge1/2^{\lfloor \log n \rfloor +1}$}. Let $a$ and $b$ be the indices of two consecutive visits to $w$ in $\Delta(S)$. Let $a'$ and $b'$ be the indices of the corresponding visits to $w$ in $\Delta(W)$. By Lemma \ref{lem:atmost_2^{i+1}}, sub-walk $\Delta(S)(a,b)$ intersects at most $2^{i+1}$ members of  {$(S_1,S_2,\ldots,S_t)$}. At least half of these $2^{i+1}$ walks in $\Delta(S)$ are the same in $\Delta(W)$, i.e., at least half of these walks have not been altered by duplication of unit weight vertices or insertion of members of $U$ during the construction of $W$ from $S$. Therefore, at most $2^i$ vertices of $U$ lie between indices $a'$ and $b'$ of $\Delta(W)$. 

Furthermore, we inserted the visits to the vertices of $U$ at visits to $v$ with  {$\phi(v)=1$}. Therefore, by Lemma \ref{lem:maxdishalfc}, each of these new detours made to visit a member of $U$ has length at most $2(\opt_G/2)=\opt_G$. Also, by Lemma \ref{lem:chopchop}, we already know that {$C(\Delta(S),u)\leq 2.5\opt_{G'}$}. Hence, we have
\begin{equation}
 {C(\Delta(W),u)< 2.5\opt_{G'}+\opt_G. \label{equ:naive}}
\end{equation}
Note that the extra $0.5$ factor in Lemma \ref{lem:chopchop} is due to the distance of the last vertex of $S_t$ to the first vertex of $S_1$. However, this extra cost can be treated as one of the detours to vertices of $U$, as we avoided adding one of these detours to $S_1$ and $S_t$. This means that we have already accounted for this extra cost in the second part of the righthand side of inequality \ref{equ:naive}.  {Consequently, we have $C(\Delta(W),w)< 2\opt_{G'}+\opt_G$. By Lemma \ref{lem:relaxing}(ii), we have $\opt_{G'}\leq \opt_G$, therefore $C(\Delta(W),w)< 3\opt_G$. Also, by Lemma \ref{lem:relaxing}(i), the cost of $\Delta(W)$ in $G$ would be less than $6\opt_G$. Consequently, we have $C\big(\Delta(W)\big)<6\opt_G$.}
\end{proof}

In the following, we show that Theorem \ref{the:chopchop} is almost tight with respect to the size of the output.
\begin{lemma}
\label{lem:smalloutputfail}
Any algorithm for the min-max latency walk problem with guaranteed output size in $O(n^2/k)$ has approximation factor in $\Omega(k)$.
\end{lemma}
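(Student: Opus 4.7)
The plan is to exhibit, for each $k$, a graph on $n$ vertices where every walk of size at most $cn^2/k$ pays a cost $\Omega(k)\cdot\opt$, so any algorithm with an output-size guarantee of $O(n^2/k)$ must be $\Omega(k)$-approximate on such an instance. I would reuse the three-level construction from Lemma~\ref{lem:nonPoly}: take $n=3s$ vertices partitioned into $V_1,V_2,V_3$ of equal size $s$, put weight $\phi_i=1/(s+1)^i$ on each $v\in V_i$, and attach every vertex of $V_1$ to every other vertex with a unit-length edge (so in the metric closure all remaining distances are $2$). Lemma~\ref{lem:nonPoly} already hands us that $\opt_G\le 1$ and that any walk achieving cost $1$ has kernel of size $\Omega(s^2)=\Omega(n^2)$, which matches Theorem~\ref{the:chopchop} up to constants.

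The heart of the proof will be the following trade-off: for every walk $W$ on $G$ with kernel size $m$ and cost $C=C(\Delta(W))$,
\[
 m\cdot C \;\ge\; c\,s^2 \;=\; \Omega(n^2).
\]
Granting this, if $m\le c'n^2/k$ then $C\ge cs^2/m \ge (c/c')\cdot k$, and since $\opt_G\le 1$ the approximation ratio $C/\opt_G$ is $\Omega(k)$, exactly as required. Note that at the optimum itself, $m=\Omega(n^2)$ and $C=1$, so the bound is tight, confirming that the $O(n^2)$ size of Theorem~\ref{the:chopchop} cannot be beaten by more than a constant factor without paying in cost.

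To establish $mC=\Omega(s^2)$, the plan is to generalize the inductive ``at most one $V_3$ per sub-walk of size $s+1$'' argument used by Lemma~\ref{lem:nonPoly}. For cost $C=1$ that argument follows because the $V_1$-latency constraint $\le s+1$ forces every $V_1$-vertex to appear in every sub-walk of size $s+1$, leaving only one free slot. For cost $C$, one expects the analogous bound ``at most $O(C)$ visits to $V_3$ per sub-walk of size $s+1$''. Partitioning the kernel of $W$ into $\lceil m/(s+1)\rceil$ such sub-walks and using that all $s$ vertices of $V_3$ must be visited then gives $m\cdot C/(s+1)\ge s$, i.e.\ $mC\ge s(s+1)$, which is what we need.

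The main obstacle is proving that $V_3$-visit-density bound for $C>1$, since for $C>1$ a sub-walk of size $s+1$ is no longer forced to contain every $V_1$-vertex. I expect the right argument to proceed by (i) taking a length-$C(s+1)$ window around each sub-walk, in which every $V_1$-vertex is forced to appear by the $V_1$-latency constraint, (ii) averaging over the $C$ translates of this window to conclude that an $(s+1)$-sized sub-walk must still contain at least $s/C-O(1)$ distinct $V_1$ vertices on average, and then (iii) pigeonholing over the kernel to promote the average bound to a worst-case one (possibly after replicating each $V_1$-vertex a constant number of times in the construction to make the averaging exact). Combined with the metric-closure fact that consecutive distinct $V_3$-visits cost length $2$ each, this should yield the required $O(C)$ cap on $V_3$-visits per $(s+1)$-sized window and close the argument.
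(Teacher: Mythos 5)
There is a genuine gap, and it is fatal to the chosen construction rather than merely to a missing lemma. Your whole argument rests on the trade-off $m\cdot C(\Delta(W))\ge c\,s^2$ for the three-level graph of Lemma~\ref{lem:nonPoly}, but that inequality is false for that graph. Take $W$ to be a single tour through all $n=3s$ vertices: its kernel has size $m=3s$, every edge in the metric closure has length at most $2$, so every vertex has latency at most $6s$, and since the \emph{largest} vertex weight in that construction is only $1/(s+1)$, the cost is $C(\Delta(W))\le 6s/(s+1)<6$. A counting argument (each vertex $u\in V_1$ must be visited every $C(s+1)$ units of length, and each step has length at least $1$) shows $\opt_G\ge s/(s+1)$, so this tour is a constant-factor approximation with kernel size $O(n)$, giving $mC=O(n)$ rather than $\Omega(n^2)$. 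The moral is that Lemma~\ref{lem:nonPoly}'s construction only forces \emph{exact} optima to have huge kernels; it admits tiny kernels that are constant-factor approximations, so it cannot witness a size-versus-approximation trade-off. Consistently with this, the ``density'' step you flag as the main obstacle cannot be completed, because the statement it is meant to prove is not true.

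The paper uses a structurally different instance precisely to avoid this: $n/2$ \emph{heavy} vertices of weight $1$ packed at mutual distance $\varepsilon$, and $n/2$ \emph{light} vertices of weight $\varepsilon$ at distance $1$ from the heavy ones (hence at distance $2$ from each other). Two features matter and are both absent from your graph: the frequently-visited class has weight exactly $1$, so its latency \emph{is} its cost, and visiting all of it is essentially free (length $\varepsilon n/2$), while every detour to a light vertex costs length about $2$. Then $\opt_G\approx 2$, any near-optimal walk must re-visit every heavy vertex between any two light visits (kernel size $\Omega(n^2)$), and shrinking the kernel to $O(n^2/k)$ forces some heavy vertex to wait through $\Omega(k)$ light detours, i.e.\ cost $\Omega(k)$. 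If you want to keep your trade-off formulation $mC=\Omega(n^2)$, it does hold for \emph{this} graph and would be a clean way to present the lemma, but you must switch to a construction in which the high-frequency vertices carry weight $1$ and the low-frequency vertices are genuinely expensive to reach.
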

\begin{proof}
Let $\varepsilon$ be a very small positive number and let us construct a graph $G$ as follows.
\begin{itemize}
\item There are $n/2$ vertices of weight $1$, called heavy vertices.
\item There are $n/2$ vertices of weight $\varepsilon$, called light vertices.
\item Any two heavy vertices are connected to each other by an edge of length $\varepsilon$.
\item There is an edge of length $1$ connecting any light vertex to any heavy vertex.
\end{itemize}

Any minimum cost infinite walk in $G$ visits all heavy vertices between visits to any two light vertices. This means that each heavy vertex is repeated $n/2$ times in any walk that expands into a minimum cost infinite walk. Therefore, any optimum solution has size in $\Omega(n^2)$ and its cost is upper-bounded by  $2+\varepsilon\times O(n)$. The value of $\varepsilon$ can be chosen small enough so that the cost of optimal walk is close to 2.

To reduce the size of the output walk by a factor of $k$, we need to visit at least $k$ light vertices between two consecutive visits to a heavy vertex $v$. This means that a walk of size smaller than $\frac{n^2}{4k}$ has cost at least $2k$, which is $k$ times the optimal cost. Therefore, any solution for the min-max latency walk problem in $G$ with size in $O(n^2/k)$ has approximation factor in $\Omega(k)$.
\end{proof}

\begin{figure}[t]
\centering
\includegraphics[scale=.4]{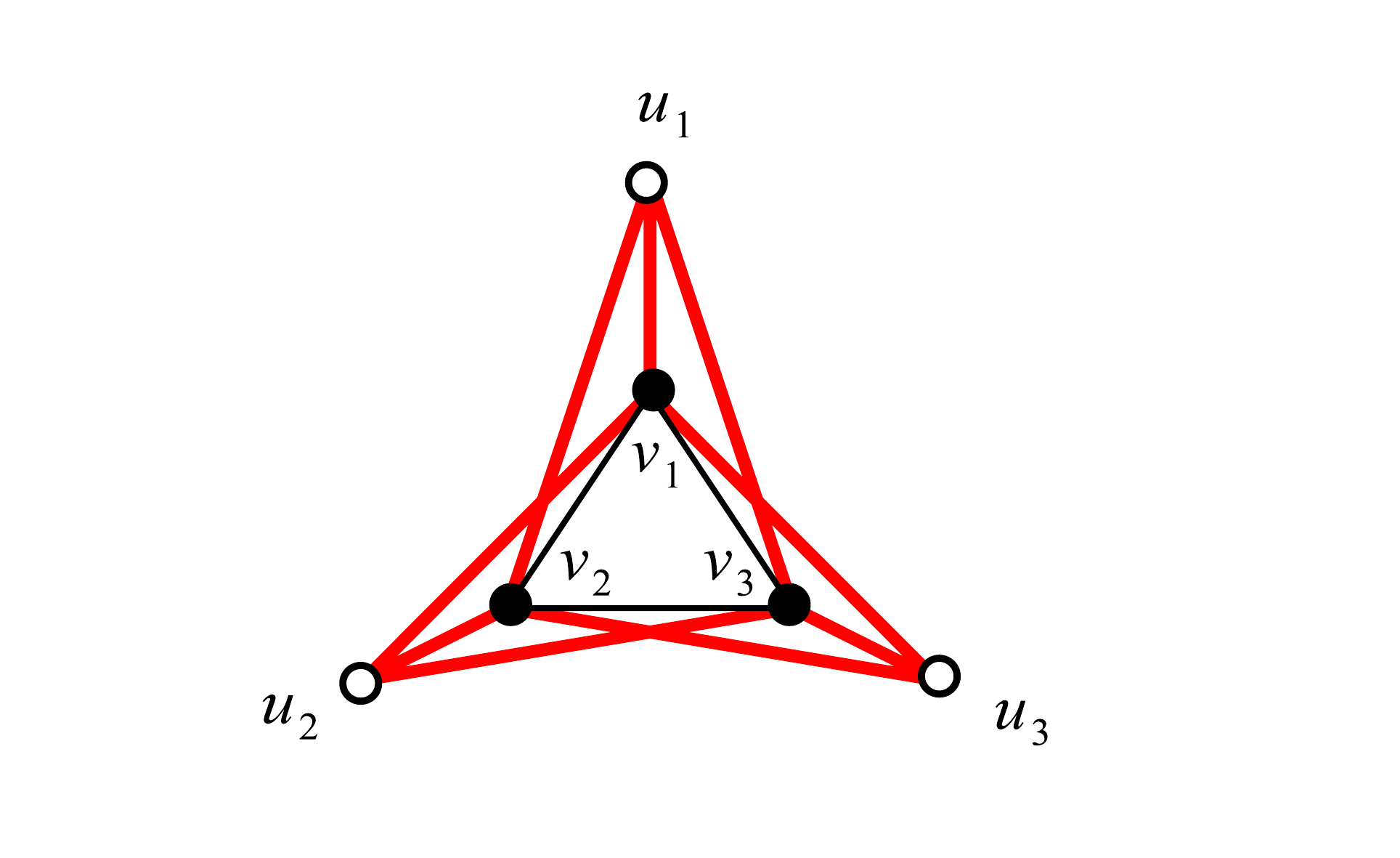}
%
%
\caption{Graph $G$ as in Lemma \ref{lem:smalloutputfail} with $n=6$. Vertices  $v_1,v_2,v_3$ are the heavy vertices and $u_1,u_2,u_3$ are the light vertices. The red thick edges have unit weights and the weight of the black edges is small. We have $C\big(\Delta(W)\big)=2+6\varepsilon$ for $W=(v_1,v_2,v_3,u_1,v_1,v_2,v_3,u_2,v_1,v_2,v_3,u_3)$ with size 12. On the other hand, $C\big(\Delta(W')\big)=4+2\varepsilon$, where $W'=(v_1,u_1,v_2,u_2,v_3,v_1,u_3,v_2,v_3)$ has size 9. Here, we have $k=2$.}
\label{fig:heavy-light}       
\end{figure}

Lemma \ref{lem:smalloutputfail} directly gives that there is no constant factor approximation algorithm with guaranteed output size in $o(n^2)$. Note that this implies that Theorem \ref{the:chopchop} is tight in the sense that the size of the constructed kernel can  be reduced by at most a constant factor.

\section{Approximation Algorithms for the Min-Max Latency Walk Problem}
\label{sec:approx_algs}

In this section, we present two polynomial time approximation algorithms for the min-max latency walk problem. The approximation factor of the first algorithm is a function of the ratio of the maximum weight to the minimum weight among vertices, i.e., $\rho_G$. The approximation ratio of the second algorithm, however, relies solely on the number of vertices in the input graph.

\subsection{An $O(\log \rho_G)$-Approximation Algorithm}

%
A crucial requirement for our algorithms is a useful property regarding binary walks. This property is discussed in the following lemma.

\begin{lemma}(Binary Property)
\label{lem:perfectdecomposition}
Let {$G'$} be a graph with relaxed weights. Let $S$ be a binary walk in  {$G'$} with the binary decomposition  {$(S_1,S_2,\ldots,S_t)$}. Assume that
\begin{enumerate}
\item for some $c\in \mathbb{R_+}$, {$\max_{1\leq i\leq t}l(S_i) \leq c$}, and
\item each $S_i$ begins in a vertex $v\in V(G')$, where $\phi(v)=1$.
\end{enumerate}
Then the cost of $\Delta(S)$ in $G'$ is  {at most $2c+\opt_{G'}$.}
\end{lemma}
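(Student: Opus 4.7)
My plan is to bound the weighted latency $C(\Delta(S),v)$ for each vertex $v\in V(G')$ individually and then take the maximum. Fix $v\in V_i$, so $\phi(v)=1/2^i$, and let $a,b$ be the indices of any two consecutive visits to $v$ in $\Delta(S)$. It suffices to show $\phi(v)\cdot l(\Delta(S)(a,b))\leq 2c+\opt_{G'}$.

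The central tool is Lemma~\ref{lem:atmost_2^{i+1}}, which says that $\Delta(S)(a,b)$ intersects at most $2^{i+1}$ members of the binary decomposition $(S_1,\ldots,S_t)$. I would split the length $l(\Delta(S)(a,b))$ into two contributions: first, the lengths of the (partial or full) pieces of these intersected members; and second, the ``inter-member'' boundary edges that $\Delta(S)$ traverses each time it passes from the final vertex of some $S_j$ to the initial vertex of the next member (either $S_{j+1}$ within the same copy of $S$, or $S_1$ of the next copy in case $\Delta(S)(a,b)$ wraps across copies).

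For the first contribution, hypothesis~(1) bounds each intersected $S_j$'s length by $c$, giving at most $2^{i+1}c$ in total. For the second, hypothesis~(2) is the key: every $S_j$ begins at a weight-$1$ vertex, so each inter-member edge ends at a weight-$1$ vertex, and Lemma~\ref{lem:maxdishalfc} bounds its length by $\opt_{G'}/2$. Since there are at most $2^{i+1}-1<2^{i+1}$ such boundaries inside the sub-walk, they contribute at most $2^{i+1}\cdot\opt_{G'}/2=2^i\opt_{G'}$. Combining the two contributions, $l(\Delta(S)(a,b))\leq 2^{i+1}c+2^i\opt_{G'}$, and multiplying by $\phi(v)=1/2^i$ produces $C(\Delta(S),v)\leq 2c+\opt_{G'}$. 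Maximizing over $v\in V(G')$ yields the claim.

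The main subtlety I anticipate is simply the bookkeeping for the boundary edges, especially in the wrap-around case where $a$ and $b$ lie in different copies of $S$: one must observe that the initial vertex of $S_1$ in the next copy is still a weight-$1$ vertex, so the same $\opt_{G'}/2$ bound from Lemma~\ref{lem:maxdishalfc} applies uniformly to every boundary edge, whether the transition is $S_j\to S_{j+1}$ or $S_t\to S_1$. Once that is in place, the factors of $2^{i+1}$ cancel cleanly with $\phi(v)=1/2^i$ in both the member and the jump contributions, delivering the $2c+\opt_{G'}$ bound exactly.
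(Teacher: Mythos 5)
Your proposal is correct and follows essentially the same route as the paper's proof: both invoke Lemma~\ref{lem:atmost_2^{i+1}} to bound the number of intersected members by $2^{i+1}$, use hypothesis (i) to bound each member's length by $c$ and hypothesis (ii) together with Lemma~\ref{lem:maxdishalfc} to bound each inter-member edge by $\opt_{G'}/2$, arriving at $l\bigl(\Delta(S)(a,b)\bigr)\leq 2^{i+1}(c+\opt_{G'}/2)$ and hence $C(\Delta(S),v)\leq 2c+\opt_{G'}$. Your explicit handling of the wrap-around boundary $S_t\to S_1$ is a slightly more careful piece of bookkeeping than the paper spells out, but the argument is the same.
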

\begin{proof}
Let $V_i$ be the set of vertices of weight $1/2^i$ in {$V(G')$}. Let $u\in V_i$ be a vertex of  {$G'$}. By Lemma \ref{lem:atmost_2^{i+1}}, for any $a$ and $b$ that are the indices of two consecutive visits to $u$ in $\Delta(S)$, we have that
$\Delta(S)(a,b)$ intersects at most $2^{i+1}$ members of {$(S_1,S_2,\ldots,S_t)$}. Also, by condition (ii) and Lemma \ref{lem:maxdishalfc} we know that the lengths of edges connecting two consecutive $S_l$'s in $\Delta(S)$ are at most $\opt_{G'}/2$.
Hence, we have {$l\big(\Delta(S)(a,b)\big) \leq 2^{i+1}(c+\opt_{G'}/2)$}. Consequently, each vertex $u\in V(G')$ has cost {$C(\Delta(S),u)\le 2c+\opt_{G'}$}. Hence, we have $C\big(\Delta(S)\big)\le 2c+\opt_{G'}$.
%
\end{proof}

Here, we define a tool that will be useful in our approximation algorithms. Let $\mathrm{Partition}(W, k)$ be a function that takes as input a walk $W$ and an integer $k$ and returns a set of $k$ walks $\{W_1,W_2,\ldots,W_k\}$ that partitions the vertices of $W$ such that {$l(W_i) \leq l(W)/k$}, for all $1\leq i\leq k$. It is easy to see this can always be computed in linear time by a single traversal of $W$.
Note that in case that $k>|W|$, $\mathrm{Partition}(W, k)$ returns a set  $\{W_1,W_2,\ldots,W_k\}$ in which $W_i=W(i,i)$, for $1\le i\le |W|$, and $W_i=\emptyset$ for $|W|<i\leq k$.
Also, recall from Section~\ref{sec:graph_background} that $\text{TSP-Path}(G)$ is a walk that visits each vertex in $G$ exactly once.

\begin{algorithm}
\caption{$\textsc{BrutePartitionAlg}(G)$}
\begin{algorithmic}[1]
    \STATE Let $V_i$ be the set of vertices of weight  {$\frac{1}{2^{i}} \le \phi(u) < \frac{1}{2^{i-1}}$ for $0 \leq i \leq \lceil\log_2 \rho_G\rceil$}
    \STATE Let $t$ be $2^{\lceil \log_2 \rho_G \rceil +1}$
    \STATE  {$S_1,S_2,\ldots,S_t \leftarrow \emptyset$}
    \FOR{$i = 0 \to \lceil \log_2 \rho_G \rceil$}
        \STATE  {$\{W_{i,0},\ldots,W_{i,2^i-1}\} \leftarrow \mathrm{Partition}(\text{TSP-Path}(G[V_i]), 2^i)$}
        \FOR{ {$k = 1 \to t$}}
            \STATE $S_k \leftarrow [S_k,W_{i,j_i}]$; where $j_i$ is $k$ $\mathrm{mod}$ $2^i$,
        \ENDFOR
    \ENDFOR
    \STATE $S\leftarrow[S_1,\ldots,S_t]$
    \RETURN $S$
\end{algorithmic}
\label{alg:const_logarithmic_partition_epsilon}
\end{algorithm}

Given a graph $G$, our first approximation algorithm, shown in Algorithm~\ref{alg:const_logarithmic_partition_epsilon}, is guaranteed to find a solution with cost within a factor of $O(\log\rho_G)$ of the optimal cost, where $\rho_G$ is ratio of the maximum vertex weight to the minimum vertex weight in $G$. The main idea in Algorithm~\ref{alg:const_logarithmic_partition_epsilon} is to construct a binary walk {$S=[S_1,S_2,\ldots,S_t]$} that satisfies the binary property discussed in Lemma \ref{lem:perfectdecomposition}. Recall that in Lemma \ref{lem:perfectdecomposition} it was shown that if {$\max_{1\leq k \leq t}l(S_k)<c$} for some $c$ and each $S_k$ begins in $v$ for some vertex $v\in V(G)$ with $\phi(v)=1$, then cost of $S$ is at most $2c+\opt_{G'}$. Here, we obtain a method that constructs a binary walk $S=[S_1,S_2,\ldots,S_t]$ such that for each $1\le k\le t$, $S_k$ starts with a unit weight vertex $v$ and the length of $S_k$ is in $O(\log \rho_G)$. An overview of Algorithm~\ref{alg:const_logarithmic_partition_epsilon} is as follows. First graph $G$ is relaxed. Then a TSP-Path is calculated for each set of vertices $V_i$, where $V_i$ denotes the set of all vertices with relaxed weight of $1/2^i$. The TSP-Path through vertices in $V_i$ is partitioned into $2^i$ walks. These walks are concatenated to create each $S_k$.  An illustration of Algorithm~\ref{alg:const_logarithmic_partition_epsilon} is given in Figure~\ref{fig:Brute_alg}.  The final walk $S$ is obtained by repeatedly performing the four walks shown in Figure~\ref{fig:1e} through to Figure~\ref{fig:1h}.   

\begin{theorem}
\label{tho:method1}
Given a graph $G$, Algorithm \ref{alg:const_logarithmic_partition_epsilon} constructs a walk $S$ of size in {$O(n\rho_G)$} such that $\Delta(S)$ is an $O(\log \rho_G)$-approximation for the min-max latency walk problem in $G$.
\end{theorem}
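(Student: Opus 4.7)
The plan is to verify that Algorithm \ref{alg:const_logarithmic_partition_epsilon} produces a binary walk satisfying the two hypotheses of Lemma \ref{lem:perfectdecomposition}, and then invoke that lemma. Specifically, I would show that $S = [S_1, \ldots, S_t]$ is a binary walk in the relaxed graph $G'$, each $S_k$ begins at a unit-weight vertex, and $\max_k l(S_k) = O(\log \rho_G)\,\opt_G$. Lemma \ref{lem:perfectdecomposition} then yields $C(\Delta(S)) \le 2\max_k l(S_k) + \opt_{G'} = O(\log \rho_G)\,\opt_{G'}$ in $G'$, and Lemma \ref{lem:relaxing} lifts this to the same asymptotic bound in $G$.

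The binary decomposition property (Definition \ref{def:binary-walk-decomp}) follows essentially by construction. Fix $v \in V_i$; since $\mathrm{Partition}$ splits $\text{TSP-Path}(G[V_i])$ into $2^i$ vertex-disjoint pieces, $v$ lies in exactly one piece, say $W_{i, j^*}$. The $i$th sub-piece of $S_k$ is $W_{i, k \bmod 2^i}$, so $v$ appears in $S_k$ iff $k \bmod 2^i = j^*$, which happens exactly once over any $2^i$ consecutive indices. For the unit-weight start, observe that $V_0$ is precisely the set of vertices of weight $1$, and since $j_0 = k \bmod 1 = 0$ always, every $S_k$ begins with the first vertex of $W_{0,0} = \text{TSP-Path}(G[V_0])$, which has weight $1$.

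The main obstacle is bounding $\max_k l(S_k)$, which reduces to bounding $l(\text{TSP-Path}(G[V_i]))$ for each weight class. The key observation is that in an optimal walk $W^*$ for $G$, each $v \in V_i$ satisfies $L(W^*, v) \le \opt_G / \phi(v) \le 2^i\,\opt_G$, so any sufficiently long sub-walk of $W^*$ of length $O(2^i\,\opt_G)$ must visit every vertex of $V_i$. Shortcutting such a sub-walk onto $V_i$ using the triangle inequality produces a tour of $G[V_i]$ of length $O(2^i\,\opt_G)$, which upper bounds $l(\text{TSP-Path}(G[V_i]))$ (up to the constant factor of whatever metric TSP approximation is used). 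Consequently each partition piece has length $l(W_{i,j}) \le l(\text{TSP-Path}(G[V_i]))/2^i = O(\opt_G)$. Summing over the $O(\log \rho_G)$ weight classes, plus at most $O(\log \rho_G)$ transition edges each of length at most $\opt_G$ by Corollary \ref{cor:maxdisc}, gives $l(S_k) = O(\log \rho_G)\,\opt_G$, which completes the cost analysis.

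For the size bound, each vertex of $V_i$ appears in $t/2^i$ of the $S_k$'s, since its home piece $W_{i,j^*}$ is inserted into $S_k$ exactly when $k \bmod 2^i = j^*$. Hence $|S| = \sum_i (t/2^i)|V_i| \le t \cdot n$, and with $t = 2^{\lceil \log_2 \rho_G \rceil + 1} \le 4\rho_G$, we conclude $|S| = O(n\,\rho_G)$ as required.
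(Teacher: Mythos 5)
Your proposal is correct and follows essentially the same route as the paper: establish the binary-decomposition property and the common unit-weight start vertex by construction, bound $l(\text{TSP-Path}(G[V_i]))$ by $O(2^i\,\opt)$ via shortcutting a window of the optimal walk (so each partition piece has length $O(\opt)$ and each $S_k$ has length $O(\log\rho_G)\,\opt)$, and then combine Lemma~\ref{lem:perfectdecomposition} with Lemma~\ref{lem:relaxing}. Your parenthetical about the constant factor from an approximate metric-TSP solver is a reasonable refinement the paper leaves implicit, but it does not change the argument.
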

\begin{proof}
Let $G'$ be the result of relaxing the weights of $G$. The set of vertices  {$u \in V(G')$} of weight $\frac{1}{2^{i}}$ is denoted by $V_i$. Let $v$ denote a vertex in $V_0$, i.e., $\phi(v)=1$, such that $v$ is the first vertex that appears in $\text{TSP-Path}(G[V_0])$. Let $t$ be the smallest power of two that is larger than $\rho_G$, i.e., $t= \rho_{G'}$. Algorithm \ref{alg:const_logarithmic_partition_epsilon} constructs a binary walk  {$S=[S_1,S_2,\ldots,S_t]$} such that all $S_k$'s begin in $v$ and {$\max_{1\leq k \leq t}l(S_k)<2( \log_2 \rho_{G'})\opt_{G'}$}.

Let us look at set $V_i$ for some $0\le i\le \log_2 \rho_{G'} $. Algorithm \ref{alg:const_logarithmic_partition_epsilon} constructs $2^i$ walks such that each vertex in $V_i$ appears in exactly one of these walks. Let $W_i$ be the output of $\text{TSP-Path}(G[V_i])$. In the proof of Theorem \ref{the:tsp-hardness}, we showed that in graphs with uniform weights the length of the TSP tour is the same as the length of the kernel of the best solution for the min-max latency walk problem. Also, note that in the optimal solution of the min-max latency walk problem in $G'$ the maximum length between two consecutive visits to a vertex in $V_i$ is $2^i\opt_{G'}$; otherwise the cost of the solution is greater than $\opt_{G'}$. Moreover, the length of a TSP tour is an upper-bound for the length of a TSP-Path. Since $W_i$ visits only vertices in $V_i$ and they all have the same weight, we have {$l(W_i)/2^i \leq \opt_{G'}$}.

As in line 5 of Algorithm \ref{alg:const_logarithmic_partition_epsilon}, the output of $\mathrm{Partition}(W_i,2^i)$ is $\{W_{i,0},W_{i,2},\ldots,W_{i,2^i-1}\}$. Hence, we have $l(W_{i,j})\le l(W_i)/2^i \le\opt_{G'}$ for all $0\le j< 2^i$. Observe that the output of $\mathrm{Partition}(W_0,1)$ is $\{W_{0,0}\}$, where $W_{0,0}=W_0$.
A walk $S=[S_1,S_2,\ldots,S_t]$ is constructed using each $W_{i,j}$ as follows.

\begin{figure}
\centering
\includegraphics[scale=0.30]{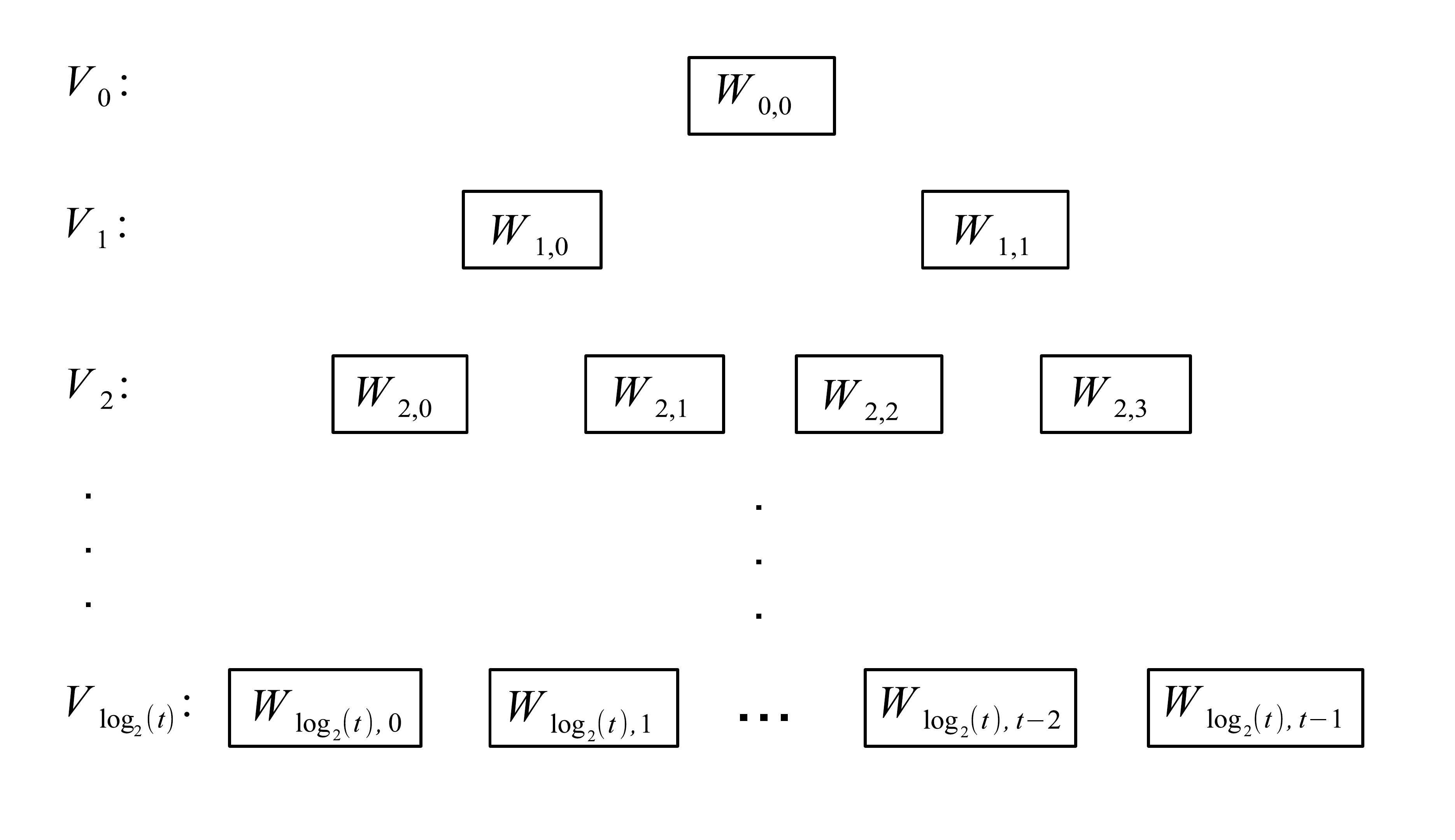}
\caption{The construction of $S_k$'s from walks $W_{i,j}$'s is depicted. For $1\le k\le t$, each $S_k$ is constructed by concatenation of the walks $W_{0,j_0},W_{1,j_1},\ldots,W_{\log_2 t,j_{\log_2 t}}$ while preserving order, where $k\equiv j_i \pmod{2^i}$ for each $0\le i\le \log_2 t$.}
\label{fig:logro}       
\end{figure}

In line 6 of Algorithm \ref{alg:const_logarithmic_partition_epsilon}, for each $1\le k\le t$, walk $S_k$ is constructed by concatenating the walks $W_{0,j_0},W_{1,j_1},\ldots,W_{\log_2 t,j_{\log_2 t}}$ while preserving order, where $k\equiv j_i \pmod{2^i}$ for each $0\le i\le \log_2 t$.
Observe that for $1\le k\le t$, each $S_k$ starts with $W_{0,0}$. Therefore, the first vertex of all $S_k$'s is vertex $v$, where $v$ is the first vertex that appears in $\text{TSP-Path}(G[V_0])$ (see Figure~\ref{fig:logro}).
Note that by Corollary \ref{cor:maxdisc}, the length of edges connecting two $W_{i,j}$'s is at most $\opt_{G'}$.
In the end, there will be $ \log_2 \rho_{G'} $ walks in each $S_k$, for $1\le k\le t$, each of length at most $\opt_{G'}$ connected to each other by edges with length at most $\opt_{G'}$. Therefore  {$\max_{1\leq k\le t}l(S_k) \leq 2 \log_2 \rho_{G'} \opt_{G'}$}.

By Lemma \ref{lem:perfectdecomposition}, walk $S$ has cost at most $4\log_2 \rho_{G'} \opt_{G'}+\opt_{G'}$ in $G'$. Using Lemma \ref{lem:relaxing}(ii), it follows that cost of $S$ in $G'$ is at most $4\log_2 \rho_{G'} \opt_{G}+\opt_{G}$. Hence, by Lemma \ref{lem:relaxing}(i), walk $S$ is an $(8\log_2 \rho_{G'}+2)$-factor approximate solution for $G$. Note that since $\rho_{G'}<2\rho_G$, we have that $8\log_2 \rho_{G'}+2$ is in $O(\log \rho_G)$. Therefore, $\Delta(S)$ is an $O(\log \rho_G)$-approximation for the min-max latency walk problem in $G$. Finally, since each vertex $u\in V_i$ appears exactly once in $W_i$ and by the construction of $S_k$'s, for $1\le k\le t$, we have that $S$ is a binary walk. Therefore, the size of $S$ is in $O(n\rho_G)$.
\end{proof}

\begin{figure}
\begin{subfigure}[b]{.21\linewidth}
\centering
\includegraphics[width=\linewidth]{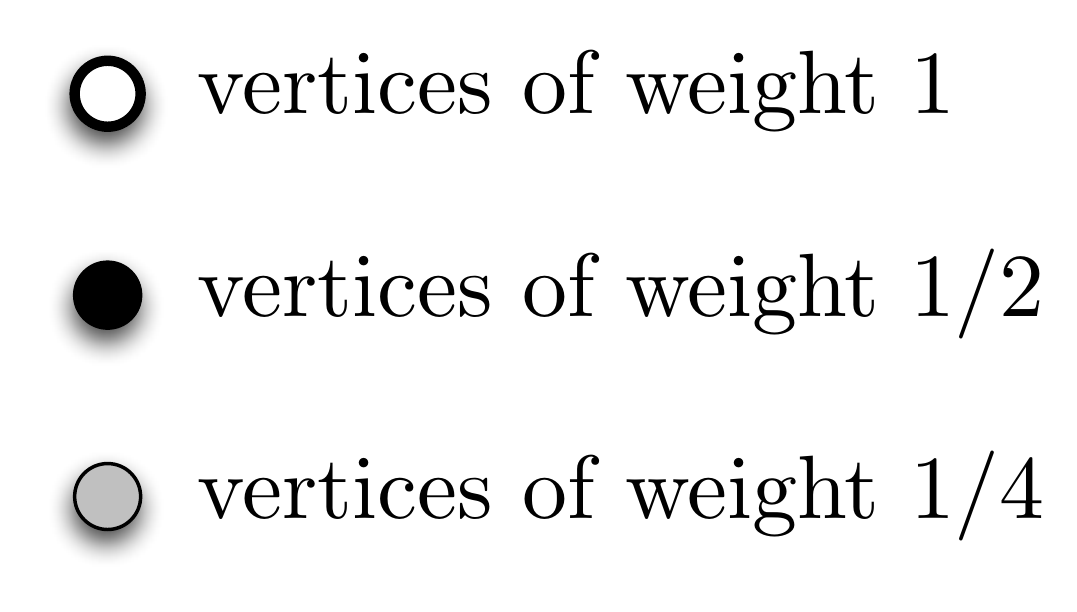}
\caption{Legend for vertices.}
\label{fig:1a}
\end{subfigure}%
\hfill
\begin{subfigure}[b]{.21\linewidth}
\centering
\includegraphics[width=\linewidth]{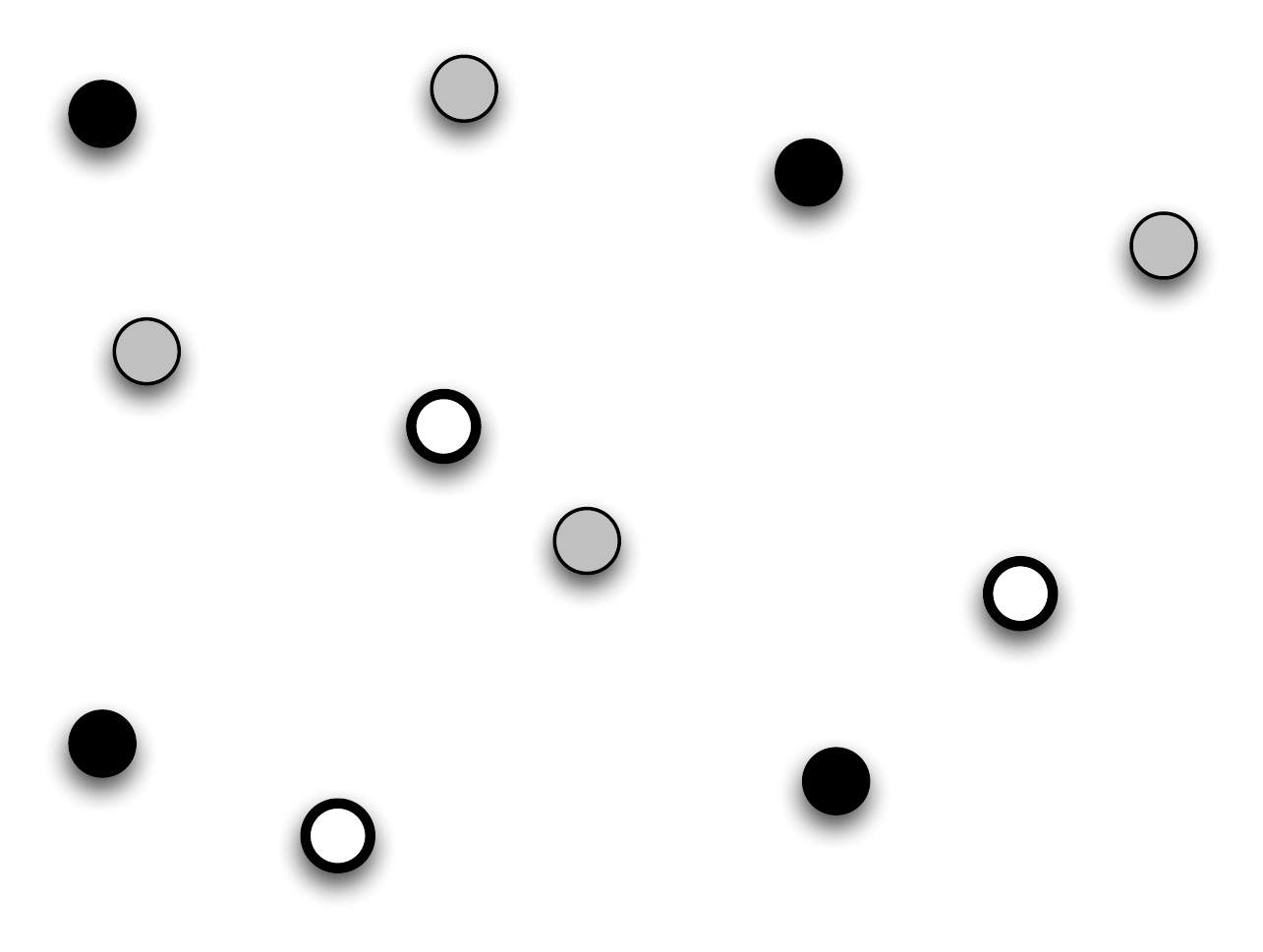}
\caption{A relaxed graph.}
\label{fig:1b}
\end{subfigure}%
\hfill
\begin{subfigure}[b]{.21\linewidth}
\centering
\includegraphics[width=\linewidth]{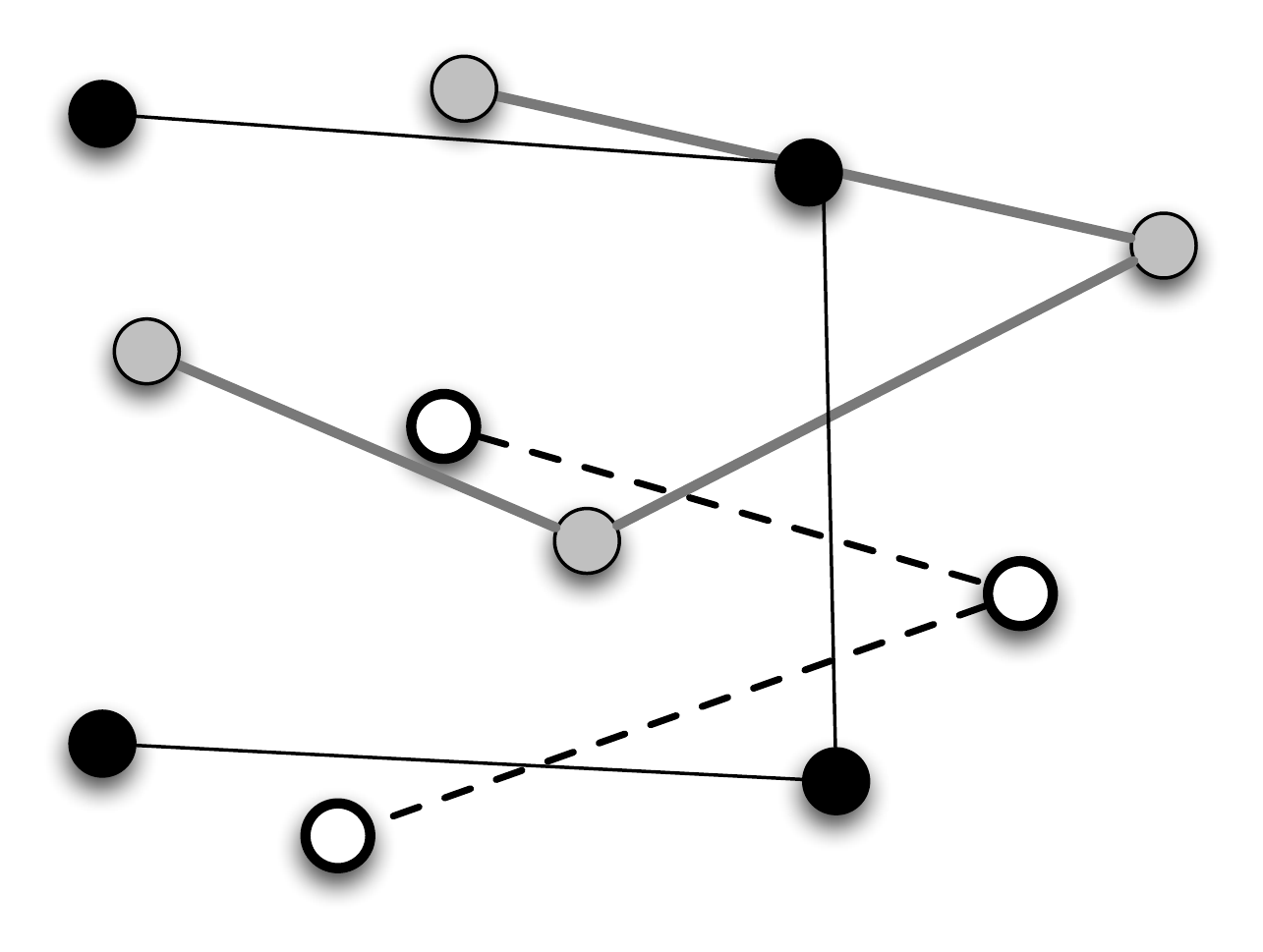}
\caption{TSP-Path for $V_0, V_1, V_2$.}
\label{fig:1c}
\end{subfigure}%
\hfill
\begin{subfigure}[b]{.21\linewidth}
\centering
\includegraphics[width=\linewidth]{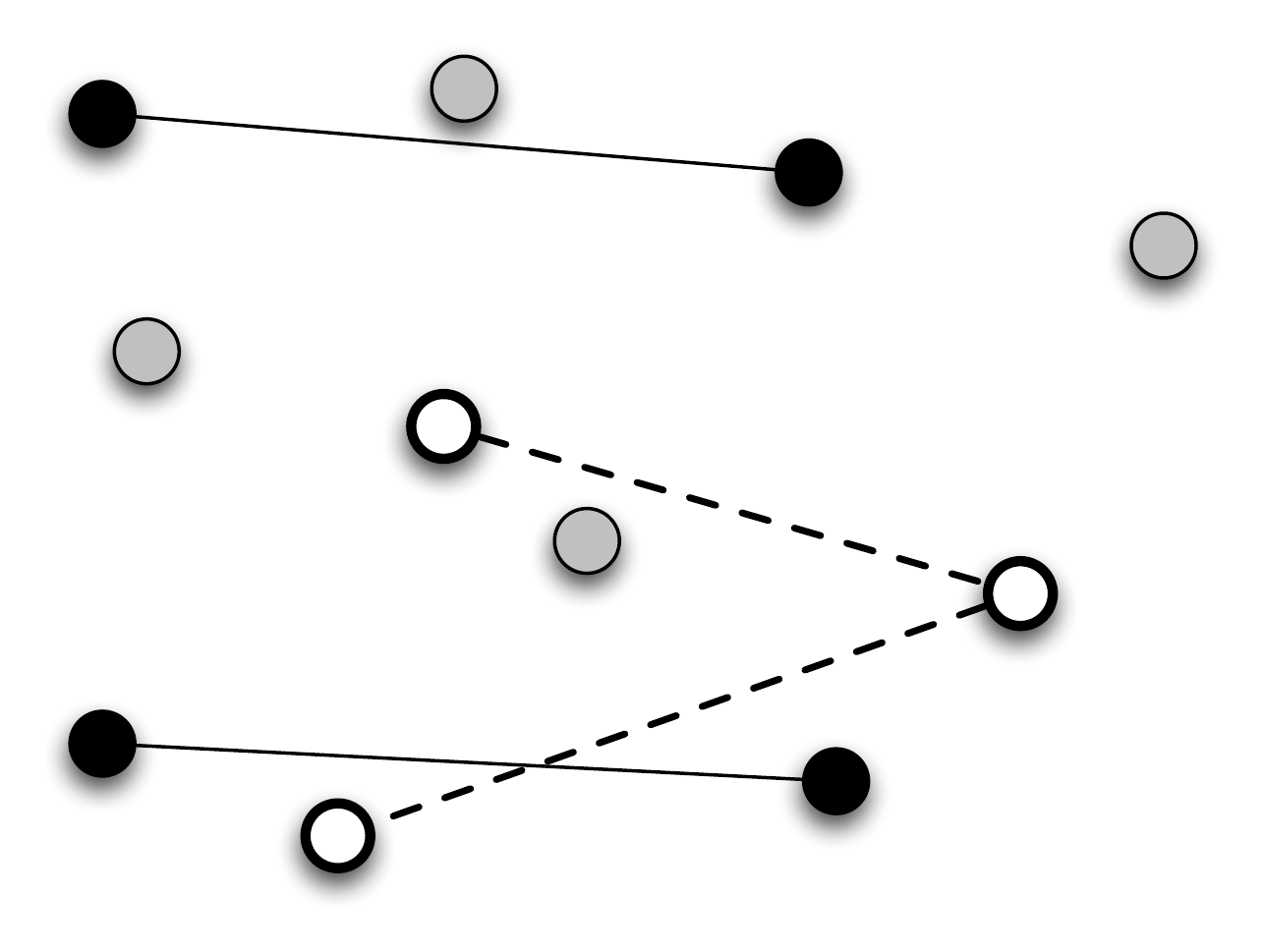}
\caption{Partitions for $V_0, V_1,V_2$.}
\label{fig:1d}
\end{subfigure}%

\vskip2em
\begin{subfigure}[b]{.21\linewidth}
\centering
\includegraphics[width=\linewidth]{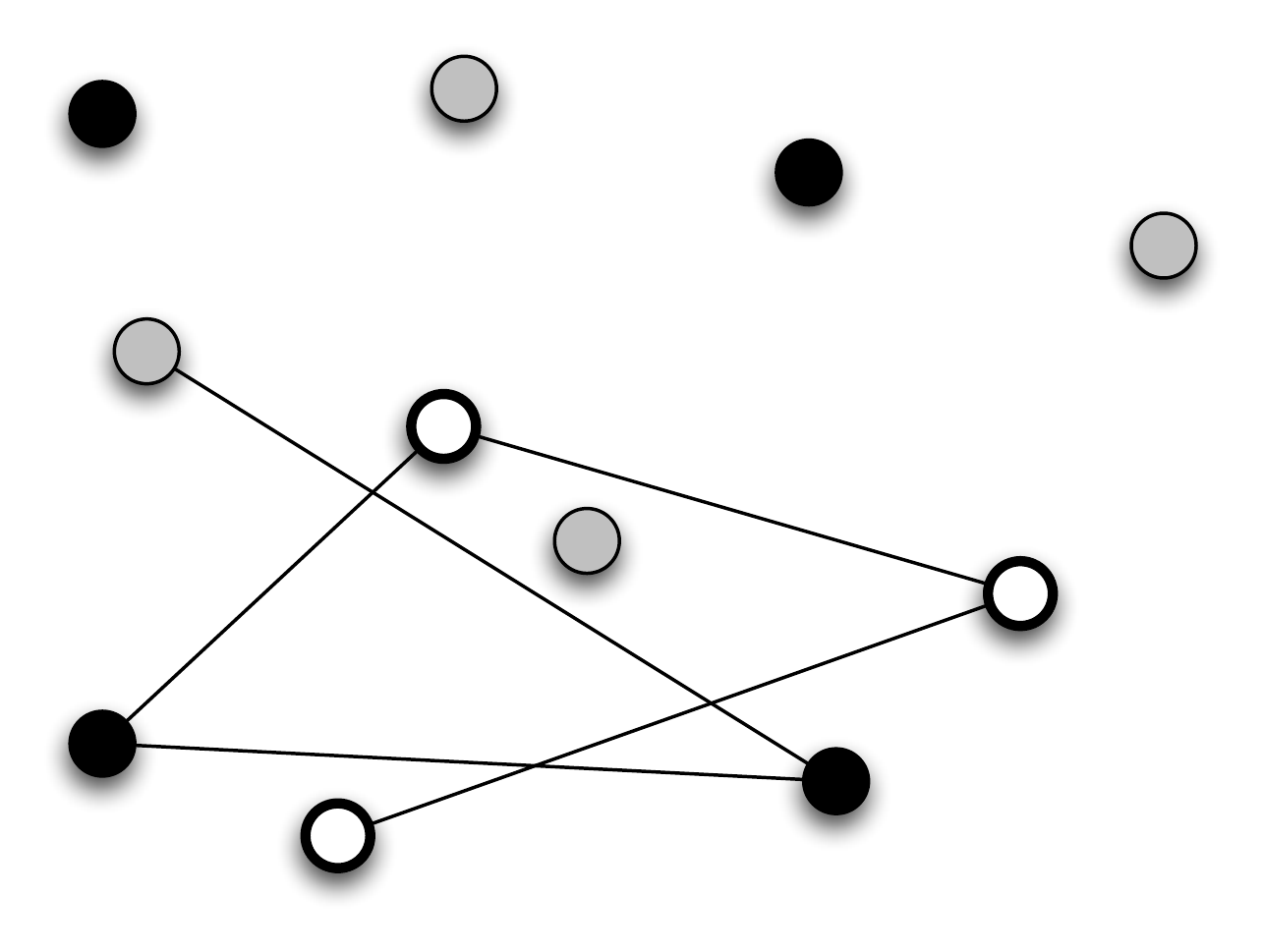}
\caption{The walk $S_1$.}
\label{fig:1e}
\end{subfigure}%
\hfill
\begin{subfigure}[b]{.21\linewidth}
\centering
\includegraphics[width=\linewidth]{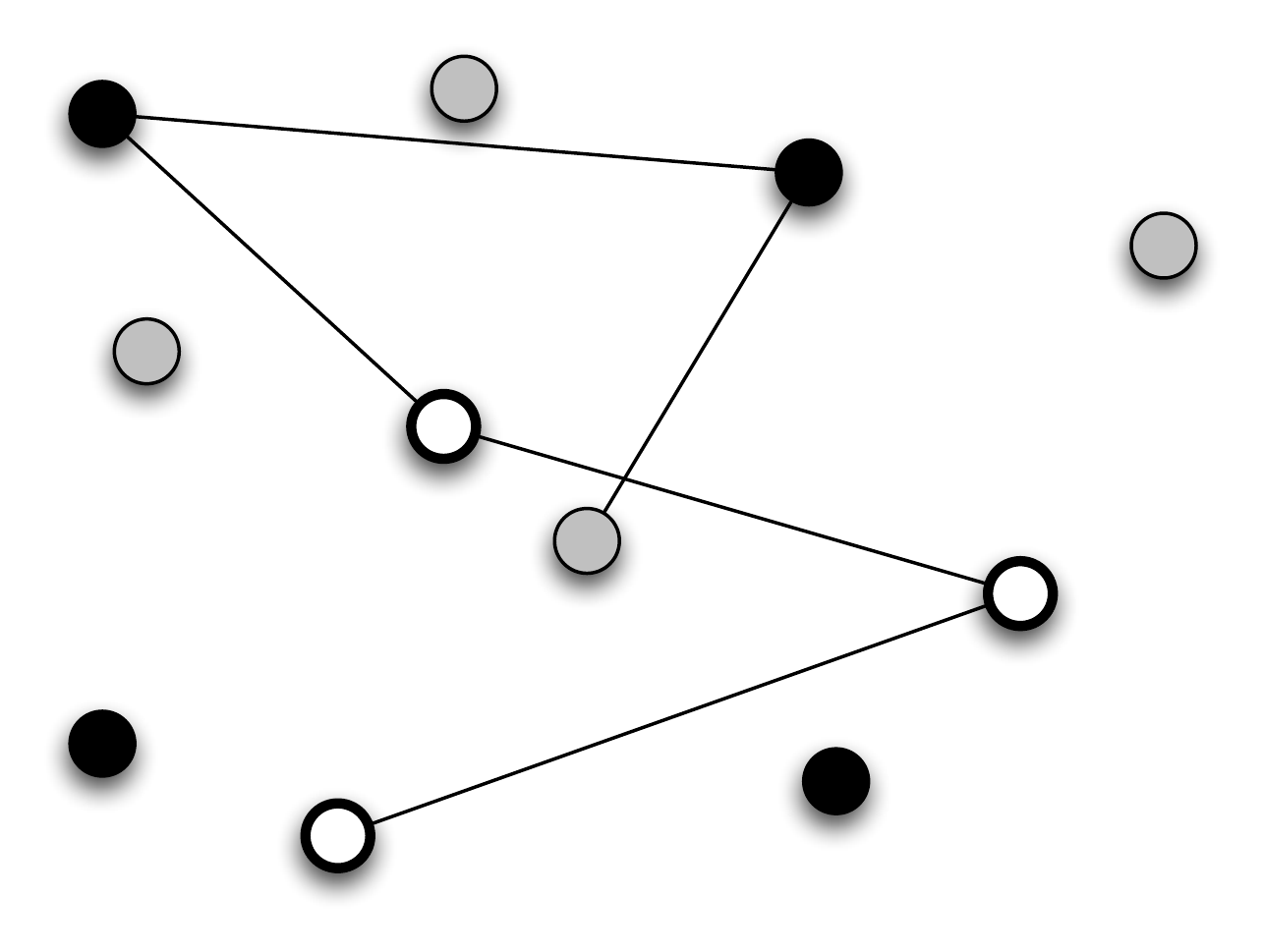}
\caption{The walk $S_2$.}
\label{fig:1f}
\end{subfigure}%
\hfill
\begin{subfigure}[b]{.21\linewidth}
\centering
\includegraphics[width=\linewidth]{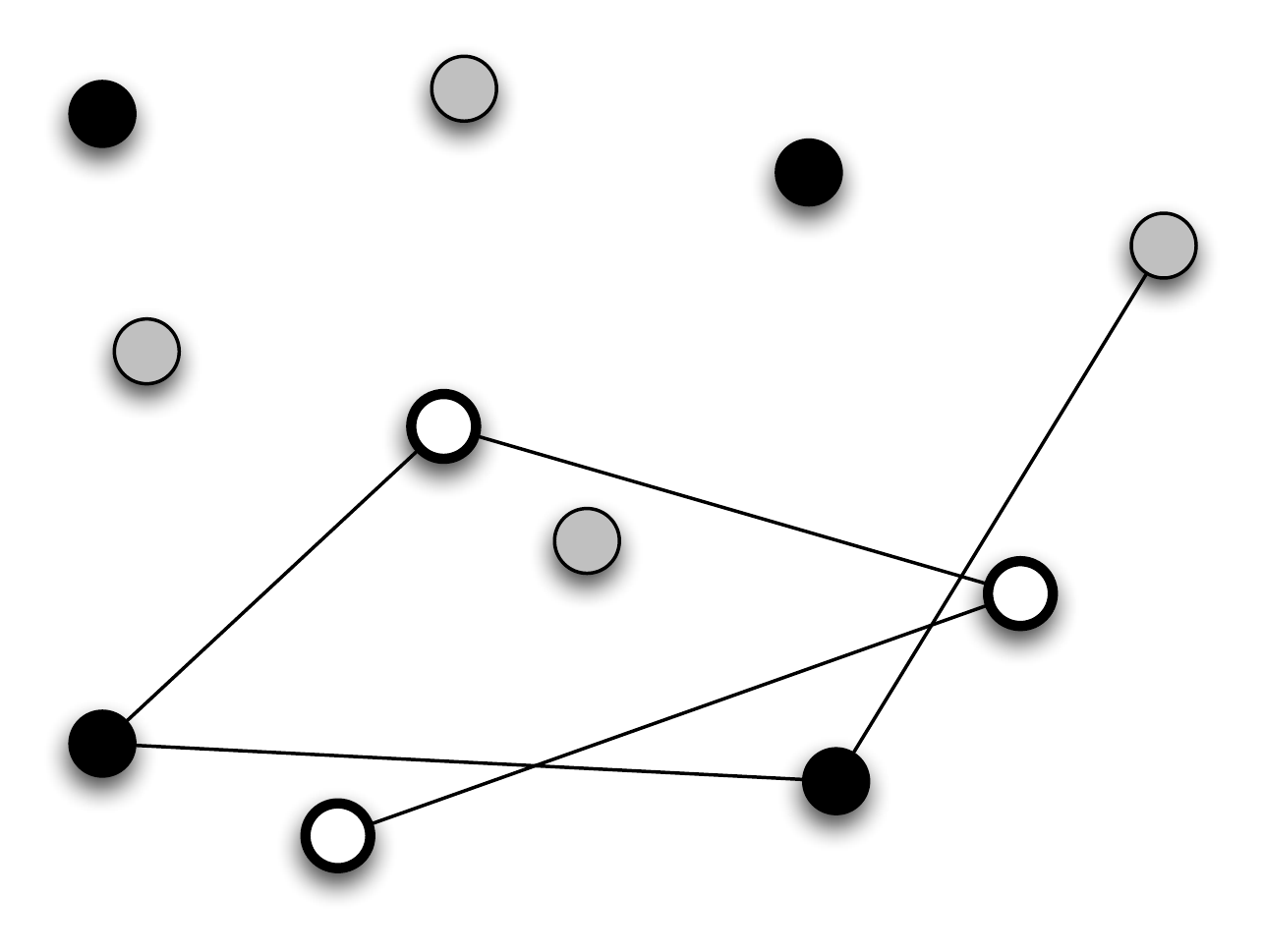}
\caption{The walk $S_3$.}
\label{fig:1g}
\end{subfigure}%
\hfill
\begin{subfigure}[b]{.21\linewidth}
\centering
\includegraphics[width=\linewidth]{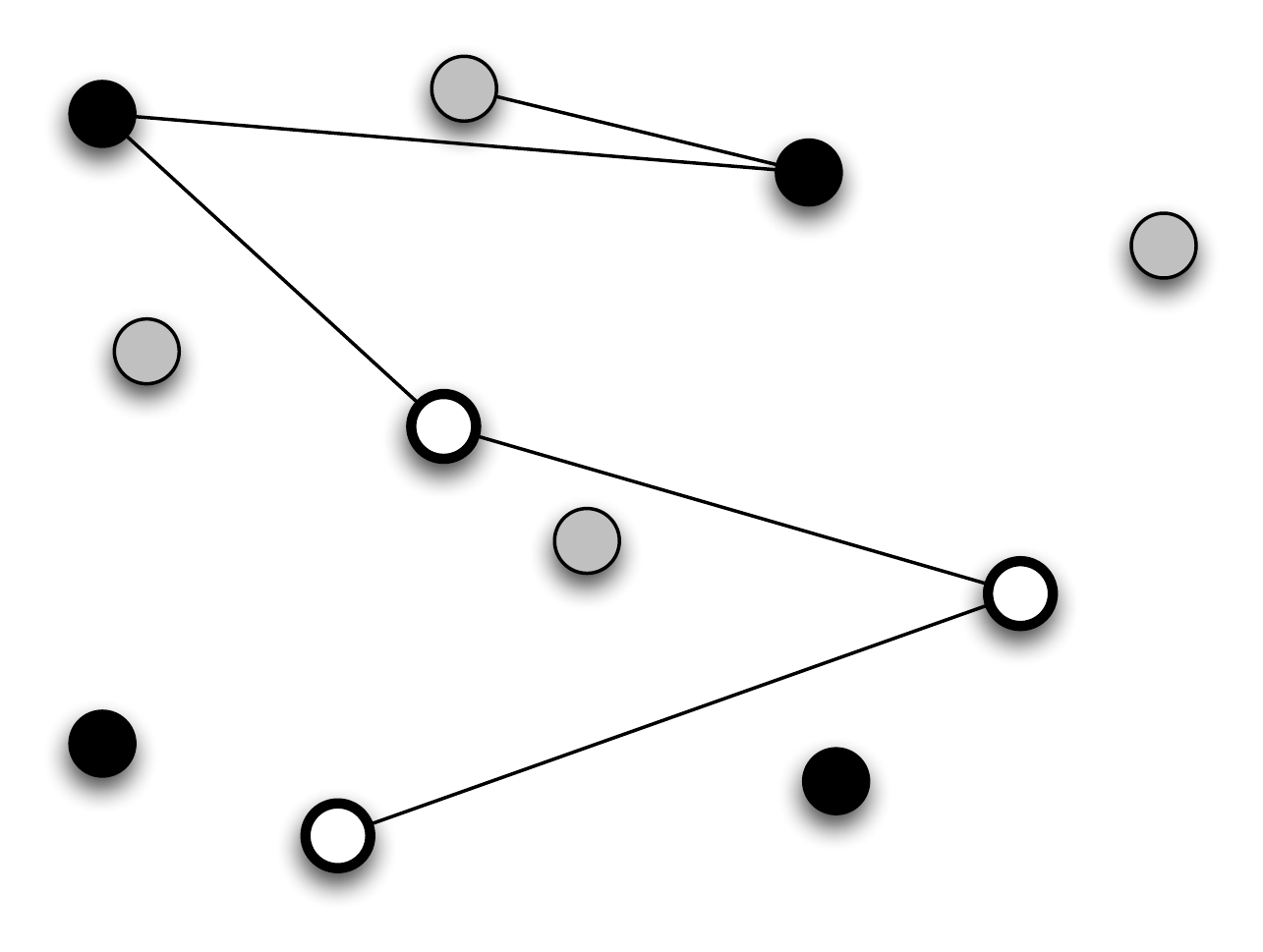}
\caption{The walk $S_4$.}
\label{fig:1h}
\end{subfigure}%

\vskip2em
\begin{subfigure}[b]{.21\linewidth}
\centering
\includegraphics[width=\linewidth]{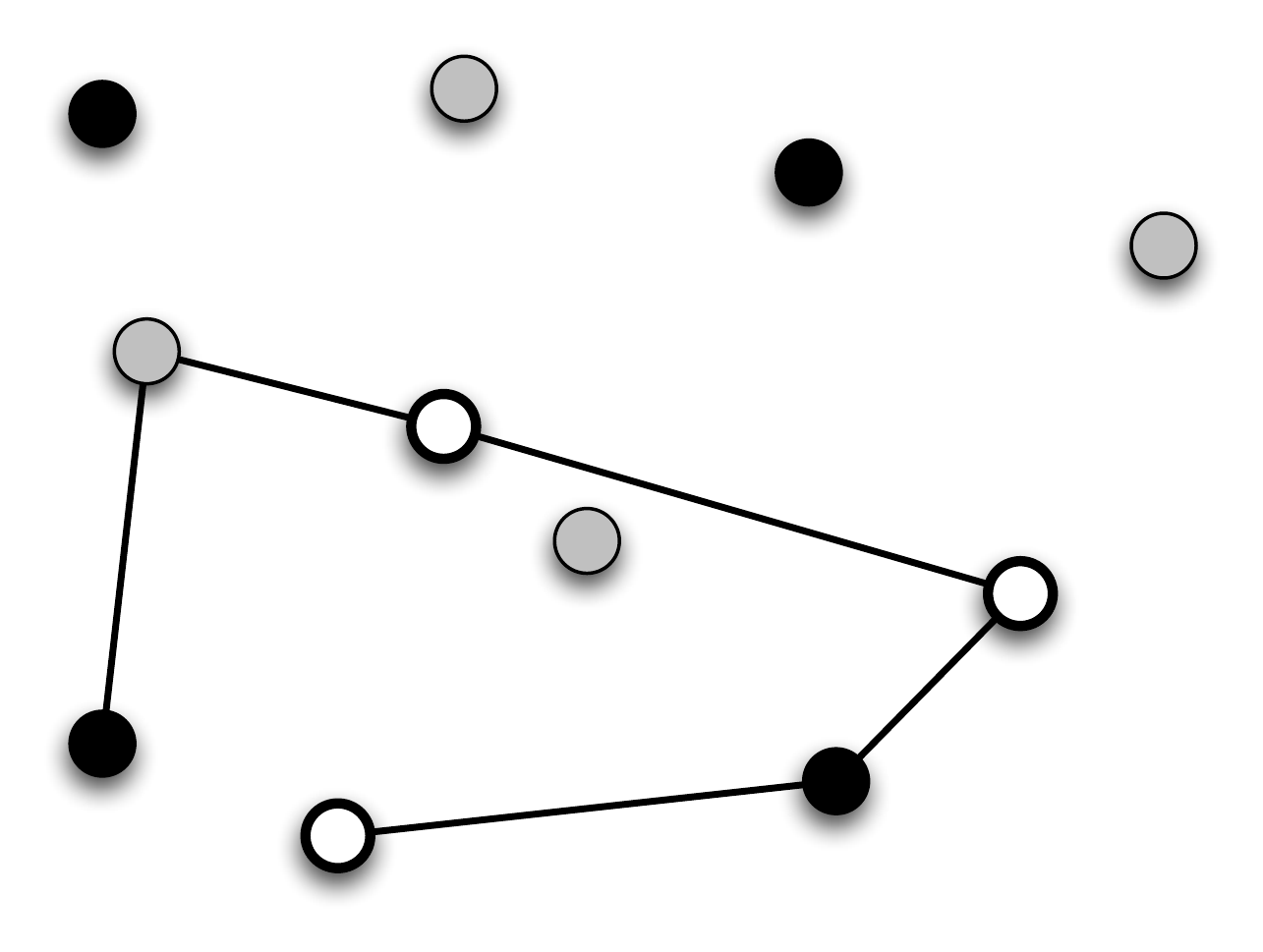}
\caption{TSP-Path of $S_1$.}
\label{fig:1i}
\end{subfigure}%
\hfill
\begin{subfigure}[b]{.21\linewidth}
\centering
\includegraphics[width=\linewidth]{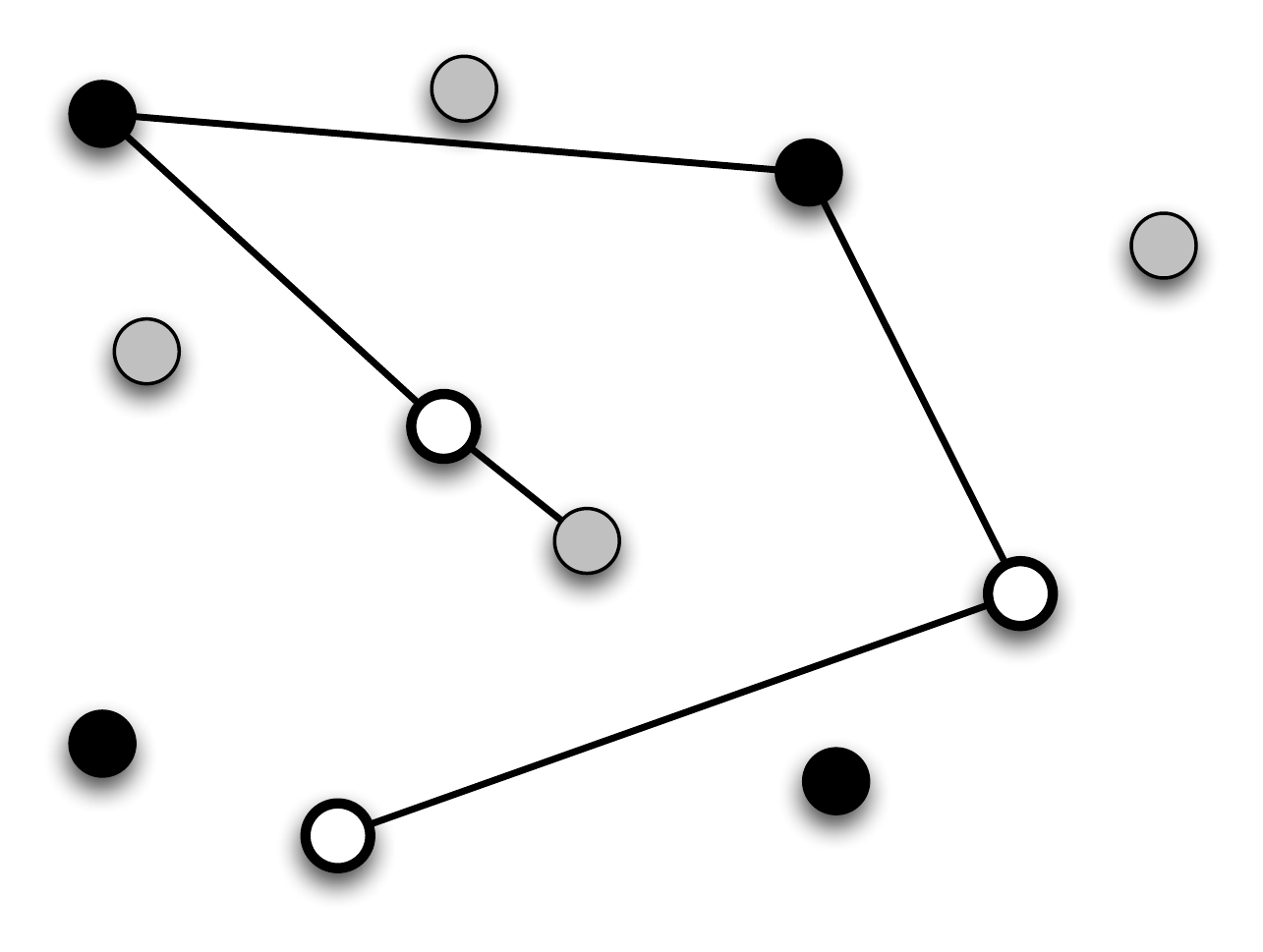}
\caption{TSP-Path of $S_2$.}
\label{fig:1j}
\end{subfigure}%
\hfill
\begin{subfigure}[b]{.21\linewidth}
\centering
\includegraphics[width=\linewidth]{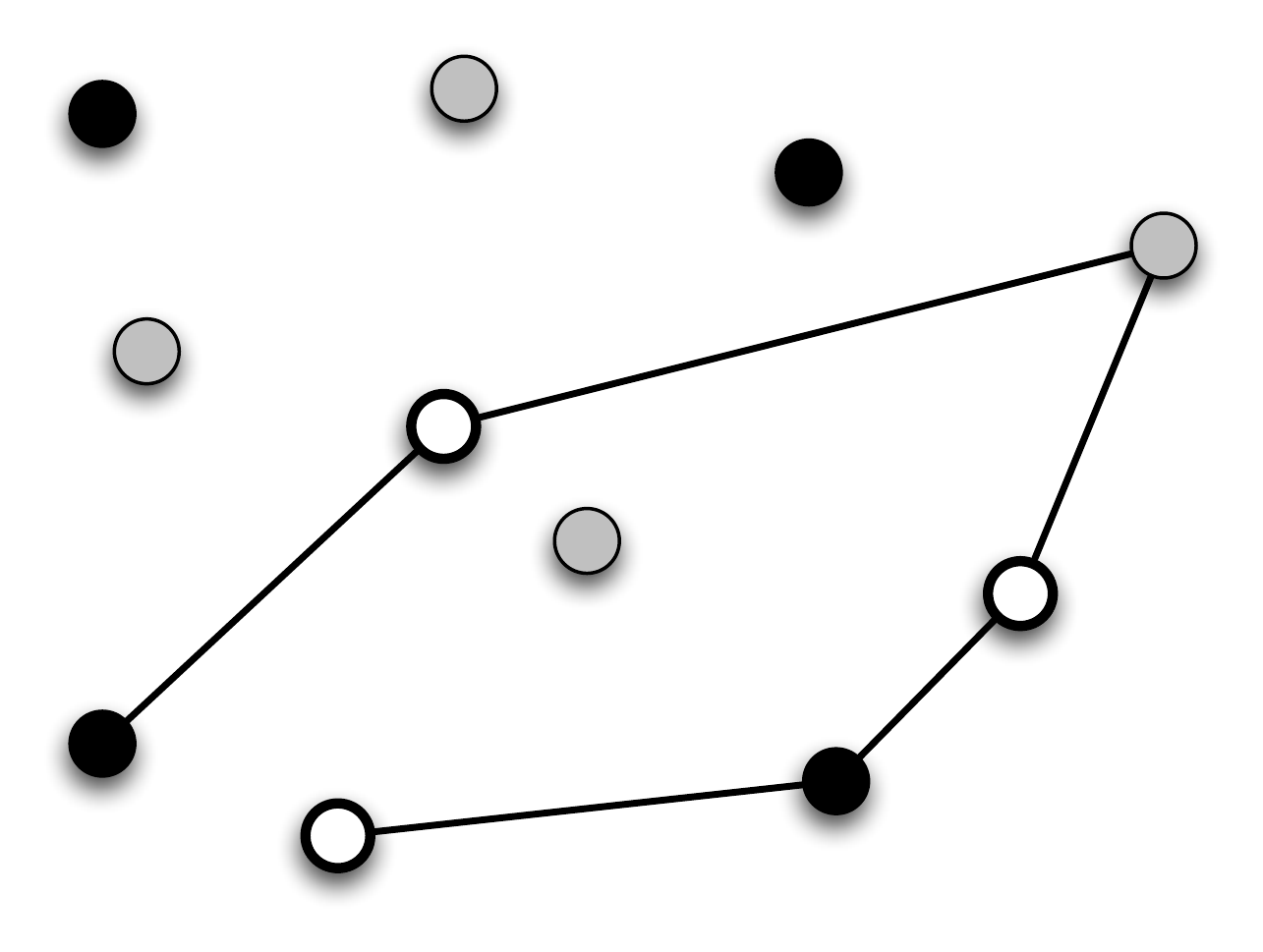}
\caption{TSP-Path of $S_3$.}
\label{fig:1k}
\end{subfigure}%
\hfill
\begin{subfigure}[b]{.21\linewidth}
\centering
\includegraphics[width=\linewidth]{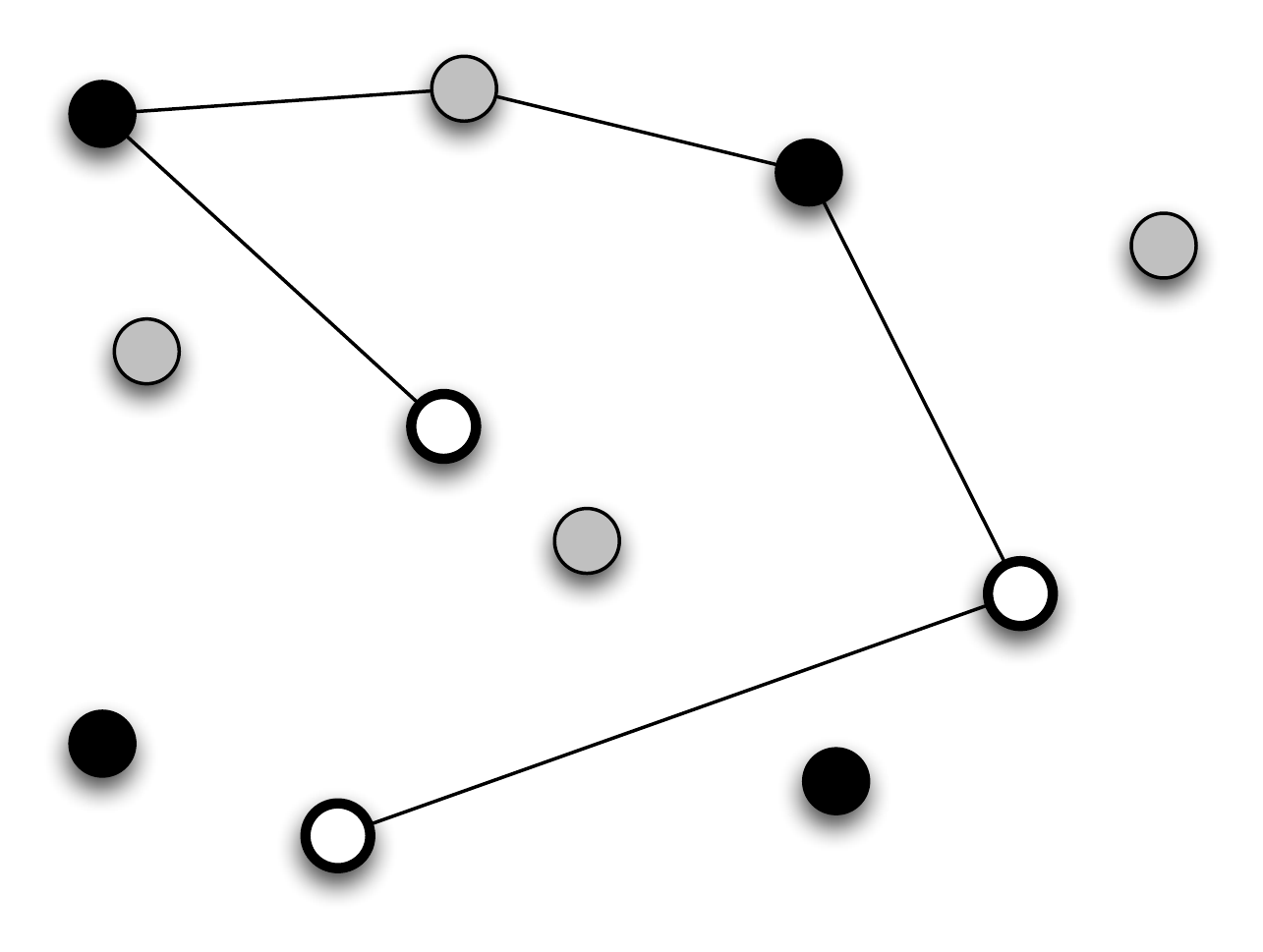}
\caption{TSP-Path of $S_4$.}
\label{fig:1l}
\end{subfigure}%
\caption{An illustration of the \textsc{BrutePartitionAlg} in Algorithm~\ref{alg:const_logarithmic_partition_epsilon}.  Figure (b) shows a relaxed graph consisting of vertices of weight $1$, $1/2$, and $1/4$.  The edge lengths are given by the Euclidean distance between vertices.  Figures (c-d) show the partitioning of the vertices that occurs in line 5.  Figures (e-h) show the four walks obtained in line 7.  Figures (i-l) shows how the walks can be heuristically improved by recomputing the TSP-Path through each of them, as discussed in Remark~\ref{rem:alg_imp}. }
\label{fig:Brute_alg}
\end{figure}

The following remark discusses a heuristic improvement that should be used in practice.

\begin{remark}[Implementation of Algorithm~\ref{alg:const_logarithmic_partition_epsilon}]
\label{rem:alg_imp}
In practice, we can recompute TSP-Path through each walk $S_1,\ldots,S_t$.  We simply require that each walk $\text{TSP-Path}(S_i)$ starts at the same vertex. In recomputing the TSP-Paths, the bounds remain unchanged.  However, in practice the performance is improved.  An example of the improvement is shown in Figure~\ref{fig:Brute_alg}.  Thus, the final infinite walk is obtained by repeatedly performing the walks in Figure~\ref{fig:1i} through to Figure~\ref{fig:1l}.
\end{remark}

\subsection{An $O(\log n)$-Approximation Algorithm}

In many applications, the value of $\rho_G$ is independent of $n$.  For example, in a robotic monitoring scenario, there may be only a finite number of importance levels that can be assigned to a point of interest.  In this case we have a constant factor approximation algorithm. However, the ratio between the largest and the smallest weights, i.e.,  $\rho_G$, does not directly depend on the size of the input graph. For even a small graph, $\rho_G$ can be very large. Therefore, in such cases we need an algorithm with an approximation guarantee that is bounded by a function of the size of the graph.

\begin{algorithm}
\caption{$\textsc{SmartPartitionAlg}(G)$}
\begin{algorithmic}[1]
    \STATE Let $V_i$ be the set of vertices of weight  {$\frac{1}{2^{i}} \le \phi(u) < \frac{1}{2^{i-1}}$} for $0 \leq i \leq \lceil\log_2 \rho_G\rceil$
    \STATE $U \leftarrow \cup_{i> \lfloor \log_2 n \rfloor+1} V_i$
    \STATE $S  {\leftarrow \textsc{BrutePartitionAlg}(G[V\backslash U])}$
    \STATE $k \leftarrow 1$
    \FORALL{$u\in V_i$ where $i > \lfloor \log_2 n \rfloor+1$}
        \STATE Insert $u$ at the end of $S_{2k}$
        \STATE Increment $k$
    \ENDFOR
    \RETURN $S$
\end{algorithmic}
\label{alg:const_logarithmic_partition_n}
\end{algorithm}

Our second approximation algorithm is shown in Algorithm~\ref{alg:const_logarithmic_partition_n}.  This algorithm is guaranteed to find a solution with cost within a logarithmic factor of the optimal cost. The main idea in Algorithm~\ref{alg:const_logarithmic_partition_n} is to construct a graph $G''$ from the input graph $G$ by relaxing weights and removing vertices with very low weights from $G$.  The result is that $\rho_{G''}$ is a function of $n$. Performing Algorithm~\ref{alg:const_logarithmic_partition_epsilon} on $G''$ results in a binary walk $S$ such that $\Delta(S)$ is an $O(\log n)$-factor approximation for the min-max latency walk problem in $G''$. We then insert the vertices in $V(G)\setminus V(G'')$ into $S$ in such a way that we maintain the $O(\log n)$-factor approximation on the input graph $G$.

\begin{theorem}{Given a graph $G$}, Algorithm \ref{alg:const_logarithmic_partition_n} constructs a walk $S$ of size in $O(n^2)$ such that $\Delta(S)$ is an $O(\log n)$-approximation for the min-max latency walk problem in $G$.
\end{theorem}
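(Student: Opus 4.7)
The plan is to reduce the problem to the $\rho_G$-dependent bound from Theorem~5.2 by first ``trimming off'' the vertices that are too light to affect $\opt_G$ much, running \textsc{BrutePartitionAlg} on the trimmed graph, and then slotting the trimmed vertices back in at positions where their low weight absorbs the additional cost.

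First, I would relax the weights and let $U = \bigcup_{i > \lfloor\log_2 n\rfloor+1} V_i$ and $G'' = G[V\setminus U]$. Every vertex in $V(G'')$ then has (relaxed) weight at least $1/2^{\lfloor\log_2 n\rfloor+1} \ge 1/(2n)$, so $\rho_{G''} \le 2n$. Since $G''$ is an induced subgraph with the same vertex weights on the vertices it keeps, it is immediate that $\opt_{G''} \le \opt_G$. Applying Theorem~5.2 to $G''$ produces a binary walk $S = [S_1,\dots,S_t]$ with $t = 2\rho_{G''}$, of total size $O(n\rho_{G''}) = O(n^2)$, whose infinite expansion has cost at most $O(\log\rho_{G''})\opt_{G''} = O(\log n)\opt_G$ on $G''$. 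Crucially, I will use two structural facts extracted from the proof of Theorem~5.2: every $S_i$ starts at a common unit-weight vertex $v$, and $\max_i l(S_i) \le 2(\log_2\rho_{G''})\opt_{G''} = O(\log n)\opt_G$.

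Next I analyze the effect of the insertions in lines~5--8. Each $u \in U$ is appended to the end of some $S_{2k}$; since only the final vertex changes, every $S_i$ still begins at the unit-weight vertex $v$, so the binary property (Lemma~5.1) continues to hold. By Corollary~2.7 every edge in the metric closure has length at most $\opt_G$, so appending one vertex to $S_{2k}$ increases $l(S_{2k})$ by at most $2\opt_G$. Because the algorithm places at most one vertex of $U$ into each $S_{2k}$, we still have $\max_i l(S_i) = O(\log n)\opt_G$. Applying Lemma~5.1 on $G''$ then yields $C(\Delta(S),w) = O(\log n)\opt_G$ for every $w \in V(G'') = V\setminus U$ (and Lemma~3.2(i) absorbs the relaxation factor of $2$).

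It remains to bound the cost of the inserted vertices $u \in U$. Each such $u$ is visited once per full traversal of $S$, so $L(\Delta(S),u)$ is at most the total length of a single copy of $S$, namely
\[
\sum_{i=1}^{t} l(S_i) + \sum_{i=1}^{t-1} l(\text{connecting edge}) \;\le\; t \cdot O(\log n)\opt_G \;=\; O(n\log n)\opt_G,
\]
where the connecting edges are bounded by $\opt_G$ via Corollary~2.7 and $t = O(n)$. Since $\phi(u) < 1/2^{\lfloor\log_2 n\rfloor+1} \le 1/n$, the weighted latency satisfies $C(\Delta(S),u) = \phi(u)\cdot L(\Delta(S),u) = O(\log n)\opt_G$. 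Combining this with the bound for $V\setminus U$ gives $C(\Delta(S)) = O(\log n)\opt_G$, and adding $|U| \le n$ vertices to a walk of size $O(n^2)$ leaves the size in $O(n^2)$, completing the argument.

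I expect the main obstacle to be verifying that the insertions do not disturb the binary decomposition in a way that invalidates Lemma~5.1: one must check both that the ``common starting vertex'' condition continues to hold (which is why insertions are at the \emph{end} of $S_{2k}$ rather than elsewhere) and that $\max_i l(S_i)$ does not blow up. A subtle bookkeeping issue is ensuring $2|U|\le t$ so the loop makes sense; if $\rho_{G''}$ is so small that this fails, one inserts the remaining $U$-vertices round-robin into the $S_{2k}$'s, which changes the per-$S_i$ length bound by at most an additive $O((|U|/t)\opt_G)$ term that a recomputation of the $U$-vertex cost (using $\phi(u)\le 1/n$) still absorbs into $O(\log n)\opt_G$.
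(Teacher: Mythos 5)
Your proposal is correct and follows essentially the same route as the paper: trim the light vertices $U$, run \textsc{BrutePartitionAlg} on the remaining graph $G''$ (where $\rho_{G''}\le 2n$), append the $U$-vertices to the ends of the $S_{2k}$'s, and invoke the binary-walk property (Lemma~\ref{lem:perfectdecomposition}) together with the relaxation and edge-length bounds; your explicit ``weight at most $1/n$ times one full period'' bound on the cost of the reinserted $U$-vertices is in fact spelled out more carefully than in the paper's own proof. The only shaky point is the round-robin fallback in your closing remark---packing $\Theta(|U|/t)$ detours into a single $S_{2k}$ would inflate the latency of the \emph{heavy} vertices (whose latency spans those detours), not merely the cost of the $U$-vertices---but that contingency never arises under the paper's convention, since $t=2^{\lfloor\log_2 n\rfloor+2}>2n>2|U|$.
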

\begin{proof}
  The idea of Algorithm~\ref{alg:const_logarithmic_partition_n} is to remove the vertices of small weight so that we can use Algorithm~\ref{alg:const_logarithmic_partition_epsilon} as a subroutine. Let $G'$ be the result of relaxing the weights of $G$ and $U$ be the set of vertices of $G'$ with weight less than $1/2^{\lfloor \log_2 n \rfloor +1}$, as in line 2 of Algorithm~\ref{alg:const_logarithmic_partition_n}. Let $G''$ be the result of removing vertices in $U$ from $G'$. As in Line 3 of Algorithm~\ref{alg:const_logarithmic_partition_n}, walk $S=[S_1,S_2,\ldots,S_t]$ is the result of running Algorithm \ref{alg:const_logarithmic_partition_epsilon} on $G''$ with $n<\rho_{G''}=2^{\lfloor \log_2 n \rfloor +1}\le 2n$.  Recall that in Algorithm~\ref{alg:const_logarithmic_partition_epsilon}, for $1\le k\le t$ each $S_k$ starts with the same vertex $v$ with $\phi(v)=1$. Moreover, by the proof of Theorem \ref{tho:method1}, since $\rho_{G''}\le2n$ each $S_k$ has length at most $2\log_2 \rho_{G''}\opt_{G''}=(2\log_2 n + 2)\opt_{G''}$. Also, since $\opt_{G''}\le \opt_{G'}$, we have that $(2\log_2 n + 2)\opt_{G'}$ is an upper-bound for lengths of $S_k$'s.

In line 6 of Algorithm~\ref{alg:const_logarithmic_partition_n}, the $k$-th vertex of $U$ is inserted at the end of walk $S_{2k}$. Note that since $|U|<n$ and $2n < t=2\rho_{G''}$ this is possible.
Since each walk $S_k$ begins in vertex $v$ with $\phi(v)=1$, by Lemma \ref{lem:maxdishalfc} and Corollary \ref{cor:maxdisc}, each detour to a vertex in $U$ has length bounded by $\frac{3}{2}\opt_{G'}$. Consequently, after inserting vertices in $U$ into $S$, each $S_k$ has length at most $(2\log_2 n + \frac{7}{2})\opt_{G'}$ in $G'$. Hence, by Lemma \ref{lem:perfectdecomposition}, the cost of $\Delta(S)$ in $G'$ is at most $(4\log_2 n + 8)\opt_{G'}$. Furthermore, the cost of $\Delta(S)$ in $G$ is bounded by $(8\log_2 n + 16)\opt_{G}$ using Lemma \ref{lem:relaxing}. As a result, $\Delta(S)$ is an $O(\log n)$-approximation for the min-max latency walk problem in $G$.

Moreover, since $S$ is a binary walk, its size is bounded by $2n\rho_{G''}\le4n^2$. Since the number of vertices added to $S$ during its modification is in $O(n)$, the size of the constructed walk $S$ is in $O(n^2)$.
\end{proof}

As mentioned in Remark~\ref{rem:alg_imp}, we can improve the performance of Algorithm~\ref{alg:const_logarithmic_partition_n} by computing TSP-Paths through each of the modified walks in $S$, making sure that all paths start at the same vertex.  The following result shows that Algorithm~\ref{alg:const_logarithmic_partition_n} always achieves a better approximation factor than Algorithm~\ref{alg:const_logarithmic_partition_epsilon}.

\begin{corollary}
\label{cor:3betterthan2}
The approximation ratio of Algorithm~\ref{alg:const_logarithmic_partition_n} is $O\big(\log \min\{n,\rho_G\} \big)$, which is always less than or equal to that of Algorithm~\ref{alg:const_logarithmic_partition_epsilon}.
\end{corollary}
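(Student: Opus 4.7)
The plan is to prove the corollary by case analysis on the comparative size of $\rho_G$ and $n$, showing that in the ``easy'' case Algorithm~\ref{alg:const_logarithmic_partition_n} degenerates into Algorithm~\ref{alg:const_logarithmic_partition_epsilon}, while in the ``hard'' case the previously established $O(\log n)$ bound kicks in.

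First I would consider the case $\rho_G \le n$. Here the goal is to argue that the set $U$ computed in line~2 of Algorithm~\ref{alg:const_logarithmic_partition_n} is empty, so the call $\textsc{BrutePartitionAlg}(G[V\setminus U])$ in line~3 is just $\textsc{BrutePartitionAlg}(G)$, and no further modifications occur in the subsequent loop. To show $U=\emptyset$, observe that every vertex $v\in V(G)$ has $\phi(v)\ge 1/\rho_G\ge 1/n$ by the normalization assumption and the hypothesis $\rho_G\le n$. The weight relaxation then assigns $v$ a weight of the form $1/2^x$ with $x$ the smallest integer satisfying $1/2^x\le \phi(v)$; since $\phi(v)\ge 1/n$, this gives $x\le \lceil\log_2 n\rceil\le \lfloor\log_2 n\rfloor+1$, so the relaxed index of $v$ does not exceed $\lfloor\log_2 n\rfloor+1$, placing $v$ outside $U$. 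With $U=\emptyset$, Theorem~\ref{tho:method1} applied directly to $G$ gives an $O(\log\rho_G)$ approximation.

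Second, in the case $\rho_G > n$, the $O(\log n)$ bound proved for Algorithm~\ref{alg:const_logarithmic_partition_n} applies, and $\log n = \log \min\{n,\rho_G\}$ in this regime. Combining the two cases yields the approximation factor $O(\log \min\{n,\rho_G\})$. The comparison with Algorithm~\ref{alg:const_logarithmic_partition_epsilon} then follows immediately from the monotonicity of $\log$, since $\min\{n,\rho_G\}\le \rho_G$ gives $\log \min\{n,\rho_G\}\le \log \rho_G$, and Theorem~\ref{tho:method1} bounds Algorithm~\ref{alg:const_logarithmic_partition_epsilon}'s ratio by a constant multiple of $\log \rho_G$.

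The only subtle point, and the one I would want to double-check carefully, is the boundary inequality showing $U=\emptyset$ when $\rho_G\le n$: specifically, that the rounding conventions $\lceil\log_2\rho_G\rceil \le \lfloor\log_2 n\rfloor+1$ and the strict-versus-weak threshold used in defining $U$ (indices $i>\lfloor\log_2 n\rfloor+1$) line up correctly. Once this bookkeeping is verified the rest of the argument is immediate, since no other part of Algorithm~\ref{alg:const_logarithmic_partition_n} is activated when $U=\emptyset$ and the result reduces to a direct invocation of Theorem~\ref{tho:method1}.
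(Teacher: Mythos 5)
Your proposal is correct and follows essentially the same route as the paper: a case split on whether $\rho_G$ exceeds (a constant multiple of) $n$, showing that in the small-$\rho_G$ case the set $U$ is empty so Algorithm~\ref{alg:const_logarithmic_partition_n} reduces to Algorithm~\ref{alg:const_logarithmic_partition_epsilon} with its $O(\log\rho_G)$ guarantee, and invoking the $O(\log n)$ bound otherwise. The paper splits at $\rho_G\le 2n$ rather than $\rho_G\le n$, but your threshold is if anything safer with respect to the rounding in the definition of $U$, and the boundary check you flag goes through exactly as you describe.
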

\begin{proof}
In Algorithm~\ref{alg:const_logarithmic_partition_n}, if for the input graph $G$ we have $\rho_G\le 2n$, then the set $U$ is empty and hence the approximation guarantees of both Algorithms~\ref{alg:const_logarithmic_partition_epsilon} and \ref{alg:const_logarithmic_partition_n} are $O(\log \rho_G)$. On the other hand, if $\rho_G> 2n$ then the approximation factor of Algorithm~\ref{alg:const_logarithmic_partition_n}, i.e., $O(\log n)$, is smaller than the approximation factor of Algorithm~\ref{alg:const_logarithmic_partition_epsilon}, which is $O(\log \rho_G)$.
\end{proof}

\section{Simulations}
\label{sec:simulations}

In this section we present two sets of simulations.  The first shows the scalability of our algorithms with respect to the size of the graph, and compares the performance to a simple TSP-based algorithm.  The second set shows a case study for patrolling for crime in the city of San Francisco, CA.

\subsection{Scalability of Approximation Algorithms}

\begin{figure}[t]
\centering
\includegraphics[width=0.8\linewidth]{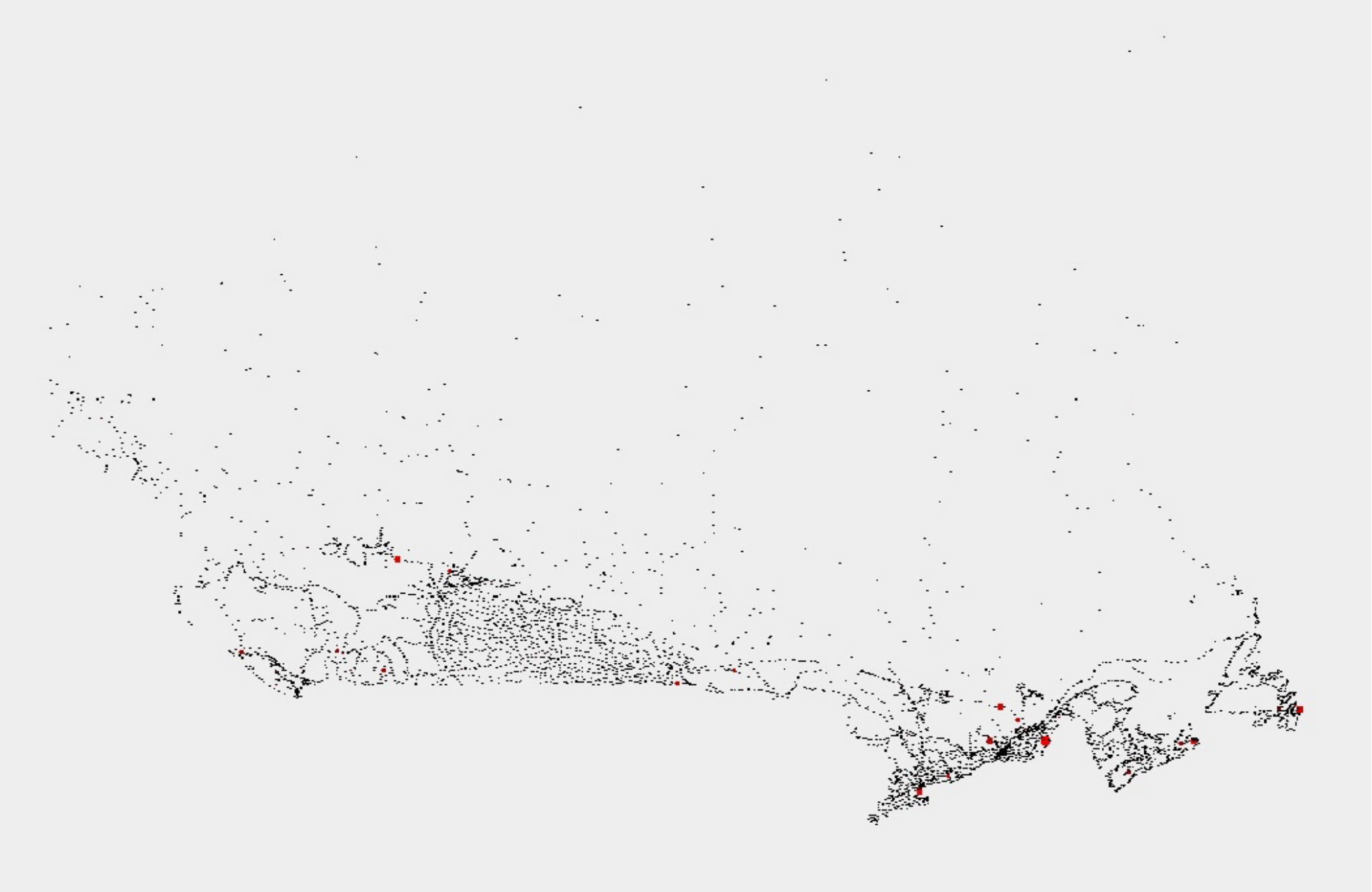}
%
%
\caption{The 4663 vertex graph used for all tests corresponding to all cities in Canada~\cite{CWJ:09}.}
\label{fig:Canada}       
\end{figure}

In this section, we study the scalability of Algorithm~\ref{alg:const_logarithmic_partition_n}.  Recall that by Corollary~\ref{cor:3betterthan2}, Algorithm~\ref{alg:const_logarithmic_partition_n} always performs better than Algorithm~\ref{alg:const_logarithmic_partition_epsilon}, both in runtime and approximation factor.

For the simulations, we use test data that are standard benchmarks for testing the performance of heuristic algorithms for the TSP. The data sets used here are taken from~\cite{CWJ:09}. Each data set represents a set of locations in a country. We construct a graph by placing a vertex for each location and letting the length of the edge connecting each pair of vertices be the Euclidean distance of the corresponding points.

What remains is to assign weights to each vertex in the test cases.  In many persistent monitoring applications, the likelihood of a vertex with very high weight is low. In other words, majority of vertices have low priority, while few vertices need to be visited more frequently. To simulate this behavior, we use a distribution that has the following exponential property:
\begin{equation}
\label{equ:prob}
\mathrm{P}\left[\left(\frac{1}{2}\right)^{k+1} < \phi(v) \leq \left(\frac{1}{2}\right)^k\right] = \frac{1}{B},
\end{equation}
for $k<B$, where $B$ is a fixed integer. To create such a distribution, we assign to each vertex $v$ a weight {${(1/2)}^B\leq \phi(v)\leq 1$} with probability  {$f\big(\phi(v)\big) = {\big(\phi(v) B\ln2\big)}^{-1}$}.

Here, we compare our algorithms to the simple algorithm of finding a TSP tour through all vertices in $G$. For finding an approximate solution to the TSP we use an implementation of the Lin-Kernighan algorithm~\cite{LS-KB:73} available at~\cite{CWJ:09}. Recall that from Lemma~\ref{lem:TSPbad}, the walk obtained from the TSP can have a cost that is $n$ times larger than the optimal cost.  However, when all vertex weights are equal, the TSP yields the optimal walk.  In simulation, when the vertex weights are uniformly distributed, the TSP appears to provide a fairly good approximation for the min-max latency walk problem.  One of the reasons for this is that when the vertex weights are distributed uniformly, we expect that half of the vertices will have weights in $[0.5,1]$.  Thus $O(n)$ of the vertices must be visited very frequently, and not much can be gained by visiting vertices at different frequencies. 


\paragraph{Performance with Respect to Vertex Weight Distribution:}

\begin{figure}
\centering
\includegraphics[width=0.7 \linewidth]{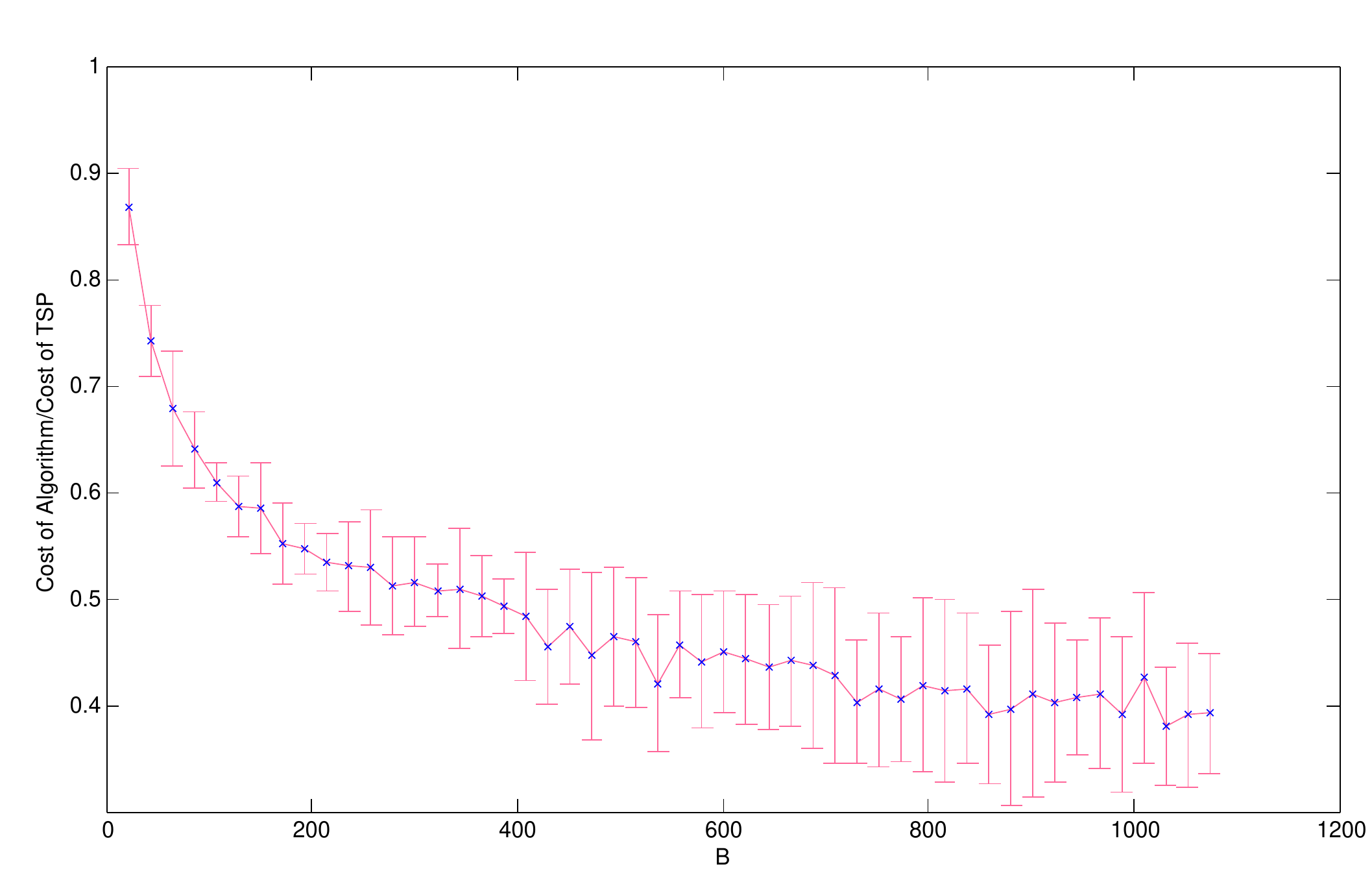}\\
\caption{The ratio of the cost of the walk produced by Algorithm \ref{alg:const_logarithmic_partition_n} to the cost of the TSP tour as a function of $B$ in~\eqref{equ:prob}.}
\label{fig:AlgOverB}
\end{figure}

An important aspect of an environment is the ratio of weight of its elements, therefore it is natural to test our algorithm with respect to $\rho_G$. Note that by Equation \ref{equ:prob}, $\rho_G > (1/2)^B$. Therefore, we consider different values of $B$ to assess the performance of the algorithm for different ranges of weights on the same graph (see Figure~\ref{fig:Canada}). It is easy to see that 
if $B < \log_2 n$, then Algorithms~\ref{alg:const_logarithmic_partition_n} and~\ref{alg:const_logarithmic_partition_epsilon} produce the same output. Figure~\ref{fig:AlgOverB} depicts the behavior of Algorithm~\ref{alg:const_logarithmic_partition_n} on a graph induced by 4663 cities in Canada, shown in Figure~\ref{fig:Canada}, with different values for $B$. It can be seen that as $B$ increases, our algorithm outperforms the TSP by a greater factor.

\paragraph{Performance with Respect to Input Graph Size:}

\begin{figure}
\centering
\includegraphics[width=0.7\linewidth]{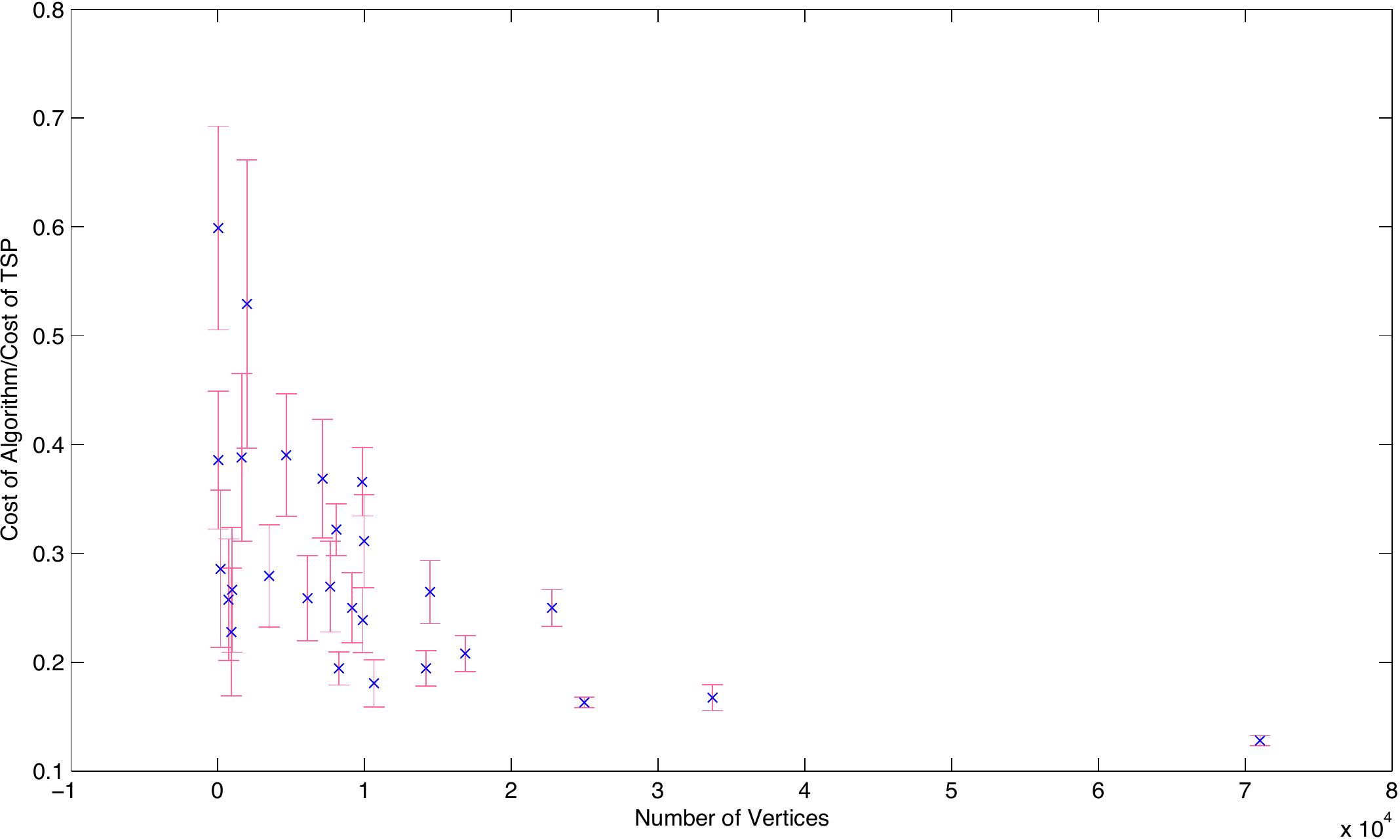}
\caption{The ratio of the cost of Algorithm~\ref{alg:const_logarithmic_partition_n} to the cost of the TSP tour on the $27$ test graphs in~\cite{CWJ:09}.}
\label{fig:AlgOverN}
\end{figure}

We use graphs of different sizes to evaluate the performance and scalability of our algorithms. Again, the cost is compared to that of a simple TSP tour that visits each vertex in the graph exactly once before returning to its starting vertex. Figure~\ref{fig:AlgOverN} depicts the ratio of the cost of the walk constructed by Algorithm~\ref{alg:const_logarithmic_partition_n} to the cost of the TSP tour on $27$ different graphs each corresponding to a set of locations in a different country. Here the value of $B$ is fixed to 1000. It can be seen that the ratio of the cost of the walk produced by our algorithm to the cost of the TSP tour decreases as the size of graph increases. Hence, as the number of vertices in graphs increases, the performance of Algorithm \ref{alg:const_logarithmic_partition_n} relative to the TSP improves.

Also, the time complexity of the algorithm is $O\big(n^2+\beta(n)\big)$ where $\beta(n)$ is the running time of the algorithm used for finding TSP tours. For the test data corresponding to 71009 locations in China, our Java implementation of Algorithm \ref{alg:const_logarithmic_partition_n} constructs an approximate solution in 20 seconds using a laptop with a 2.50 GHz CPU and 3 GB RAM.

\subsection{A Case Study in Patrolling for Crime}

Here we study the problem of planning a route for a robot (or vehicle) to patrol a city. Assume we want to plan a route for that patrols a set of intersections in a city, and the goal is to minimize the maximum expected number of crimes that occur at any of the intersections between two consecutive visits. If the weight of each intersection is given by the average crime rate in that intersection, then since expectation is a linear function, this problem will exactly translate to the min-max latency walk problem.

\begin{table}
\footnotesize
\begin{center}
\begin{tabular}{|c|c|l|}
  \hline
  Index & \# of crimes in Aug 2012 & Approximate address \\ \thickhline
  A & 133 &Sutter St \& Stockton St, San Francisco, CA 94108, USA\\ \hline
  B & 90 & Pacific Ave \& Grant Ave, San Francisco, CA 94133, USA\\
  C & 89 & Post St \& Taylor St, San Francisco, CA 94142, USA\\
  D & 87 & Jackson St \& Front St, San Francisco, CA 94111, USA\\
  E & 83 & Vallejo St \& Powell St, San Francisco, CA 94133, USA\\
  F & 83 & Bay St \& Mason St, San Francisco, CA 94133, USA\\
  G & 74 & Bush St \& Montgomery St, San Francisco, CA 94104, USA\\ \hline
  H & 64 & Bush St \& Hyde St, San Francisco, CA 94109, USA\\
  I & 48 & Chestnut St \& Montgomery St, San Francisco, CA 94111, USA\\
  J & 43 &Washington St \& Leavenworth St, San Francisco, CA 94109, USA\\
  K & 38 &Jones St \& Beach St, San Francisco, CA 94133, USA\\
  L & 34 &Hyde St \& Francisco St, San Francisco, CA 94109, USA\\
  \hline
\end{tabular}
\caption{Twelve locations in the central district of San Francisco police department with the number of recorded criminal acts in the vicinity of each location.}
\label{tab:locationDetails}
\end{center}
\end{table}

\begin{figure}
\centering
\includegraphics[scale=.4]{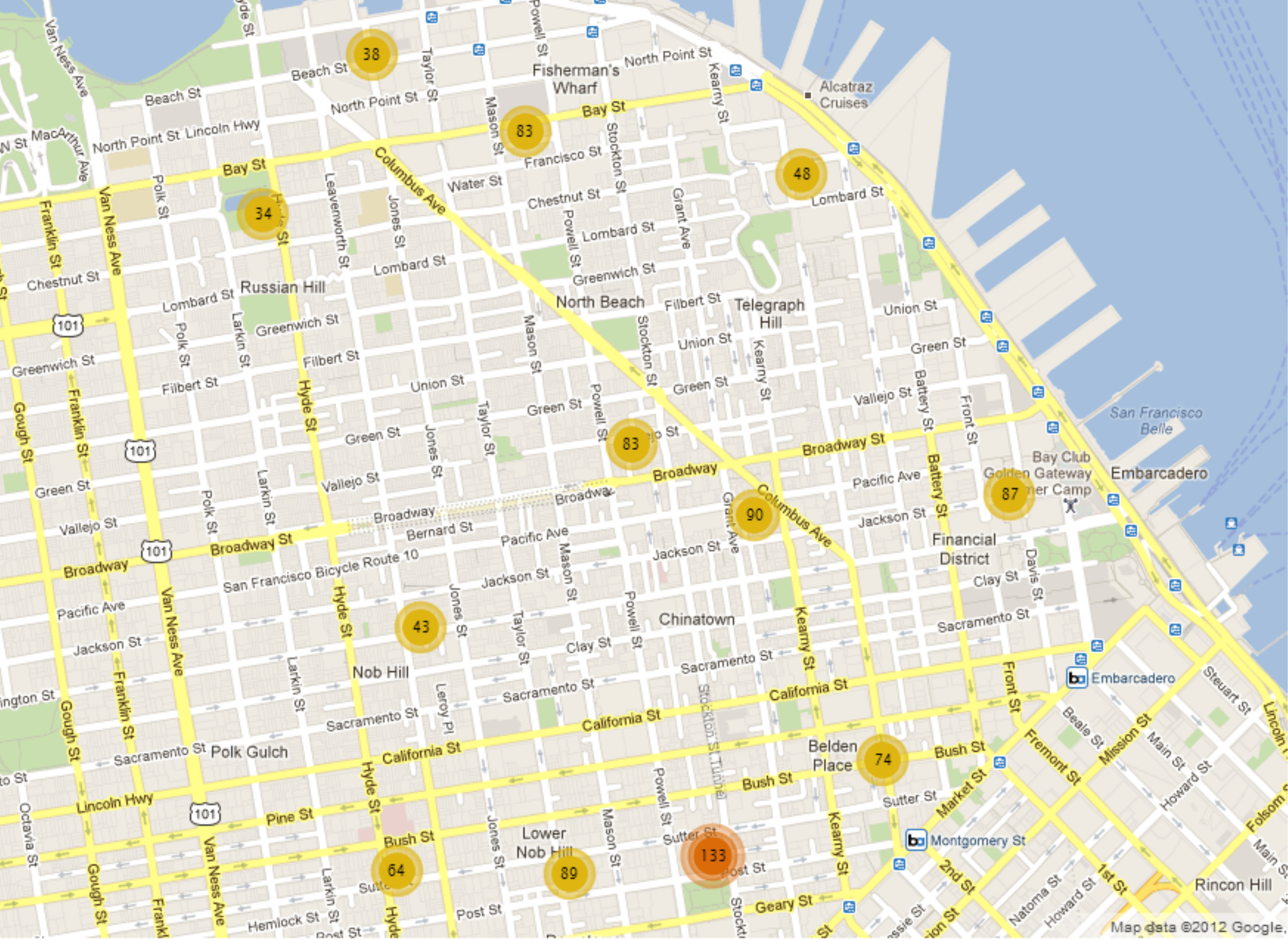}\\
\caption{The twelve intersections of Table \ref{tab:locationDetails} in North East of San Francisco.  Map obtained from San Francisco Police Department CrimeMAPS program.}
\label{fig:SanRaw}
\end{figure}

We look at twelve intersections in the central district of the San Francisco police department (see Table \ref{tab:locationDetails} and Figure \ref{fig:SanRaw}). The number of crimes that occurred in August of 2012\footnote{Crime data for San Francisco was obtained from the San Francisco Police Department CrimeMAPS website:  \url{http://www.sf-police.org/}.} in the vicinity of these intersections is used as the weight function $\phi$. The weight of each intersection approximates the average rate of crime happening in vicinity of that intersection.
Table \ref{tab:pairwiseDistance} shows the pairwise travel times (for a road vehicle) of these intersections in seconds. Note that these travel times are not, in general, symmetric as some streets are one-way. For $l(uv)$, we use the average between the travel time from $u$ to $v$ and the travel time from $v$ to $u$.

\begin{table}
\footnotesize
\begin{center}
\begin{tabular}{|l "l |l |l |l |l |l |l |l |l |l |l |l |}
\hline
     & A    & B    & C    & D     & E    & F      & G    & H    & I    & J   & K   & L   \\ \thickhline
A	&0	&	141&	121&	293&	209&	329&	134&	250&	406&	199&	358&	344\\ \hline	
B	&141	&	0  &	271&	200&	105&	226&	201&	299&	297&	169&	254&	274\\ \hline	
C	&127	&	291&	0  &	368&	311&	433&	153&	198&	491&	219&	461&	362\\ \hline	
D	&304	&	207&	417&	0  &	253&	309&	226&	387&	249&	358&	337&	384\\ \hline	
E	&210	&	147&	340&	244&	0  &	180&	244&	268&	342&	164&	209&	230\\ \hline	
F	&330	&	216&	460&	244&	175&	0  &	313&	370&	126&	311&	61 &	163\\ \hline	
G	&90	&	246&	162&	244&	310&	369&	0  &	271&	400&	292&	397&	427\\ \hline	
H	&147	&	293&	105&	370&	338&	412&	154&	0  &	492&	153&	406&	287\\ \hline	
I	&426	&	324&	539&	203&	343&	226&	348&	509&	0  &	448&	299&	389\\ \hline	
J	&201	&	170&	231&	322&	164&	290&	279&	159&	415&	0  &	283&	164\\ \hline	
K	&354	&	240&	474&	337&	199&	105&	337&	332&	226&	273&	0  &	125\\ \hline	
L	&334	&	220&	354&	316&	179&	121&	317&	212&	246&	153&	114&	0\\	\hline
\end{tabular}
\caption{The pairwise by-car travel times in seconds corresponding to the locations in Table \ref{tab:locationDetails}, queried from Google maps.}
\label{tab:pairwiseDistance}
\end{center}
\end{table}

Let us step through Algorithm \ref{alg:const_logarithmic_partition_n}. Since $\log \rho_G < \lfloor \log n \rfloor + 1$, Algorithms \ref{alg:const_logarithmic_partition_n} simply returns the output of Algorithm \ref{alg:const_logarithmic_partition_epsilon}. We have $V_0 = \{\mathrm{A}\}$, $V_1 = \{\mathrm{B\;,C,\;D,\;E,\;F,\;G}\}$, and $V_2 = \{\mathrm{H,\;I,\;J,\;K,\;L}\}$. Then we find the TSP-Paths, which are given by $W_0=(\mathrm{A})$, $W_1=(\mathrm{C,\;G,\;D,\;F,\;E,\;B})$, and $W_2=(\mathrm{I,\; K,\; L,\; J,\; H})$. We then partition these into $W_{0,0}=(\mathrm{A})$, $W_{1,0}=(\mathrm{C,\;G,\;D})$, $W_{1,1}=(\mathrm{F,\;E,\;B})$, $W_{2,0}=(\mathrm{I})$, $W_{2,1}=(\mathrm{K,\; L})$, $W_{2,2}=(\mathrm{J})$, and $W_{2,3}=(\mathrm{H})$. Then, by concatenation of these walks and finding the TSP-Path of the results we get $S_1 = (\mathrm{A,\; C,\; G,\; D,\; I})$, $S_2 = (\mathrm{A,\; B,\; L,\; K,\; F,\; E})$, $S_3 = (\mathrm{A,\; C,\; J,\; D,\; G})$, and $S_4 = (\mathrm{A,\; B,\; E,\; F,\; H})$.  The walks are shown in Figure~\ref{fig:crime_tours}.  Note that the TSP-Path is shown instead of the TSP tour, and thus the final edge that returns to intersection A is omitted.  This is done so that the output is a binary walk, i.e., each walk $S_i$ contains at most one instance of each intersection.

\begin{figure}
\begin{subfigure}[b]{0.49\linewidth}
\centering
\includegraphics[width=\linewidth]{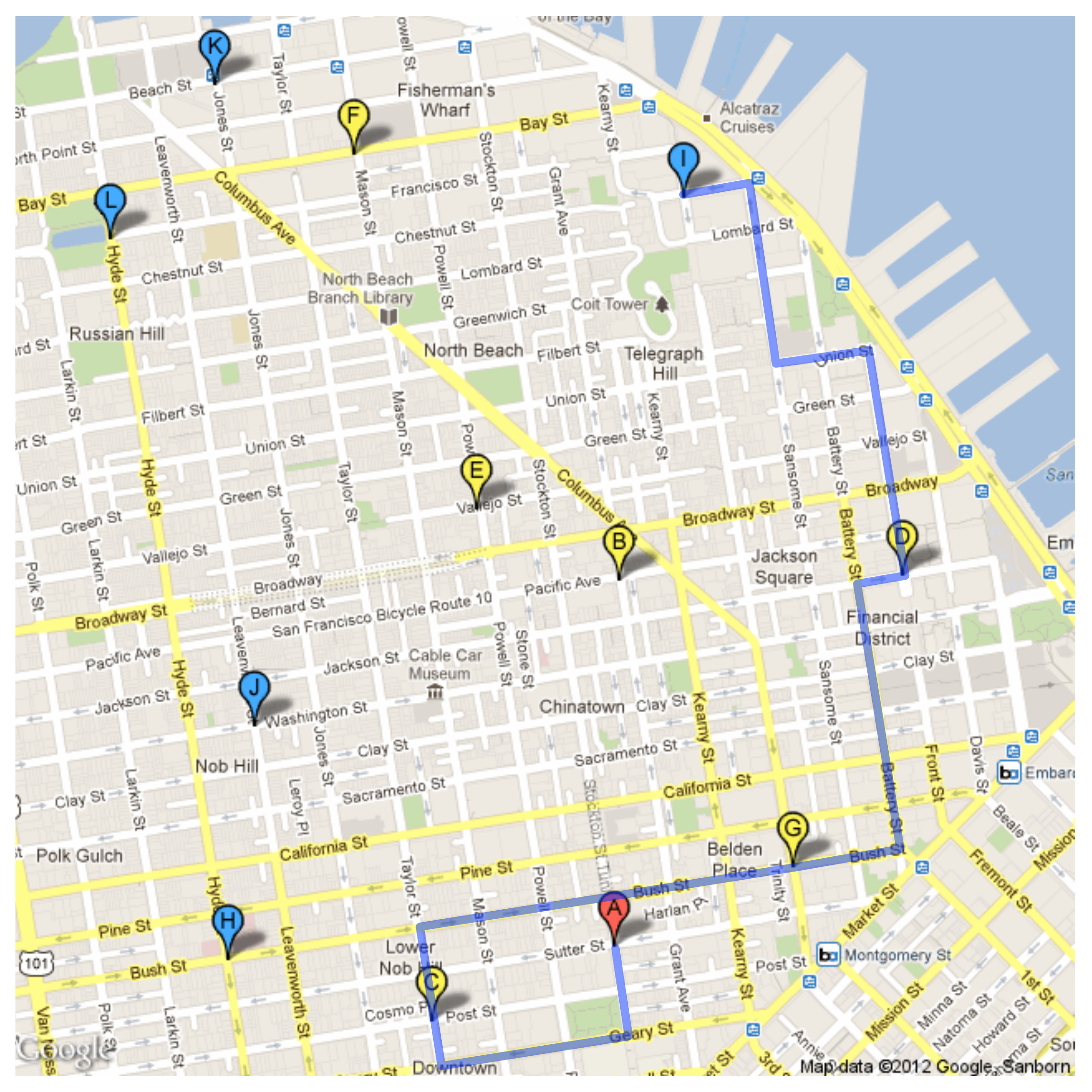}
\caption{The walk $S_1 = (\mathrm{A,\; C,\; G,\; D,\; I})$.}
\label{fig:TheS1}
\end{subfigure}
\hfill
\begin{subfigure}[b]{0.49\linewidth}
\centering
\includegraphics[width=\linewidth]{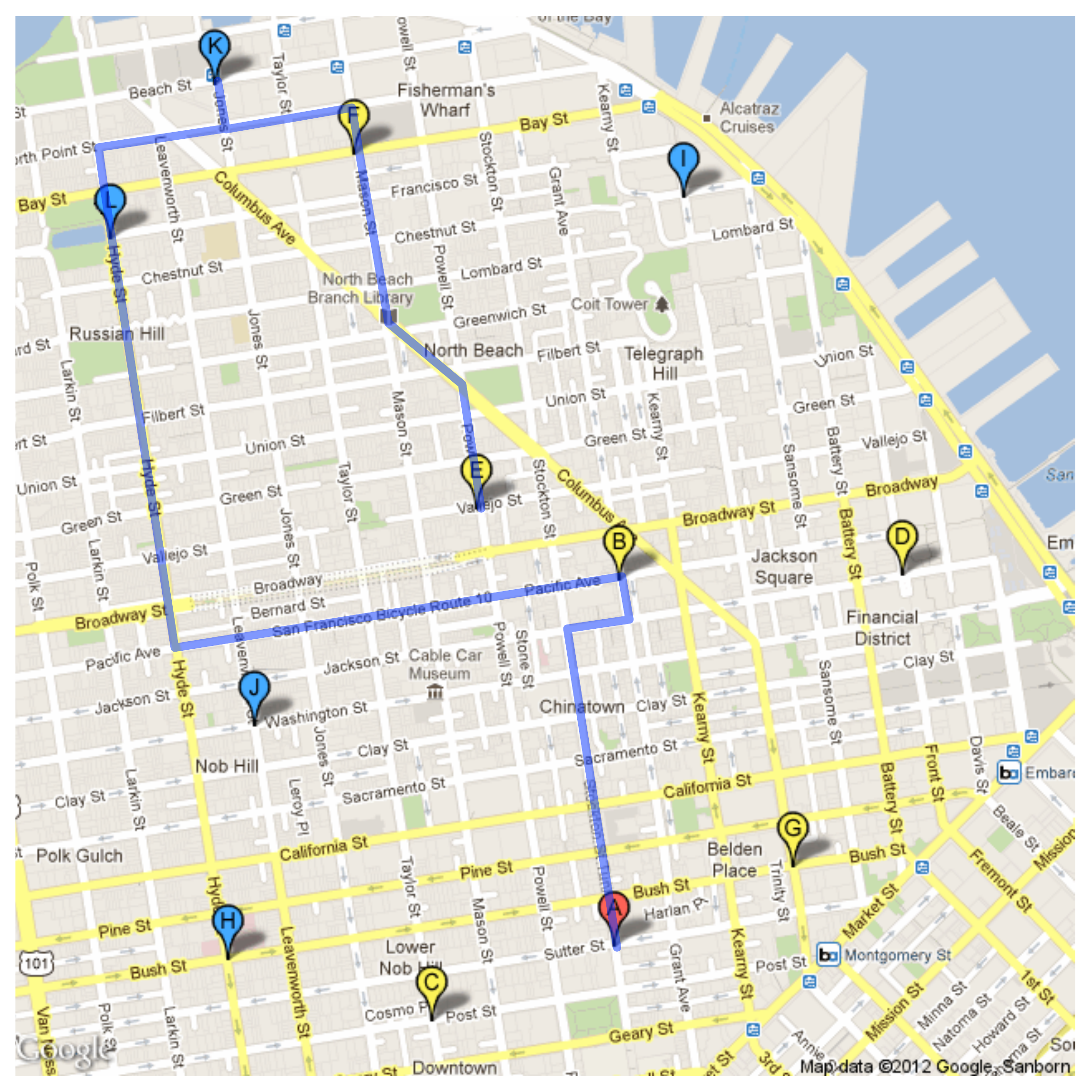}
\caption{The walk $S_2 = (\mathrm{A,\; B,\; L,\; K,\; F,\; E})$.}
\label{fig:TheS2}
\end{subfigure}
\vskip1em
\begin{subfigure}[b]{0.49\linewidth}
\centering
\includegraphics[width=\linewidth]{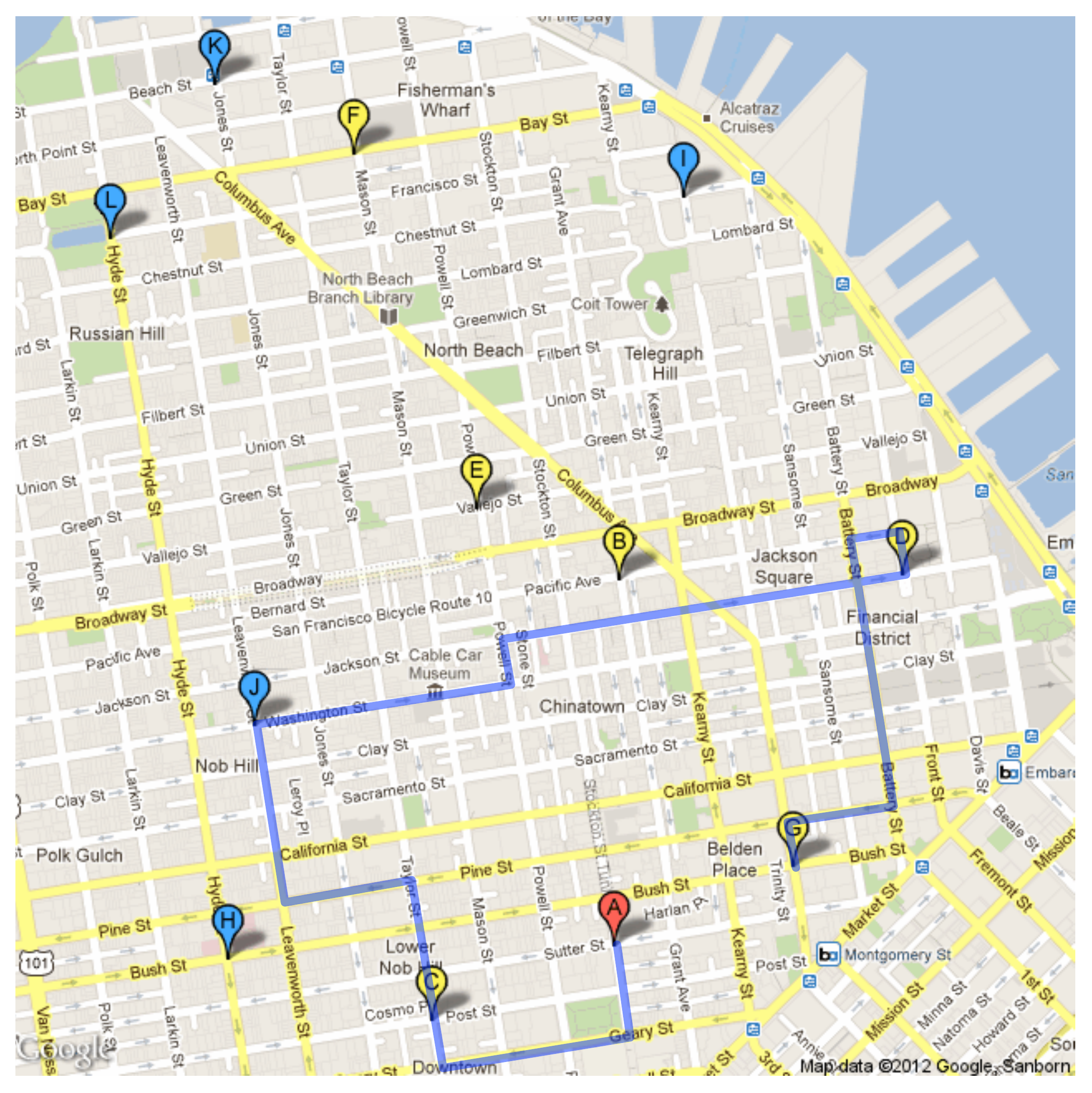}
\caption{The walk $S_3 = (\mathrm{A,\; C,\; J,\; D,\; G})$.}
\label{fig:TheS3}
\end{subfigure}
\hfill
\begin{subfigure}[b]{0.49\linewidth}
\centering
\includegraphics[width=\linewidth]{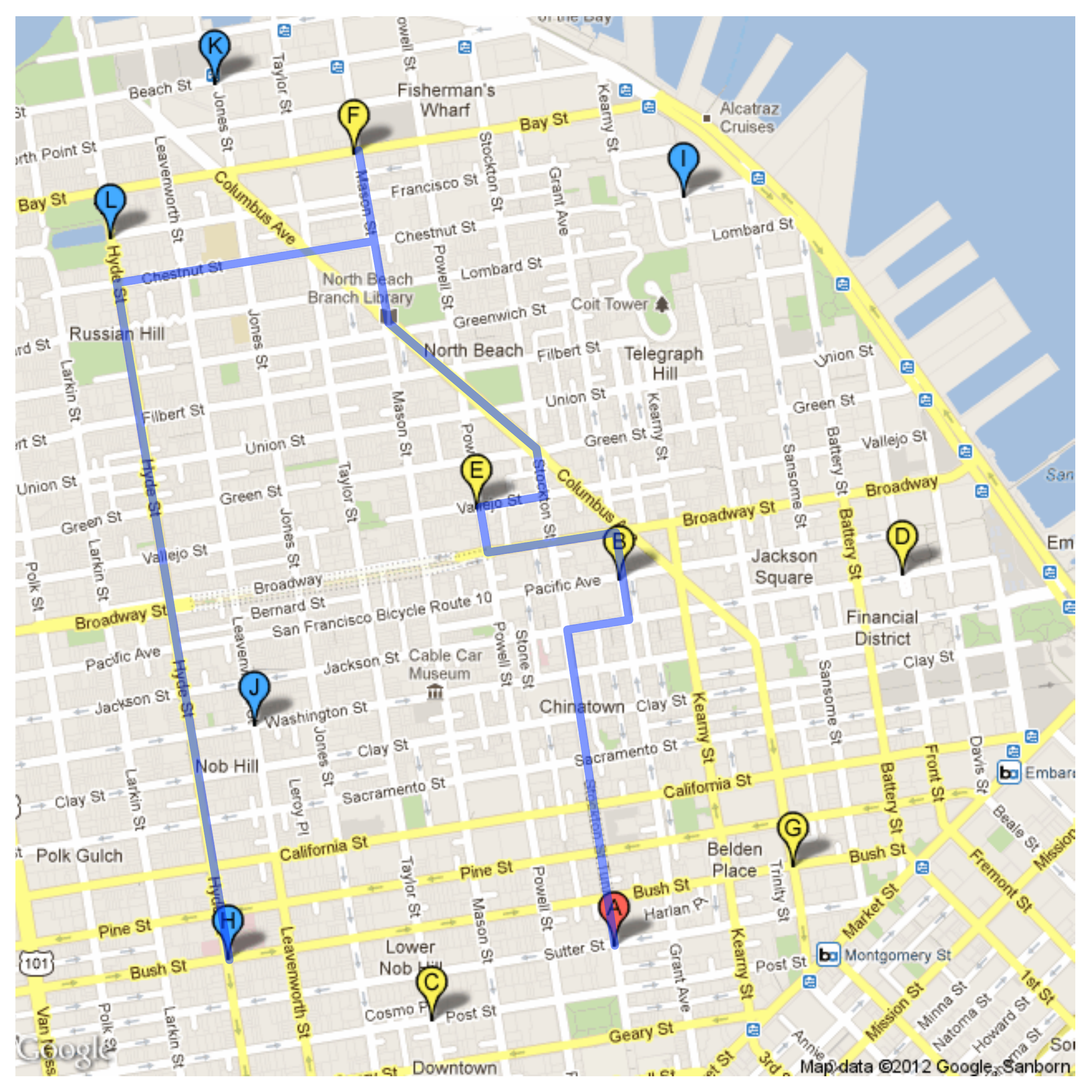}
\caption{The walk $S_4 = (\mathrm{A,\; B,\; E,\; F,\; H})$.}
\label{fig:TheS4}
\end{subfigure}
\caption{The four walks for patrolling San Francisco, corresponding to Table \ref{tab:locationDetails}. Location A (red) is $V_0$.  Locations B, C, D, E, F, G (yellow) correspond to $V_1$.  Locations H, I, J, K, L correspond to $V_2$.  The final monitoring walk is given by $[S_1,S_2,S_3,S_4,S_1,\ldots]$.}
\label{fig:crime_tours}
\end{figure}

The walk $\Delta(S)$ with $S=[S_1,S_2,S_3,S_4]$ is our final patrolling route. The latencies and costs induced by the intersections are shown in Table \ref{tab:S}. The expected number of crimes that occur between two consecutive visits to any intersection is bounded by $C(\Delta(S))=0.102$.


\begin{table} \footnotesize \begin{center} \begin{tabular}{|c|c|c|} \hline Index & Latency (seconds) & Cost (crimes per visit) \\ \thickhline
      A 	& 1159 	& 0.059\\
      B 	& 2193 	& 0.075 \\
      C 	& 2136 	    & 0.072 \\
      D 	& 2309 	& 0.076 \\
      E 	& 2694 	& 0.085 \\
      F 	& 2339 	& 0.074 \\
      G 	& 2779 	& 0.078 \\
      \textbf{H} 	& \textbf{4206} 	& \textbf{0.102} \\
      I 	& 4206 	& 0.077 \\
      J 	& 4206 	& 0.069 \\
      K 	& 4206 	& 0.061 \\
      L 	& 4206 	& 0.054 \\
      \hline \end{tabular} \caption{The latencies and costs induced by $S$ for the crime case-study.  The maximum cost is attained at intersection H.}  \label{tab:S} \end{center} \end{table}

\section{Conclusions and Future Work}
\label{sec:conclusions}

In this paper, we considered the problem of planning a path for a robot to monitor a known set of features of interest in an environment. We represented the environment as a graph with vertex weights and edge lengths and we addressed the problem of finding a walk that minimizes the maximum weighted latency of any vertex. We showed several results on the existence and non-existence of optimal and constant factor approximation solutions. We then provided two approximation algorithms; an $O(\log n)$-approximation algorithm and an $O(\log \rho_G)$-approximation algorithm, where $\rho_G$ is the ratio between the maximum and minimum vertex weights.  We also showed via simulations that our algorithms scale to very large problems consisting of thousands of vertices.

There are several directions for the future work .  We continue to seek a constant factor approximation algorithm, independent of $\rho_G$.  We also believe that by adding some heuristic optimizations to the walks produced by our algorithms, we could significantly improve their performance in practice.  Finally, we are currently looking at ways to extend our results to multiple robots.  One approach we are pursuing is to equitably partition the graph such that the single robot solution can be utilized for each partition.

\vskip1em
\noindent \textbf{Acknowledgements.}  This work was supported in part by the Natural Sciences and Engineering Research Council of Canada (NSERC).

\bibliographystyle{plain}
\bibliography{IJRR-bib}

\end{document}